\title{A mathematical analysis of IPT-DMFT.}
\author{Éric Cancès\footnotemark[1], Alfred Kirsch\footnotemark[1], Solal Perrin-Roussel\footnotemark[1]}
\newacronym{dmft}{DMFT}{Dynamical Mean-Field Theory}
\newacronym{ipt}{IPT}{Iterated Perturbation Theory}
\newacronym{aim}{AIM}{Anderson Impurity Model}
\newacronym{gft}{GFT}{Generalized Fourier Transform}
\newacronym{car}{CAR}{Canonical Anti-commutation Relations}
\newacronym{gicar}{GICAR}{Gauge Invariant Canonical Anti-commutation Relations}
\newacronym{rhs}{r.h.s.}{right hand side}
\newacronym{lhs}{l.h.s.}{left hand side}
\newacronym{tddft}{TDDFT}{Time-Dependent Density Functionnal Theory}
\newacronym{kms}{KMS}{Kubo-Martin-Schwinger}
\newacronym{acp}{ACP}{Analytic Continuation Problem}
\newacronym{ba}{BU}{Bath Update}
\begin{document}

\maketitle
\abstract{We provide a mathematical analysis of the \acrfull{dmft}, a celebrated representative of a class of approximations in quantum mechanics known as \emph{embedding methods}. We start by a pedagogical and self-contained mathematical formulation of the \acrshort{dmft} equations for the finite Hubbard model. After recalling the definition and properties of one-body time-ordered Green's functions and self-energies, and the mathematical structure of the Hubbard and Anderson impurity models, we describe a specific \emph{impurity solver}, namely the \acrfull{ipt} solver, which can be conveniently formulated using Matsubara's Green's functions. Within this framework, we prove under certain assumptions that the \acrshort{dmft} equations admit a solution for any set of physical parameters. Moreover, we establish some properties of the solution(s).} 

\footnotetext[1]{CERMICS, École des Ponts, 6-8 avenue Blaise Pascal, 77455 Marne-la-Vallée, France, and Inria Paris, MATHERIALS. }

\tableofcontents

\section{Introduction}

The Dynamical Mean-Field Theory (DMFT) is an approximation method for the fermionic quantum many-body problem. It was introduced by Georges and Kotliar in 1992 and first applied to the case of the Hubbard model \cite{rozenberg_mott-hubbard_1994,georges_dynamical_1996}. It has since been extended to other settings \cite{sun_extended_2002} and coupled with Density Functional Theory (DFT) within the so-called DFT+DMFT method \cite{held_mott-hubbard_2001}. The latter is one of the reference methods for first-principle computations of electronic structures of strongly correlated materials. DMFT belongs to the class of quantum embedding methods, and has since been joined by many other methods such as Density-Matrix Embedding Theory (DMET) \cite{knizia_density_2012}, Rotationally-Invariant Slave Boson (RISB) method \cite{lechermann_rotationally_2007}, Energy-weighted DMET \cite{fertitta_energy-weighted_2019}, Quantum Embedding Theory \cite{ma_quantum_2021}, and related methods. 

At the time of writing, the mathematical analysis of quantum embedding methods is very limited. The rigorous results we are aware of are those on DMFT contained in Lindsey's PhD thesis \cite{lindsey_quantum_2019}, the ones on DMET recently obtained by the first two authors and their collaborators \cite{cances_mathematical_2023}, and a few others with a numerically oriented approach such as \cite{faulstich_pure_2022,wu_enhancing_2020}.

The purpose of this article is to establish a rigorous mathematical formulation of the \acrshort{dmft} equations and to prove, in particular, the existence of a solution to the \acrshort{dmft} equations for the Hubbard model within the \acrfull{ipt} approximation \cite{georges_dynamical_1996}. This relies on the extension of some results from \cite[Part VII]{lindsey_quantum_2019} and is based upon a reformulation of the \acrshort{dmft} equations as a fixed-point problem in the space of probability measures on the real line.

\medskip

In the language of linear algebra, solving the fermionic quantum many-body problem consists in computing some spectral properties of a Hermitian matrix $\Hamiltonian \in \ComplexNumbers^{M \times M}$, the Hamiltonian of the system, such as its ground-state energy (i.e. its lowest eigenvalue), or the partition function 
$$
\PartitionFunction:=\exp \left( -\beta (\Hamiltonian - \ChemicalPotential  \TotalNumberOperator) \right), 
$$
where $\beta=\frac{1}{k_{\rm B}T}$ is the inverse temperature, $\ChemicalPotential \in \RealNumbers$ is the chemical potential, and $\TotalNumberOperator$ is the number operator, as well as derivatives of $\PartitionFunction$ with respect to $\beta$, $\ChemicalPotential$ or parameters of $\Hamiltonian$. 

The difficulty is that the size $M$ of the matrix $\Hamiltonian$ can be huge (up to $10^{30}$ or more in some applications). Fortunately, the Hamiltonian $\Hamiltonian$ has specific properties, allowing one to use taylored methods. Indeed, in most applications, $\Hamiltonian$ is the matrix of a Hamiltonian operator acting on a fermionic Fock space $\FockSpace$, containing only one- and two-body terms, and satisfying symmetry properties (particle number conservation, spin, and possibly space, isospin, or time-reversal symmetries). Identifying the one-body state space $\OneParticleSpace[]$ with $\ComplexNumbers^{2L}$, it holds $M=2^{2L}$ and 
$$
\FockSpace = \bigoplus_{N=0}^L \OneParticleSpace[N], 
$$
where the $N$-particle sector $\OneParticleSpace[N]=\bigwedge^N \OneParticleSpace[]$ of the Fock space is of dimension $\begin{pmatrix} 2L \\ N \end{pmatrix}$.
In this decomposition, $\TotalNumberOperator$ is a block diagonal operator, the block corresponding to $\OneParticleSpace[N]$ being equal to $N$ times the identity matrix. If $\Hamiltonian$ is particle-number conserving, then it is also block-diagonal in this decomposition (equivalently $\Hamiltonian$ and $\TotalNumberOperator$ commute). If it only contains one- and two-body terms, then $\Hamiltonian$ has a compact representation in the second quantization formalism involving a Hermitian matrix $H^0 \in \ComplexNumbers^{2L \times 2L}$ and a fourth-order tensor $V \in \ComplexNumbers^{2L \times 2L \times 2L \times 2L}$. Spin, space, isospin, or time-reversal symmetries allow one to further reduce the complexity of the representation of $\Hamiltonian$ and refine its block diagonal structure. Still, solving the quantum many-body problem remains extremely challenging.

\medskip

Quantum embedding methods can be seen as domain decomposition methods in the Fock space, using a partition of the $L$ ``sites'' (also called orbitals) of the model into $P$ non-overlapping clusters $(\HubbardVertices_p)_{1 \le p \le P}$ of cardinalities $L_p:=|\HubbardVertices_p|$. Without loss of generality, we can assume that the first cluster consists of the first $L_1$ orbitals, the second cluster of the next $L_2$ orbitals, and so on. To each cluster is associated an impurity model, a quantum many-body problem set on the $L_p$ sites of the cluster, as well as on virtual sites called bath orbitals. In DMET, the number of bath orbitals is exactly equal to $L_p$ so that the impurity quantum many-body problem is of size $M_p:=2^{4L_p}$. In practice, DMET impurity problems are solved either by brute-force diagonalization (full CI) if $L_p$ is not too large, or by low-rank tensor methods (e.g. Density Matrix Renormalization Group, DMRG \cite{white_density_1992}). In \acrshort{dmft}, the impurity problem can be much larger, but the impurity Hamiltonian has the relatively simple form of an \acrfull{aim}: within each of the $P$ impurity models, bath orbitals do not contribute to two-body interactions, and only interact with cluster orbitals via one-body interactions. It can be shown that the \acrshort{aim} associated with the $p$-th cluster can be completely described by the restrition of $\Hamiltonian$ to the $p$-th cluster's orbitals and a hybridization function $\Hybridization_p: \ComplexNumbers \setminus \RealNumbers \to \ComplexNumbers^{L_p \times L_p}$. \acrshort{aim}s are usually solved in practice either by a quantum Monte Carlo method~\cite{rubtsov_continuous-time_2005}, or by an approximate solver such as the IPT (Iterative Perturbation Theory) solver \cite{georges_dynamical_1996} considered in this article. The IPT solver was introduced in the seminal paper \cite{zhang_mott_1993}, and is still used to study very challenging systems such as moiré heterobilayers \cite{tan_doping_2023}.

\medskip

Quantum embedding methods are self-consistent  theories: the $P$ impurity problems are coupled through a mean-field defined on the whole quantum system with $L$ orbitals. 

In DMET, the role of the mean-field is played by an approximation $D \in \ComplexNumbers^{2L \times 2L}$ of the ground-state one-body density matrix (1-RDM) of the system. The matrix $D$ allows one to define an impurity problem for each cluster, and the self-consistent condition is that for each cluster $p$, the diagonal block of~$D$ corresponding to this cluster agrees with the restriction of the exact ground-state 1-RDM of the $p$-th impurity problem to the cluster $p$. It is expected that at self-consistence the diagonal blocks of $D$ corresponding to the cluster decomposition are good approximations of the diagonal blocks of the exact ground-state 1-RDM of the whole system \cite{cances_mathematical_2023}.

In DMFT, the role of the mean-field is played by an approximation $G$ of the exact one-body Green's function \cite{martin_interacting_2016} associated with some equilibrium state, usually the ground-state of $\Hamiltonian$ in the $N$-particle sector, or a canonical or grand-canonical thermodynamical equilibrium state. One-body Green's functions can be represented by analytic functions $G : \ComplexNumbers \setminus \RealNumbers \to \ComplexNumbers^{2L \times 2L}$ and are thus computationally tractable objets for values of $L$ up to a few thousands. The function $G$ is a particular holomorphic extension of the Fourier transform of the time-ordered Green's function. Loosely speaking, the latter is an equilibrium time-correlation function obtained by creating (resp. annihilating) a particle at time $t_0=0$ (resp. at $t<0$), letting the system evolve from $t_0$ to $t$ (resp. from $t$ to $t_0$), and annihilating the extra particle (resp. restoring the missing particle) at time $t$ (resp. at time $t_0$). The exact one-body Green's function contains a lot of valuable information about the quantum system under investigation. In particular, the 1-RDM of the equilibrium state, hence the expectation value of any one-body observable, can be easily extracted from it. The same holds true for the average energy,  thanks to Galitski-Migdal's formula~\cite{gontier_contributions_2015,martin_interacting_2016}. Also, the poles of the analytic continuation of $G$ to the real-axis correspond to the one-particle excitation energies measured in photoemission and inverse-photoemission spectroscopies~\cite{zhou_unraveling_2020}. Remarkably, the exact Green's function $\GreensFunction^0$ of a non-interacting system, i.e. of a many-body Hamiltonian which is the second quantization of a one-body hamiltonian $\SecondQuantization{\NIHamiltonian}$ is simply the resolvent of $\NIHamiltonian$: $\GreensFunction^0(z)= (z-\NIHamiltonian)^{-1}$, whatever the reference equilibrium state. The self-energy of an interacting system with Hamiltonian $\Hamiltonian = \SecondQuantization{\NIHamiltonian} + \InteractingHamiltonian$, where $\InteractingHamiltonian$ accounts for the two-body interactions, is the function $\SelfEnergy : \ComplexNumbers \setminus \RealNumbers \to \ComplexNumbers^{2L \times 2L}$ defined by
\begin{equation}
\SelfEnergy(z) = \GreensFunction^0(z)^{-1} - \GreensFunction(z)^{-1} \quad \mbox{or equivalently} \quad 
\GreensFunction(z) = (z-H^0-\SelfEnergy(z))^{-1}.    
\end{equation}
\acrshort{dmft} consists in 
\begin{itemize}
    \item approximating the exact self-energy of the whole system by a block diagonal self-energy $\SelfEnergy = \mbox{block-diag}(\SelfEnergy_1, \dots, \SelfEnergy_\PartitionSize)$, with $\SelfEnergy_p : \ComplexNumbers \setminus \RealNumbers \to \ComplexNumbers^{2L_p \times 2L_p}$ compatible with the cluster decomposition. This condition is sometimes called the DMFT approximation;
    \item imposing the self-consistent conditions that for each cluster 
    \begin{itemize}
        \item the self-energy $\SelfEnergy_p$ agrees with the restriction to the cluster of the exact self-energy of the associated AIM. 
        \item the restriction to the cluster of the approximate Green's function of the whole system agrees with the restriction to the cluster of the exact Green's function of the AIM. This condition is often referred to as the self-consistent condition.
    \end{itemize}
\end{itemize}
In practice, the DMFT equations are solved by fixed-point iterations. The input of iteration $n$ is a collection of $P$ hybridization functions $(\Hybridization_p^{(n)})_{1 \le p \le P}$. At step 1, the $P$ AIM problems with hybridization functions $\Hybridization_p^{(n)}$ are solved in parallel, in order to compute $P$ cluster self-energies $\SelfEnergy_p^{(n)}$, yielding an approximation $\SelfEnergy^{(n)} = \mbox{block-diag}(\SelfEnergy_1^{(n)}, \cdots \SelfEnergy_P^{(n)})$  of the self-energy of the whole system. At step 2, the above two self-consistent conditions are combined yielding a new set $(\Hybridization_p^{(n+1)})_{1 \le p \le P}$ of hybridization functions. The DMFT iteration scheme can therefore be sketched as
$$
\Hybridization^{(n)}:=(\Hybridization_p^{(n)})_{1 \le p \le P} \mathop{\longrightarrow}^{f^{\rm AIM}} \SelfEnergy^{(n)}:=(\SelfEnergy_p^{(n)})_{1 \le p \le P} \mathop{\longrightarrow}^{f^{\rm SC}}
\Hybridization^{(n+1)}:=(\Hybridization_p^{(n+1)})_{1 \le p \le P},
$$
or written in the more compact form
\begin{equation} \label{eq:DMFT_loop_1}
\Hybridization^{(n+1)} = f^{\rm DMFT}\left(\Hybridization^{(n)}\right).
\end{equation}
Of course, this basic self-consistent loop can be stabilized and accelerated using e.g. damping and Anderson-Pulay extrapolation methods. In this article, we forego an in-depth analysis of the iterative scheme and its convergence, opting instead to direct our attention towards a fundamental inquiry: the existence of solutions within the \acrshort{dmft} equations. Specifically, we address the question of the existence of a fixed-point of the DMFT map $f^{\rm DMFT}$, a critical aspect which, to our knowledge remains unestablished in the current literature.

\medskip

This article is organized as follows. 
In Section~\ref{sec:MathematicalFramework}, we provide a mathematical introduction to \acrshort{dmft} for the Hubbard model aimed at being accessible to readers unfamiliar with this theory. The Hubbard model provides insights into the behavior of electrons in strongly correlated systems. Its integration within the \acrshort{dmft} framework offers a powerful tool for understanding the interplay between electron-electron interactions in finite structures (truncation of a lattice for instance), shedding light on phenomena such as metal-insulator transitions and high-temperature superconductivity. We recall the basics of second quantization formalism, the formulation of the Hubbard and Anderson impurity models, the definitions of one-body Green's functions, self-energies, and hybridization functions, and the precise formulation of the \acrshort{dmft} equations.  
In Section~\ref{sec:MainResults}, we state our main results. They are based on the observation that the key mathematical objects involved in \acrshort{dmft} (exact and approximate one-body Green's functions and self-energies, hybridization functions) are all negatives of Pick functions. Recall that scalar Pick functions are analytic functions from the open upper-half plane to the closed upper-half  plane~\cite{nevanlinna_uber_1919}, \cite{pick_uber_1915}. An interesting property of scalar Pick functions, which we use extensively in our analysis, is that any Pick function admits an integral reprentation involving a positive Borel measure on $\RealNumbers$, called its Nevanlinna-Riesz measure~\cite{nevanlinna_uber_1919}. Analogous properties hold true for matrix-valued Pick functions~\cite{gesztesy_matrixvalued_2000}. For the Hubbard model with a finite number of sites, the exact Green's function and self-energy can be extended to meromorphic functions on $\ComplexNumbers$ with finite numbers of poles, and are therefore represented by discrete Nevanlinna-Riesz measures with finite support. We then focus on the paramagnetic single-site translation invariant \acrshort{ipt}-\acrshort{dmft} approximation of the Hubbard model, for which $\PartitionSize=L$ and $L_1=\cdots = L_\PartitionSize=1$. We show that these equations have no solutions in the class of (negatives of) Pick functions with discrete Nevanlinna-Riesz measures of finite support, but do have solutions in the set of (negatives of) Pick functions. More precisely, equation \eqref{eq:DMFT_loop_1} has a translation invariant fixed point $(\Hybridization,\dots,\Hybridization)$, $\Hybridization$ being the negative of a Pick function whose Nevanlinna-Riesz measure has the form $c \nu$, where $c \in \RealNumbers_+$ is a fixed constant only depending on the matrix $\NIHamiltonian$, and $\nu$ a Borel probability measure on $\RealNumbers$. To obtain the latter result, we show that the \acrshort{ipt}-\acrshort{dmft} iteration map $f^{\rm DMFT}$ in \eqref{eq:DMFT_loop_1} can be rewritten as a map $\DMFTmap : \ProbabilityMeasures \to \ProbabilityMeasures$, which is continuous for the weak topology.  
We conclude by checking that the Schauder-Singbal's fixed-point theorem \cite{shapiro_fixed-point_2016} can be applied to this setting.



\section{DMFT of the Hubbard model}
\label{sec:MathematicalFramework}

We provide in this section a mathematical description of the models and quantities of interest involved in \acrfull{dmft} for the Hubbard model. We first recall the definitions of one-body Green's functions and the self-energy. We then introduce the Hubbard model and the Anderson Impurity Model (AIM). Next, we derive the \acrshort{dmft} equations and finally present the \acrfull{ipt} solver, which is the approximate impurity solver considered in this work.

\subsection{One-body Green's functions and the self-energy} \label{subsec:OneBodyTimeOrderedGreensFunctions}
One-body Green's functions are key objects in DMFT. To avoid technicalities, we will define Green's functions in a finite-dimensional setting and assume that the one-body state space is a finite-dimensional Hilbert space $(\OneParticleSpace,\HermitianProduct{\cdot}{\cdot})$, ${\rm dim}(\OneParticleSpace)=2L \in \Integers^*$. We refer to e.g. \cite{cances_mathematical_2016} for a mathematical introduction to Green's functions in an infinite-dimensional setting. The associated Fock space
\begin{equation*}
\FockSpace=\bigoplus_{n=0}^{2L}\bigwedge^n \OneParticleSpace,
\end{equation*}
where the $n$-particle sector $\displaystyle\bigwedge^n \OneParticleSpace$ is the anti-symmetrized tensor product of $n$ copies of $\OneParticleSpace$, is then of dimension $2^{2L}$. 
Given a one-particle state $\OneParticleState \in \OneParticleSpace$, we denote by $\AnnihilationOperator[\OneParticleState]$ (resp. $ \CreationOperator[\OneParticleState]$)  the usual annihilation (resp. creation) operator defined on $\FockSpace$ (see e.g. \cite{bratteli_operator_1997}), which satisfy the \acrfull{car}:
\begin{equation*}
\forall \OneParticleState, \OneParticleState' \in \OneParticleSpace, \quad \AntiCommutator{\AnnihilationOperator[\OneParticleState]}{\AnnihilationOperator[\OneParticleState']}=\AntiCommutator{\CreationOperator[\OneParticleState]}{\CreationOperator[\OneParticleState]}=0,\quad \AntiCommutator{\AnnihilationOperator[\OneParticleState]}{\CreationOperator[\OneParticleState']}=\HermitianProduct{\OneParticleState}{\OneParticleState'}
\end{equation*}
where $\AntiCommutator{\Operator}{\Operator'}=\Operator \Operator' + \Operator'\Operator$ is the anti-commutator of the two operators $\Operator,\Operator' \in \LinearOperator(\FockSpace)$.

\paragraph{Equilibrium states.} A \emph{state} $\EquState$ is a linear form on the set of operators $\LinearOperator(\FockSpace)$, which is positive ($\EquState(\Operator^\dagger \Operator) \geq 0$) and normalized (i.e. $\sup \{ \Modulus{\EquState(\Operator)}, \Norm{\Operator}=1\}=1$).
In the finite-dimensional case, any state $\EquState$ can be represented by a unique self-adjoint operator $\DensityOperator \in \SelfAdjointOperator(\FockSpace)$  such that for all $\Operator \in \LinearOperator(\FockSpace)$, $\EquState(\Operator)=\Trace(\DensityOperator \Operator)$. The operator $\DensityOperator$ is positive and satisfies ${\rm Tr}(\DensityOperator)=1$. It is called the \emph{density operator} associated to the state $\EquState$. For an isolated quantum system described by a time-independent  Hamiltonian  $\Hamiltonian \in \SelfAdjointOperator(\FockSpace)$, an equilibrium state corresponds to a stationary solution to the quantum Liouville equation
$$
i \frac{d\DensityOperator}{dt}(t) = [\Hamiltonian,\DensityOperator(t)],
$$
where $[\Operator,\Operator']=\Operator\Operator' - \Operator'\Operator$ is the commutator of $\Operator,\Operator'\in\LinearOperator(\FockSpace)$.
It follows that a state is an equilibrium state if and only if its density $\DensityOperator$ commutes with the Hamiltonian $\Hamiltonian$, namely $\Commutator{\Hamiltonian}{\DensityOperator}=0$. Important examples of equilibrium states are thermal and osmotic equilibrium states known as Gibbs states, as well as ground and excited states of $\Hamiltonian$ with a prescribed number of particles (for particle-number conserving Hamiltonians). 

\medskip

\paragraph{One-body Green's functions.} The \emph{one-body time-ordered Green's functions} are then defined as follows:

\begin{definition}[One-body time-ordered Green's function] Given a Hamiltonian $\Hamiltonian \in \SelfAdjointOperator(\FockSpace)$ and an associated equilibrium state $\EquState$, one defines the $\LinearOperator(\OneParticleSpace)$-valued function $\TimeGreensFunction: \RealNumbers \to \LinearOperator(\OneParticleSpace)$, known as \emph{one-body time-ordered Green's function}, so that  
$i\TimeGreensFunction(t)$ is the operator represented by the sesquilinear form 
\begin{equation}\label{eq:DefinitionTimeOrderedGreensFunctions}
\HermitianProduct{\OneParticleState}{(i\TimeGreensFunction(t))\OneParticleState'}= \CharacteristicFunction{\RealNumbers_{+}} (t) \, \EquState(\HeisenbergPicture{\AnnihilationOperator[\OneParticleState]}(t)\CreationOperator[\OneParticleState']) - \CharacteristicFunction{\RealNumbers_{-}^*} (t) \, \EquState(\CreationOperator[\OneParticleState'] \HeisenbergPicture{\AnnihilationOperator[\OneParticleState]}(t))
\end{equation}
where for all $\Operator \in \LinearOperator(\FockSpace), \HeisenbergPicture{\Operator}: \RealNumbers \ni t \mapsto e^{it\Hamiltonian} \Operator e^{-it\Hamiltonian}$ is the \emph{Heisenberg picture} of $\Operator$ and $\CharacteristicFunction{A}$ is the characteristic function of the set $A$.
\end{definition}

Let us comment on the terminology. First, the term ``body'' encompasses ``particle'' and ``hole'': the first term of the \acrlong{rhs} of \eqref{eq:DefinitionTimeOrderedGreensFunctions} can be interpreted as describing the propagation from $t_0=0$ to $t > 0$ of a particle added to the system at $t_0=0$, while the second term can be interpreted as the propagation from $t < 0$ to $t_0=0$ of a hole created at $t<0$. Second, it is ``time-ordered'': the \acrshort{rhs} of \eqref{eq:DefinitionTimeOrderedGreensFunctions} can be rewritten as
\begin{equation*}
\EquState\left(\mathcal{T}\left(\HeisenbergPicture{\AnnihilationOperator[\OneParticleState]},\HeisenbergPicture{\CreationOperator[\OneParticleState']}\right)(t,0)\right)
\end{equation*}
where for all operators-valued functions $\RealNumbers \ni t \mapsto \Operator(t) \in \LinearOperator(\FockSpace),\RealNumbers \ni t \mapsto \Operator'(t) \in \LinearOperator(\FockSpace)$, the fermionic \emph{time-ordered product} $\mathcal{T}(\Operator,\Operator')$ is the operator-valued function $\RealNumbers^2 \to \LinearOperator(\FockSpace)$ defined as
\begin{equation*}
\mathcal{T}(\Operator,\Operator')(t,t')=\left\lbrace 
\begin{matrix}
\Operator(t) \Operator'(t') \text{ if t $\geq$ t'} \\
-\Operator'(t')\Operator(t) \text{ otherwise,}
\end{matrix}\right.
\end{equation*}
where the minus sign is specific to the fermionic case. Up to a sign, it is the product of the operators applied in the order of increasing time.

The $i$ prefactor in the \acrlong{lhs} of \eqref{eq:DefinitionTimeOrderedGreensFunctions} is a convention that facilitates the expression of the results to come, especially Proposition \ref{prop:NonInteractingGreensFunction}. 

Finally, note that $\TimeGreensFunction$ is real-analytic on $(-\infty,0)\cup(0,+\infty)$ with a $-i\Identity$ jump at $t=0$ due to the \acrshort{car}.

As $\FockSpace$ is finite-dimensional, the Green's function can be expanded in a joint orthonormal eigenbasis $\FockSpaceBasis$ of $\DensityOperator$ and $\Hamiltonian$, leading to the Källén-Lehmann (KL) representation \cite{kallen_definition_1952,lehmann_uber_1954}  
\begin{equation} \label{eq:KL}
\forall \OneParticleState,\OneParticleState' \in \OneParticleSpace, \quad \HermitianProduct{\OneParticleState}{i\TimeGreensFunction(t)\OneParticleState'}=\sum_{\FockSpaceState,\FockSpaceState' \in \FockSpaceBasis} e^{it(E_{\FockSpaceState}-E_{\FockSpaceState'})}\HermitianProduct{\FockSpaceState}{\AnnihilationOperator[\OneParticleState]\FockSpaceState'} \HermitianProduct{\FockSpaceState'}{\CreationOperator[\OneParticleState']\FockSpaceState} \left( \rho_\FockSpaceState \CharacteristicFunction{\RealNumbers_+}(t) -  \rho_{\FockSpaceState'} \CharacteristicFunction{\RealNumbers_-^*}(t)\right),
\end{equation}
where $\forall \FockSpaceState \in \FockSpaceBasis, \Hamiltonian \FockSpaceState = E_{\FockSpaceState} \FockSpaceState$ (with $E_{\FockSpaceState} \in \RealNumbers$) and $\DensityOperator \FockSpaceState=\rho_{\FockSpaceState} \FockSpaceState$  (with $\rho_{\FockSpaceState} \in \RealNumbers_+$, $\sum_{\FockSpaceState \in \FockSpaceBasis} \rho_{\FockSpaceState}=1$).

Other types of Green's functions are encountered in the physics literature, notably retarded/advanced Green's functions. These objects encode the same information on the spectral properties of~$\Hamiltonian$ as the time-ordered Green's function, but this information is stored in a different way. A suitable way to highlight this information is to consider specific holomorphic extensions to the complex plane of the time-Fourier transform of these Green's functions \cite{martin_interacting_2016,cances_mathematical_2016}. In the case of the time-ordered one-body Green's function, the suitable holomorphic extension is provided by the generalized Fourier transform introduced by 
Titchmarsh~\cite{titchmarsh_introduction_1948}.

\begin{definition}[\acrfull{gft}]
\label{def:GreensFunctionGFT}
The \acrfull{gft} of the one-body time-ordered Green's function $\TimeGreensFunction$ is the analytic function on the upper-half plane $\GreensFunction:\UpperHalfPlane \to \LinearOperator(\OneParticleSpace)$, also called a (one-body) Green's function, defined by
\begin{align}
\forall z \in \UpperHalfPlane, \GreensFunction(z)&= \GreensFunction_+(z)+\GreensFunction_{-}(\overline{z})^\dagger
\label{eq:DefinitionGFTGreensFunction}\end{align}
with
\begin{align*}
\forall z \in \UpperHalfPlane:=\{ z \in \ComplexNumbers \ | \ \Im(z)>0 \}, \quad \GreensFunction_+(z)&=\int_{\RealNumbers_+}e^{izt} \TimeGreensFunction(t) dt, \\ 
\forall z \in \LowerHalfPlane:=\{ z \in \ComplexNumbers \ | \ \Im(z) < 0 \}, \quad\GreensFunction_-(z)&=\int_{\RealNumbers_-}e^{izt} \TimeGreensFunction(t) dt.
\end{align*}
\end{definition}
Note that the Green's function $\GreensFunction$ can be extended to $\ComplexNumbers\setminus\RealNumbers$ by reflection, namely by setting 
\begin{equation}
\forall z \in \LowerHalfPlane, \quad \GreensFunction(z)=\GreensFunction(\bar{z})^\dagger. \label{eq:extension_C-}
\end{equation}

By construction, $\GreensFunction(z)$ is analytic on $\UpperHalfPlane$. In addition, it follows from the KL representation~\eqref{eq:KL} that 
\begin{equation}
    \label{eq:KL_C+}
    \forall z \in \UpperHalfPlane, \quad \forall \OneParticleState,\OneParticleState' \in \OneParticleSpace, \quad \HermitianProduct{\OneParticleState}{G(z)\OneParticleState'}=\sum_{\FockSpaceState,\FockSpaceState' \in \FockSpaceBasis} \frac{\rho_{\FockSpaceState}+\rho_{\FockSpaceState'}}{z + (E_{\FockSpaceState}-E_{\FockSpaceState'})} \HermitianProduct{\FockSpaceState}{\AnnihilationOperator[\OneParticleState]\FockSpaceState'} \HermitianProduct{\FockSpaceState'}{\CreationOperator[\OneParticleState']\FockSpaceState}.
\end{equation}
An important observation is that 
\begin{equation} \label{eq:G_Pick}
   \forall z \in \UpperHalfPlane, \quad \forall \OneParticleState \in \OneParticleSpace \setminus \{0\}, \quad \Im\left(\HermitianProduct{\OneParticleState}{G(z)\OneParticleState}\right)=- \Im(z) \sum_{\FockSpaceState,\FockSpaceState' \in \FockSpaceBasis} \frac{\rho_{\FockSpaceState'}+\rho_\FockSpaceState}{|z+(E_{\FockSpaceState}-E_{\FockSpaceState'})|^2} |\HermitianProduct{\FockSpaceState}{\AnnihilationOperator[\OneParticleState]\FockSpaceState'}|^2 <0,
\end{equation}
which shows in particular that $\GreensFunction(z)$ is invertible for all $z \in \UpperHalfPlane$.

Up to now, we have not specified the Hamiltonian $\Hamiltonian$; in the sequel, we will assume that it is of the form
\begin{equation}
\label{eq:HamiltonianNIAndIDecomposition}
\Hamiltonian=\SecondQuantization{\NIHamiltonian} + \InteractingHamiltonian, \quad \NIHamiltonian \in \SelfAdjointOperator(\OneParticleSpace), \quad \InteractingHamiltonian \in \SelfAdjointOperator(\FockSpace)
\end{equation}
where $\SecondQuantization{\NIHamiltonian}$ is the second quantization of the one-particle Hamiltonian $\NIHamiltonian \in \SelfAdjointOperator(\OneParticleSpace)$ (see e.g. \cite{bratteli_operator_1997}) and  $\InteractingHamiltonian \in \SelfAdjointOperator(\FockSpace)$ some interaction Hamiltonian. We say that $\Hamiltonian$ is \emph{non-interacting} if $\InteractingHamiltonian=0$.

Depending on the formalism of interest, one can consider the grand canonical Hamiltonian $\Hamiltonian'=\Hamiltonian-\ChemicalPotential\TotalNumberOperator$ without loss of generality.

The following is an essential property of the Green's function, to which it owes its name: the Green's function of a non-interacting system in an equilibrium state is the resolvent of the one-particle Hamiltonian.

\begin{proposition}[Non-interacting Green's function]
\label{prop:NonInteractingGreensFunction}
Let $\NIHamiltonian\in\SelfAdjointOperator(\OneParticleSpace)$  be a one-particle Hamiltonian and $\EquState$ an equilibrium state  of the non-interacting Hamiltonian $\SecondQuantization{\NIHamiltonian}$. The Green's function $\GreensFunction^0$ of $\SecondQuantization{\NIHamiltonian}$ associated to $\EquState$ is the resolvent of $\NIHamiltonian$: 
\begin{equation} \label{eq:G0}
\forall z \in \UpperHalfPlane, \quad \GreensFunction^0(z)=(z-\NIHamiltonian)^{-1}.
\end{equation}
In particular, $\GreensFunction^0$ is independent of $\EquState$.
\end{proposition}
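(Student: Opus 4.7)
The plan is to compute the time-ordered Green's function $\TimeGreensFunction$ directly via the Heisenberg picture, express the result through the one-body reduced density matrix (1-RDM) $\gamma \in \SelfAdjointOperator(\OneParticleSpace)$ of $\EquState$, and then apply Definition~\ref{def:GreensFunctionGFT}. The crucial observation is that the dependence on $\gamma$ cancels when the two pieces $\GreensFunction^0_+$ and $\GreensFunction^0_-$ are recombined, leaving the resolvent $(z-\NIHamiltonian)^{-1}$, which is manifestly independent of the equilibrium state.

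First, I would establish the identities
\begin{equation*}
\HeisenbergPicture{\AnnihilationOperator[\OneParticleState]}(t) = \AnnihilationOperator[e^{it\NIHamiltonian}\OneParticleState], \qquad \HeisenbergPicture{\CreationOperator[\OneParticleState]}(t) = \CreationOperator[e^{it\NIHamiltonian}\OneParticleState],
\end{equation*}
which follow from the commutator $[\SecondQuantization{\NIHamiltonian},\CreationOperator[\OneParticleState]] = \CreationOperator[\NIHamiltonian\OneParticleState]$ (and its adjoint) since both sides satisfy the same linear initial value problem at fixed $\OneParticleState$. Introducing the 1-RDM via $\HermitianProduct{\OneParticleState}{\gamma\OneParticleState'} := \EquState(\CreationOperator[\OneParticleState']\AnnihilationOperator[\OneParticleState])$ and using the \acrshort{car}, equation \eqref{eq:DefinitionTimeOrderedGreensFunctions} becomes the operator-valued identity
\begin{equation*}
i\TimeGreensFunction(t) = \CharacteristicFunction{\RealNumbers_+}(t)\,e^{-it\NIHamiltonian}(\Identity - \gamma) \;-\; \CharacteristicFunction{\RealNumbers_-^*}(t)\,e^{-it\NIHamiltonian}\gamma.
\end{equation*}

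Next, I would exploit the equilibrium condition $[\DensityOperator,\SecondQuantization{\NIHamiltonian}]=0$: combined with the calculation $[\SecondQuantization{\NIHamiltonian},\CreationOperator[\OneParticleState']\AnnihilationOperator[\OneParticleState]] = \CreationOperator[\NIHamiltonian\OneParticleState']\AnnihilationOperator[\OneParticleState] - \CreationOperator[\OneParticleState']\AnnihilationOperator[\NIHamiltonian\OneParticleState]$ and the cyclicity of the trace, this yields $[\gamma,\NIHamiltonian]=0$, so $\gamma$ commutes with every function of $\NIHamiltonian$. For $z \in \UpperHalfPlane$, spectral calculus applied to the self-adjoint operator $\NIHamiltonian$ gives $\int_0^{+\infty} e^{it(z-\NIHamiltonian)}\,dt = i(z-\NIHamiltonian)^{-1}$, hence $\GreensFunction^0_+(z) = (\Identity-\gamma)(z-\NIHamiltonian)^{-1}$. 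Similarly, for $z \in \LowerHalfPlane$, $\GreensFunction^0_-(z) = \gamma(z-\NIHamiltonian)^{-1}$, and using self-adjointness of $\NIHamiltonian$ and $\gamma$, $\GreensFunction^0_-(\bar z)^\dagger = \gamma(z-\NIHamiltonian)^{-1}$ for $z \in \UpperHalfPlane$. Assembling via \eqref{eq:DefinitionGFTGreensFunction},
\begin{equation*}
\GreensFunction^0(z) = (\Identity-\gamma)(z-\NIHamiltonian)^{-1} + \gamma(z-\NIHamiltonian)^{-1} = (z-\NIHamiltonian)^{-1}.
\end{equation*}

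The only non-routine step is the Heisenberg picture identity for $\AnnihilationOperator[\OneParticleState]$ and $\CreationOperator[\OneParticleState]$, which crucially requires the non-interacting structure $\Hamiltonian = \SecondQuantization{\NIHamiltonian}$; the remainder is bookkeeping with the \acrshort{car} and the spectral theorem. Once $\gamma$ is identified as the sole carrier of state-dependent information in $\TimeGreensFunction$, the particle-hole cancellation $\Identity - \gamma + \gamma = \Identity$ upon analytic continuation is exactly the content of the statement.
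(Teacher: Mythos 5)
Your proposal is correct but proceeds by a genuinely different route than the paper's. The paper never computes $\GreensFunction^0(z)$ in closed form: it instead shows $(z-\NIHamiltonian)\GreensFunction^0(z)=\Identity$ directly, integrating $e^{izt}$ against $\TimeGreensFunction$ by parts on each half-line and observing that the integrand $\left(i\frac{d}{dt}-\NIHamiltonian\right)\TimeGreensFunction(t)$ vanishes for $t\neq 0$, so that only the boundary terms $i\TimeGreensFunction(0^+)+i\TimeGreensFunction(0^-)^\dagger=\Identity$ survive (the \acrshort{car} at $t=0$). Your approach instead evaluates $\TimeGreensFunction$ explicitly as $i\TimeGreensFunction(t)=\CharacteristicFunction{\RealNumbers_+}(t)\,e^{-it\NIHamiltonian}(\Identity-\gamma)-\CharacteristicFunction{\RealNumbers_-^*}(t)\,e^{-it\NIHamiltonian}\gamma$ with $\gamma$ the 1-RDM, derives $[\gamma,\NIHamiltonian]=0$ from the equilibrium condition, and computes the two Laplace transforms directly, so the state-dependence cancels as $(\Identity-\gamma)+\gamma=\Identity$. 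Both hinge on the same two inputs (the ideal-gas Heisenberg identity $\HeisenbergPicture{\AnnihilationOperator[\OneParticleState]}(t)=\AnnihilationOperator[e^{it\NIHamiltonian}\OneParticleState]$ and the equilibrium/cyclicity argument), but the paper's version localizes the state-independence in the jump condition at $t=0$ and makes visible the distributional equation $\left(i\frac{d}{dt}-\NIHamiltonian\right)\TimeGreensFunction=\delta\,\Identity$ used elsewhere in the text, whereas yours is more constructive: it names $\gamma$ as the only carrier of state information and exhibits the particle--hole cancellation explicitly, which gives a slightly more vivid explanation of why $\GreensFunction^0$ cannot depend on $\EquState$. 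A small caveat for rigor: the commutation $[\gamma,\NIHamiltonian]=0$ is indispensable in your final step, since otherwise $(z-\NIHamiltonian)^{-1}(\Identity-\gamma)+\gamma(z-\NIHamiltonian)^{-1}$ does not collapse; you flag this, and your derivation of it via $\EquState([\SecondQuantization{\NIHamiltonian},\CreationOperator[\OneParticleState']\AnnihilationOperator[\OneParticleState]])=0$ is sound.
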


The proof of this classical result is recalled in Section \ref{sec:ProofNIGreensFunction}. It is a consequence of the fact that the one-body Green's function $\TimeGreensFunction^0$ of the non-interacting Hamiltonian $\SecondQuantization{\NIHamiltonian}$ is solution to the equation
\begin{equation}
\left(i\frac{d}{dt} -\NIHamiltonian \right) \TimeGreensFunction = \delta \Identity \quad \text{ in } \mathcal{D}'(\RealNumbers;\LinearOperator(\OneParticleSpace))
\label{eq:PDEGreensFunction}
\end{equation}
so that the time-ordered Green's function $\TimeGreensFunction^0$ is indeed a Green's function of the linear differential operator $i\frac{d}{dt} - \NIHamiltonian$. The various avatars of the Green's function (retarded/advanced) also satisfy this equation in $\mathcal{D}'(\RealNumbers^*;\LinearOperator(\OneParticleSpace))$, but with different boundary conditions at infinity and jumps at $t=0$.  The properties \eqref{eq:G0}-\eqref{eq:PDEGreensFunction} hold \emph{only} for non-interacting Hamiltonians.

\medskip

\paragraph{Self-energy.}
For interacting systems, the general relation between the Hamiltonian and the Green's function is more involved: the discrepancy with the non-interacting case is characterized by the \emph{self-energy}.

\begin{definition}[Self-energy]
\label{def:SelfEnergy}
The \emph{self-energy} $\SelfEnergy:\UpperHalfPlane \to \LinearOperator(\OneParticleSpace)$ of an Hamiltonian $\Hamiltonian$ of the form \eqref{eq:HamiltonianNIAndIDecomposition} associated to an equilibrium state $\EquState$ of $\Hamiltonian$ is defined by
\begin{equation} \label{eq:self-energy}
\forall z \in \UpperHalfPlane, \quad \SelfEnergy(z)=\GreensFunction^0(z)^{-1} - \GreensFunction(z)^{-1},
\end{equation}
where the \emph{non-interacting Green's function} $\GreensFunction^0$ is the Green's function of any equilibrium state of $\SecondQuantization{\NIHamiltonian}$.
\end{definition}

Recall that $G^0(z)^{-1} = z-H^0$ \eqref{eq:G0} and that $G(z)$ is invertible in view of \eqref{eq:G_Pick}.
Let us emphasize that the definition of the self-energy only depends on $\Hamiltonian$, $H^0$, and the considered equilibrium state $\EquState$ of $\Hamiltonian$, since $\GreensFunction^0$ is independent of the equilibrium state associated to $\SecondQuantization{\NIHamiltonian}$.

Using again \eqref{eq:G0} and the definition of the self-energy, the Green's function $G$ can be rewritten as
\begin{equation*}
\forall z \in \UpperHalfPlane, \quad\GreensFunction(z)=\left(z-(\NIHamiltonian+\SelfEnergy(z))\right)^{-1}
\end{equation*}
so that, for a given complex frequency $z$, $\NIHamiltonian + \SelfEnergy(z)$ can be considered as an \emph{effective one-particle Hamiltonian} (compare with~\eqref{eq:G0}): the self-energy is thus the extra term to be added to $\NIHamiltonian$ to obtain a representation of the interacting system of particles in terms of a system of non-interacting ones. The frequency dependence of $\Sigma$ comes from the fact that particles do interact in the original system.

\subsection{Hubbard model} \label{subsec:HubbardAndAndersonModel}

Originally introduced in quantum chemistry \cite{pariser_semiempirical_1953,pople_electron_1953}, the Hubbard model \cite{kanamori_electron_1963,hubbard_electron_1963,gutzwiller_effect_1963} is an idealized model that minimally describes \emph{interacting electrons} in a molecule or a crystalline material. From the mathematical physics' perspective, it is a prototypical example of short-range fermionic lattice system, and its mathematical study has been pioneered soon after its introduction in \cite{lieb_absence_1968,lieb_one-dimensional_2003}, triggering an extensive study of its ground-states properties \cite{lieb_hubbard_2004}. More recently, the discovery of cuprate-based high-temperature superconductors \cite{bednorz_possible_1986,yu_high-temperature_2019}, exhibiting a \emph{layered structure}, shed  new light on the \emph{square lattice} Hubbard model, for which much remains to be discovered \cite{lieb_hubbard_2004}. Since then, \emph{approximation methods} have been derived for this model such as \emph{generalized Hartree-Fock} \cite{bach_generalized_1994}, and \acrlong{dmft} \cite{georges_dynamical_1996}.

In this article, we restrict ourselves to \emph{finite-dimensional} Hubbard models, i.e. to Hubbard models defined on \emph{finite graphs}. The reason for this is threefold. First, we stick to the case of finite-dimensional state spaces, for which all the objects involved in the mathematical formulation of \acrshort{dmft} are well-defined. Second, \emph{graph theory} provides a suitable formalism to describe on the same footing different physical settings, ranging from molecular systems to truncated lattices, and with arbitrary hopping parameters (nearest neighbours, next nearest neighbors, ...). Third, this is precisely the language in which \acrshort{dmft} can be formulated concisely, as described in Section \ref{subsec:DMFTEquations}.

\begin{figure}[ht]
\centering
\begin{subfigure}{0.4\textwidth}
\centering
\includegraphics{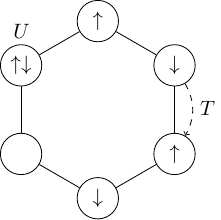}
\subcaption{The Pariser-Parr-Pople model of benzene C$_6$H$_6$: the cyclic graph $C_6$.}
\end{subfigure}
\begin{subfigure}{0.4\textwidth}
\centering
\includegraphics{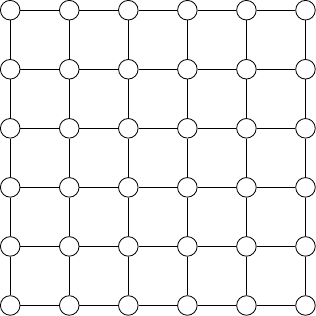}
\subcaption{A truncated lattice $\RelativeIntegers^2/6\RelativeIntegers$ with nearest neighbors : the square grid graph $P_6^{\square 2}$.}
\end{subfigure}
\caption{The Hubbard model for different graphs}
\label{fig:HubbardGraphs}
\end{figure}

Consider now a finite undirected graph $\HubbardGraph=(\HubbardVertices,\HubbardEdges)$ that describes the ``sites'' $\HubbardVertices$ hosting electrons, as depicted in Figure~\ref{fig:HubbardGraphs}. The state space of a site is the vector space spanned by the orthonormal vectors $\VacuumState$ (empty site), $\UpState$ (site occupied by one spin-up electron), $\DownState$ (site occupied by one spin-down electron), and $\UpDownState$ (doubly-occupied site). Note that $\UpDownState$ is a shorthand notation for the two-electron singlet state $2^{-1/2}(\UpState \otimes \DownState- \DownState \otimes \UpState)$. In chemistry language, to each site is associated a single orbital (and thus two spin-orbitals); in physics, this setting is referred to as the \emph{one-band} Hubbard model. Since the sites are distinguishable, the state space of the full system is the tensor product of the state space of each site. It is therefore of dimension $4^{\Cardinal{\HubbardVertices}}=2^{2L}$ where $L=\Cardinal{\HubbardVertices}$ is the number of sites.

Let us now specify the Hamiltonian. In the Hubbard model, electrons can jump from one site to any other neighboring site. This models the tunnel effect, whose intensity is described by the \emph{hopping matrix} $\HoppingMatrix[]$, as for tight-binding Hamiltonians. Electrons repel each other due to (screened) Coulomb interaction. The simplicity of the Hubbard model lies in the \emph{range} of this interaction, which is the shortest possible: it is only effective for two electrons occupying the same site, and if the site $i$ is doubly occupied, the energy of the system is increased by an \emph{on-site repulsion} $\OnSiteRepulsion[i]$.

More precisely, a finite-dimensional Hubbard model can be defined as follow.

\begin{definition}[Hubbard model of a finite graph] Given a finite graph $\HubbardGraph=(\HubbardVertices,\HubbardEdges)$, a hopping matrix $\HoppingMatrix[]:\HubbardEdges \to \RealNumbers$, and an on-site repulsion $\OnSiteRepulsion[]: \HubbardVertices \to \RealNumbers$, the Hubbard Fock Space $\HubbardFockSpace$ is defined as 
\begin{equation*}
\HubbardFockSpace=\bigotimes_{i \in \HubbardVertices} \HubbardOneSiteSpace, \quad  \HubbardOneSiteSpace=\Span(\VacuumState,\UpState,\DownState,\UpDownState)
\end{equation*}
and the Hubbard Hamiltonian $\HubbardHamiltonian \in \SelfAdjointOperator(\HubbardFockSpace)$ as 
\begin{equation*}
\HubbardHamiltonian=\sum_{\{i,j\} \in \HubbardEdges, \sigma \in \{\uparrow,\downarrow\}}\HoppingMatrix \left(\CreationOperator\AnnihilationOperator + \CreationOperator[j,\sigma]\AnnihilationOperator[i,\sigma]\right) + \sum_{i \in \HubbardVertices} \OnSiteRepulsion \NumberOperator[i,\uparrow] \NumberOperator[i,\downarrow]
\end{equation*}
where the usual annihilation and creation operators $\AnnihilationOperator[i,\sigma]$ and $\AnnihilationOperator[i,\sigma]^\dagger$ of an electron on site $j$ with spin $\sigma$ satisfy the \acrshort{car}

\begin{equation}\label{eq:CARHubbard}
\forall i,j \in \HubbardVertices, \quad \sigma \in \{\uparrow,\downarrow\}, \quad \AntiCommutator{\AnnihilationOperator[i,\sigma]}{\AnnihilationOperator[j,\sigma']}=\AntiCommutator{\CreationOperator[i,\sigma]}{\CreationOperator[j,\sigma']}=0,\quad \AntiCommutator{\AnnihilationOperator[i,\sigma]}{\CreationOperator[j,\sigma']}=\delta_{i,j}\delta_{\sigma,\sigma'},
\end{equation}
and the site number operators $\widehat n_{i,\sigma}$ are defined by $\widehat n_{i,\sigma}=\CreationOperator\AnnihilationOperator[i,\sigma]$.
\end{definition}

In this paper, we do not consider external magnetic field, so that we can assume without loss of generality that the hopping matrix $\HoppingMatrix[]$ is real-valued \cite{lieb_hubbard_2004}. For standard Coulomb interaction, the on-site repulsion $\OnSiteRepulsion[]$ is positive, but we do not need to make this assumption here.

\begin{remark} The Hubbard Hamiltonian $\HubbardHamiltonian$ is particle-number and spin conserving, i.e. it commutes with the number operator $\HubbardNumberOperator=\sum_{i \in \HubbardVertices} \left( \NumberOperator[i,\uparrow] + \NumberOperator[i,\downarrow]\right)$ and the spin operators. In particular, it commutes with the $z$-component $\HubbardSpinOperator^z=\frac 12\sum_{i \in \HubbardVertices} \left( \NumberOperator[i,\uparrow]-\NumberOperator[i,\downarrow]\right)$ of the spin operator. This property will be used later to make the IPT-DMFT model spin-independent.
\end{remark}

\subsection{\acrfull{aim}}\label{subsec:AndersonModel}

\paragraph{Impurity models}
Impurity models are models in which an ``impurity'' is coupled to a ``bath'' in such a way that particles in the bath do not interact, and the coupling Hamiltonian between the impurity and the bath only involves one-body terms. Otherwise stated, the one-particle state space and the Fock space of the total system can be decomposed as
\begin{equation} \label{eq:split_space_HAI}
    \OneParticleSpace[\rm IM] = {\mathcal H}_{\rm imp} \oplus {\mathcal H}_{\rm bath} \quad \mbox{and} \quad 
\FockSpace_{\rm IM} = \FockSpace_{\rm imp} \otimes \FockSpace_{\rm bath},
\end{equation}
and its Hamiltonian as
\begin{equation} \label{eq:split_Hamiltonian_HAI}
\Hamiltonian_{\rm IM} = \Hamiltonian_{\rm imp} \otimes 1_{\rm bath} + 1_{\rm imp} \otimes \Hamiltonian_{\rm bath} + 
\Hamiltonian_{\rm coupling},
\end{equation}
with 
\begin{equation} \label{eq:shape_Hamiltonian_HAI}
\Hamiltonian_{\rm imp} = d\Gamma(H^0_{\rm imp}) + \Hamiltonian_{\rm imp}^{\rm I}, \quad \Hamiltonian_{\rm bath}= d\Gamma(H^0_{\rm bath}), \quad \Hamiltonian_{\rm coupling}= d\Gamma(H^0_{\rm coupling}),
\end{equation}
and $H^0_{\rm coupling}$ can be decomposed according to \eqref{eq:split_space_HAI} as
\begin{equation*}
H^0_{\rm coupling} = \left( \begin{array}{cc} 0 & V \\ V^\dagger & 0 \end{array} \right).
\end{equation*}

Reshuffling the terms, we also have
\begin{equation} \label{eq:reshuffled_Hamiltonian_HAI}
\Hamiltonian_{\rm IM} = d\Gamma(H^0_{\rm IM}) + \Hamiltonian_{\rm imp}^{\rm I} \otimes 1_{\rm bath}, \quad \mbox{with} \quad H^0_{\rm IM} = \left( \begin{array}{cc} H^0_{\rm imp}  & V \\ V^\dagger & H^0_{\rm bath} \end{array} \right). 
\end{equation}

As will be seen below, a key step of the \acrshort{dmft} iteration loop is to compute the restriction of the Green's function $\GreensFunction_{\rm IM}$ of an impurity model to the impurity space $\OneParticleSpace[\mathrm{imp}]$. This is in general a difficult task. On the other hand, computing the restriction of the \emph{non-interacting} Green's function can easily be done using a Schur complement technique. This leads to the concept of hybridization function, which, as announced in the introduction, is of paramount importance in the mathematical formulation of \acrshort{dmft}. 

\begin{definition}[Hybridization function $\Hybridization$ of an impurity model] Given an impurity model of the form \eqref{eq:split_space_HAI}-\eqref{eq:reshuffled_Hamiltonian_HAI}, its \emph{hybridization function} $\Hybridization:\UpperHalfPlane \to \LinearOperator(\OneParticleSpace[\mathrm{imp}])$ is defined by 
\begin{equation}
\label{def:HybridizationAIM}
\forall z \in \UpperHalfPlane, \quad \Hybridization(z)=V \left(z-\NIHamiltonian_{\mathrm{bath}}\right)^{-1}V^\dagger.
\end{equation}
\end{definition}

Using this definition and Proposition~\ref{prop:NonInteractingGreensFunction}, the non-interacting Green's function of the impurity model is then given in the decomposition \eqref{eq:split_space_HAI} by
\begin{equation*}
    \GreensFunction^0(z) = (z-H^0_{\rm IM})^{-1} = \left( \begin{array}{cc} (z-H^0_{\rm imp}-\Delta(z))^{-1} & * \\ * & * \end{array} \right).
\end{equation*}
The hybridization function $\Hybridization$ thus plays a role analogous to the self-energy $\SelfEnergy$. The equation
$$
\left(\GreensFunction^0(z)\right)_\mathrm{imp}=\left(z-(\NIHamiltonian_{\mathrm{imp}}+\Hybridization(z))\right)^{-1}
$$
means that $\NIHamiltonian_{\mathrm{imp}}+\Hybridization(z)$ can be considered as an \emph{effective one-particle impurity Hamiltonian}: the hybridization function is the extra term to be added to $\NIHamiltonian_{\mathrm{imp}}$ so that, in the non-interacting case, a particle in the whole system can be regarded as a particle localized on the impurity. In the time domain, the hybridization function describes the phenomenon of a particle localized on the impurity, jumping into the bath, and jumping back to the impurity, hence the name \emph{hybridization}.

The most important result in this section is the following.
\begin{theorem}[$\SelfEnergy$ sparsity pattern of an impurity model]
\label{thm:SelfEnergyImpurityPattern}
Given an impurity model of the form \eqref{eq:split_space_HAI}-\eqref{eq:reshuffled_Hamiltonian_HAI}, the self-energy $\SelfEnergy_{\rm IM}: \UpperHalfPlane\to\OneParticleSpace[\rm IM]$ associated to any equilibrium state of $\Hamiltonian_{\rm IM}$ reads in the decomposition \eqref{eq:split_space_HAI}
\begin{equation}
\forall z \in \UpperHalfPlane, \quad \SelfEnergy_{\rm IM}(z)=\left(\begin{matrix}
\SelfEnergy_\mathrm{imp}(z) &0 \\
0 &0
\end{matrix} \right).\label{eq:sparsity_pattern_Sigma_AI}
\end{equation}
\end{theorem}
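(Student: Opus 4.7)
The plan is to derive, directly from the Källén-Lehmann representation \eqref{eq:KL_C+}, the identity
\begin{equation*}
\HermitianProduct{\phi}{(z - H^0_{\rm IM})G_{\rm IM}(z)\phi'} = \HermitianProduct{\phi}{\phi'} \qquad \text{for every } \phi \in \mathcal{H}_{\rm bath}, \ \phi' \in \OneParticleSpace[\rm IM],
\end{equation*}
together with its column counterpart in which $\phi' \in \mathcal{H}_{\rm bath}$. Since $G_{\rm IM}(z)$ is invertible by \eqref{eq:G_Pick}, multiplication by $G_{\rm IM}(z)^{-1}$ combined with Definition~\ref{def:SelfEnergy} will force both the bath rows and the bath columns of $\SelfEnergy_{\rm IM}(z)$ to vanish, which is exactly the block structure \eqref{eq:sparsity_pattern_Sigma_AI}.

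The technical heart of the proof is the commutation identity
\begin{equation*}
\Commutator{\Hamiltonian_{\rm IM}}{\AnnihilationOperator[\phi]} = -\AnnihilationOperator[H^0_{\rm IM}\phi], \qquad \phi \in \mathcal{H}_{\rm bath},
\end{equation*}
and its analogue $\Commutator{\Hamiltonian_{\rm IM}}{\CreationOperator[\phi']}=\CreationOperator[H^0_{\rm IM}\phi']$ for creation operators. The contribution of $d\Gamma(H^0_{\rm IM})$ here is a routine computation from the CAR. The key nontrivial observation is that $\Commutator{\Hamiltonian_{\rm imp}^{\rm I} \otimes 1_{\rm bath}}{\AnnihilationOperator[\phi]} = 0$ for any bath state $\phi$: since $\Hamiltonian_{\rm imp}^{\rm I}$ is a two-body fermionic operator supported on the impurity, it is a linear combination of monomials $a_p^\dagger a_q^\dagger a_r a_s$ with $p,q,r,s$ ranging over impurity orbital indices, and any such even-degree monomial commutes with $\AnnihilationOperator[\phi]$ by four successive applications of the CAR \eqref{eq:CARHubbard}.

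Granted this commutator, the row identity follows by elementary algebra inside \eqref{eq:KL_C+}. Using $(E_\FockSpaceState - E_{\FockSpaceState'})\HermitianProduct{\FockSpaceState}{\AnnihilationOperator[\phi]\FockSpaceState'} = \HermitianProduct{\FockSpaceState}{\Commutator{\Hamiltonian_{\rm IM}}{\AnnihilationOperator[\phi]}\FockSpaceState'} = -\HermitianProduct{\FockSpaceState}{\AnnihilationOperator[H^0_{\rm IM}\phi]\FockSpaceState'}$ for $\phi \in \mathcal{H}_{\rm bath}$, one splits the numerator $z = (z+E_\FockSpaceState - E_{\FockSpaceState'}) - (E_\FockSpaceState - E_{\FockSpaceState'})$ appearing in $z\HermitianProduct{\phi}{G_{\rm IM}(z)\phi'}$. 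The first resulting sum collapses, via the completeness relation $\sum_{\FockSpaceState \in \FockSpaceBasis} |\FockSpaceState\rangle\langle\FockSpaceState|=\Identity$ and the CAR $\AntiCommutator{\AnnihilationOperator[\phi]}{\CreationOperator[\phi']}=\HermitianProduct{\phi}{\phi'}$, to $\HermitianProduct{\phi}{\phi'}$, while the second is immediately recognised as $\HermitianProduct{\phi}{H^0_{\rm IM}G_{\rm IM}(z)\phi'}$. A fully symmetric computation based on the creation-operator commutator, with the roles of $\phi$ and $\phi'$ exchanged, delivers the column identity. The only genuine obstacle is the vanishing commutator $\Commutator{\Hamiltonian_{\rm imp}^{\rm I}\otimes 1_{\rm bath}}{\AnnihilationOperator[\phi]}=0$: conceptually obvious (the interaction does not touch the bath), but its rigorous justification requires careful bookkeeping of the Jordan--Wigner-type signs implicit in the embedding of $\FockSpace_{\rm imp}$ into $\FockSpace_{\rm IM}$; the cleanest route is to work throughout inside $\FockSpace_{\rm IM}$ and invoke only the unified CAR.
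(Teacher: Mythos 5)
Your proposal is correct, and it reaches the conclusion by a genuinely different route than the paper's own proof. The paper works entirely in the time domain: it reduces \eqref{eq:sparsity_pattern_Sigma_AI} to the pair of identities $\HermitianProduct{\OneParticleState}{\GreensFunction(z)\SelfEnergy(z)\OneParticleState'}=0$ and $\HermitianProduct{\OneParticleState'}{\SelfEnergy(z)\GreensFunction(z)\OneParticleState}=0$ for $\OneParticleState'\in\OneParticleSpace[\mathrm{imp}]^\perp$, and then establishes the equivalent statement $\HermitianProduct{\OneParticleState}{\GreensFunction(z)(z-\NIHamiltonian)\OneParticleState'}=\HermitianProduct{\OneParticleState}{\OneParticleState'}$ by integrating $\GreensFunction_\pm$ by parts exactly as in the proof of Proposition~\ref{prop:NonInteractingGreensFunction}, differentiating $\TimeGreensFunction$ through the Heisenberg picture. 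You instead stay entirely in the frequency domain, splitting $z=(z+E_\FockSpaceState-E_{\FockSpaceState'})-(E_\FockSpaceState-E_{\FockSpaceState'})$ inside the Källén--Lehmann sum \eqref{eq:KL_C+}; the first piece telescopes via completeness and the CAR into $\HermitianProduct{\phi}{\phi'}$, and the second is recognised as $\HermitianProduct{\phi}{H^0_{\rm IM}\GreensFunction(z)\phi'}$. The two proofs are logically equivalent and both hinge on the same crucial ingredient: $\Commutator{\Hamiltonian_{\rm imp}^{\rm I}\otimes 1_{\rm bath}}{\AnnihilationOperator[\phi]}=0$ and its creation-operator counterpart for $\phi\in\mathcal{H}_{\rm bath}$, so that only $d\Gamma(H^0_{\rm IM})$ contributes to the commutator. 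The paper formalises this by writing $\InteractingHamiltonian$ as an element of the GICAR algebra generated by $\CreationOperator[\OneParticleState_1]\AnnihilationOperator[\OneParticleState_2]$ with $\OneParticleState_1,\OneParticleState_2\in\OneParticleSpace[\mathrm{imp}]$ and noting that each such generator commutes with $\CreationOperator[\OneParticleState']$; you check it at the level of degree-four monomials $a_p^\dagger a_q^\dagger a_r a_s$ by four applications of the CAR, which comes to the same thing. Your remark about Jordan--Wigner signs is a real concern for anyone trying to transport operators between $\FockSpace_{\rm imp}$ and $\FockSpace_{\rm IM}$, and you resolve it correctly by working in $\FockSpace_{\rm IM}$ and invoking only the unified CAR --- which is in effect what the GICAR viewpoint encodes. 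What your route buys is the avoidance of the distributional derivative, the jump at $t=0$, and the choice of boundary conditions implicit in the time-domain integration by parts; what the paper's route buys is reuse of the same argument it already set up for Proposition~\ref{prop:NonInteractingGreensFunction} and a more structural (algebra-level) statement of the bath-locality fact.
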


In practical \acrshort{dmft} computations and as mentioned already in \cite{feynman_theory_1963}, the self-energy $\SelfEnergy$ of an impurity problem depends solely on the hybridization function $\Hybridization$ and on the impurity Hamiltonian defined by $\NIHamiltonian_{\mathrm{imp}}$ and $\InteractingHamiltonian$, via a quantum action defined by path integrals \cite[eq. 9]{gull_continuous-time_2011}. In this article, we focus on the \acrshort{ipt} approximation (see section \ref{subsec:IPTSolver}), which makes this dependence almost explicit, and postpone the study of the exact impurity solver in a more general framework to future works. 

\paragraph{Anderson Impurity Model}
 The original Anderson impurity model (AIM) \cite{anderson_localized_1961} is a specific impurity model in which the impurity consists of a single-site Hubbard model. It was introduced by Anderson back in 1961 to explain the low-temperature behavior of the conductivity of metallic alloys with dilute magnetic impurities, later called the \emph{Kondo effect} \cite{kondo_resistance_1964,sarachik_resistivity_1968}. The AIM was then extended to multiple-site Hubbard impurities. Figure~\ref{fig:AIM} provides a graphical illustration of an AIM model with a 4-site Hubbard impurity and 5 bath orbitals (${\rm dim}({\mathcal H}_{\rm imp})=8$, ${\rm dim}({\mathcal H}_{\rm bath})=10$). Without loss of generality, we can indeed identify the bath space ${\mathcal H}_{\rm bath}$ with $\ComplexNumbers^{2B}$ and assume that $H^0_{\rm bath}={\rm diag}(\epsilon_1, \epsilon_1, \cdots, \epsilon_B, \epsilon_B)$. In this picture, the canonical basis of $\ComplexNumbers^{2B}$ corresponds to an orthonormal basis of bath spin-orbitals with energies $\epsilon_1, \epsilon_1, \dots, \epsilon_B, \epsilon_B$. The matrix $\BathCoupling[]$ models jumps from the bath to the impurity, while the matrix $\BathCoupling[]^\dagger$ models jumps from the impurity to the bath.
 The coupling terms $V_{i,j}$ are also assumed to be real-valued due to the absence of magnetic field.

\begin{figure}[ht]
\centering
\includegraphics{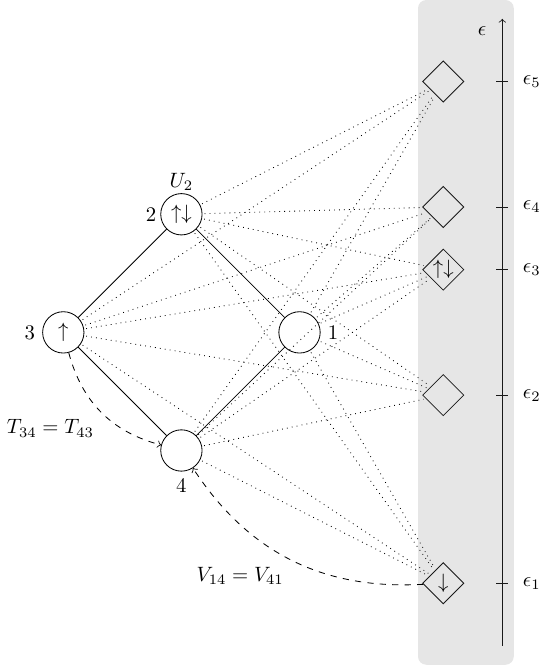}
\caption{The \acrlong{aim} of an impurity $(\HubbardGraph=C_4,\HoppingMatrix[],\OnSiteRepulsion[]$) and a \emph{bath} $(\BathDimension=5,\BathEnergy[],\BathCoupling[])$ in a 6 electrons state.}
\label{fig:AIM}
\end{figure}

In the seminal paper \cite{anderson_localized_1961}, the \emph{electronic bath} was thought as the set of \emph{conducting electrons}, for instance described by a tight-binding model on a (truncated) lattice.

\begin{remark} \label{rmk:AIMCommutationRelations} As for the Hubbard model, the z-component of the total spin operator $\AISpinOperator^z=\HubbardSpinOperator^z+\sum_{k=1}^\BathDimension \NumberOperator[k,\uparrow] - \NumberOperator[k,\downarrow]$, the total number operator $\AINumberOperator=\HubbardNumberOperator+\sum_{k=1}^\BathDimension \NumberOperator[k,\uparrow] + \NumberOperator[k,\downarrow]$ and the Anderson Impurity Hamiltonian $\AIHamiltonian$ pairwise commute. 
\end{remark}

\subsection{\acrfull{dmft}} \label{subsec:DMFTEquations}

Consider a Hubbard model defined by $(\HubbardGraph,\HoppingMatrix[],\OnSiteRepulsion[])$ in one of its equilibrium states $\EquState$. The purpose of \acrshort{dmft} is to provide an \emph{approximation} of the corresponding Green's function $\GreensFunction$, and more precisely on some blocks of this Green's function. 

To do so, DMFT uses a self-consistent mapping between the Hubbard model $(\HubbardGraph,\HoppingMatrix[],\OnSiteRepulsion[])$ and a collection of \acrlong{aim}s associated to a partition $\DMFTPartition$ of the vertices of the Hubbard graph $\HubbardGraph$. The process, illustrated in Figure~\ref{fig:DMFTPartition}, is the following.
\begin{enumerate}
\item Choose a partition $\DMFTPartition=\{\HubbardVertices_p, p \in \IntSubSet{1}{\PartitionSize} \}$ of the $L$ sites of the Hubbard model into $P$ impurities $\HubbardVertices_{1},\cdots \HubbardVertices_{P}$, ,  $\sqcup_{p=1}^{\PartitionSize}\HubbardVertices_{p}=\HubbardVertices$, and consider for all $p \in \IntSubSet{1}{\PartitionSize}$ the \emph{induced subgraphs} $\HubbardGraph[p]=(\HubbardVertices_{p},\HubbardEdges_{p})$ with $\HubbardEdges_{p}=\{\{i,j\} \in \HubbardEdges; \ i,j \in \HubbardVertices_{p}\}$. This partition leads to the canonical orthogonal decomposition 
\begin{equation} \label{eq:decomp_H1}
\OneParticleSpace[H]=\bigoplus_{p=1}^\PartitionSize \OneParticleSpace[p], \quad {\rm dim}(\OneParticleSpace[p])=2|\Lambda_p|=2L_p,
\end{equation}
from which follows the decomposition of the Fock space
\begin{equation}
    \label{eq:decompositionFockSpaceDMFT}
    \FockSpace_H = \bigotimes_{p=1}^\PartitionSize \FockSpace_p, \quad \dim \FockSpace_p = 4^{L_p}.
\end{equation}
The decomposition \eqref{eq:decomp_H1} of the one-particle state space of the original Hubbard model gives rise to block-operator representations of the exact Green's function and self-energy (for the state $\Omega$) of the original Hubbard model 
\begin{equation*}
G(z)=\left(\begin{matrix}
G_1(z) &*&\cdots&*\\
*& G_2(z)&\cdots&* \\
\vdots&\vdots&\ddots&\vdots \\
*&*&\cdots& G_P(z)
\end{matrix}\right), \qquad 
\Sigma(z)=\left(\begin{matrix}
\Sigma_1(z) &*&\cdots&*\\
*& \Sigma_2(z)&\cdots&* \\
\vdots&\vdots&\ddots&\vdots \\
*&*&\cdots& \Sigma_P(z)
\end{matrix}\right);
\end{equation*}
\item The decomposition \eqref{eq:decomp_H1} also leads to a block-operator representation of the one-body Hamiltonian corresponding to the non-interacting part of the Hubbard Hamiltonian
\begin{equation*}
H^0_{\rm H}=\left(\begin{matrix}
H^0_{1} &*&\cdots&*\\
*& H^0_{2} &\cdots&* \\
\vdots&\vdots&\ddots&\vdots \\
*&*&\cdots& H^0_\PartitionSize
\end{matrix}\right),
\end{equation*}
where the diagonal block $H^0_p$ is constructed from the \emph{induced hopping matrices} $\HoppingMatrix[p]=\HoppingMatrix[|\HubbardEdges_{p}]$, $1 \le p \le P$. Due to the locality of the interactions in the Hubbard model, the interaction term is ``block-diagonal'' in the decomposition  \eqref{eq:decompositionFockSpaceDMFT}. It is represented by the \emph{induced on-site repulsion} $\OnSiteRepulsion[p]=\OnSiteRepulsion[|\HubbardVertices_p]$, and it reads $\InteractingHamiltonian = \oplus_{p=1}^\PartitionSize \InteractingHamiltonian_p$, with $\InteractingHamiltonian_p = \sum_{i\in\HubbardVertices_p}U_i\NumberOperator[i,\uparrow]\NumberOperator[i,\downarrow]$;
\item The \acrshort{aim} associated to the $p$-th impurity is defined by (i) the impurity state space $\OneParticleSpace[{\rm imp},p]:=\OneParticleSpace[p]$, (ii) the impurity Hamiltonian defined by the induced Hubbard model $(\HubbardGraph[p],\HoppingMatrix[p],\OnSiteRepulsion[p])$, (iii) some bath state space $\OneParticleSpace[{\rm bath},p]$, bath one-body Hamiltonian $H^0_{{\rm bath},p}$ and coupling one-body Hamiltonian $H^0_{{\rm coupling},p}= \left( \begin{array}{cc} 0 & V_p^\dagger \\ V_p & 0 \end{array} \right)$ to be specified. From each of these $P$ AIMs, one can compute the Green's functions and self-energies 
\begin{equation*}
    G_{{\rm AI},p}(z) = \left( \begin{array}{cc} G_{{\rm imp},p}(z) & * \\ * & * \end{array} \right), \qquad 
     \Sigma_{{\rm AI},p}(z) = \left( \begin{array}{cc} \Sigma_{{\rm imp},p}(z) & 0 \\ 0 & 0 \end{array} \right),
\end{equation*}
for AIM equilibrium states $\Omega_p$ of the same nature as $\Omega$ (see Remark~\ref{rmk:AIMDMFTStates} below for a comment on this important point).
\end{enumerate}

\begin{figure}[ht]
\centering
\includegraphics{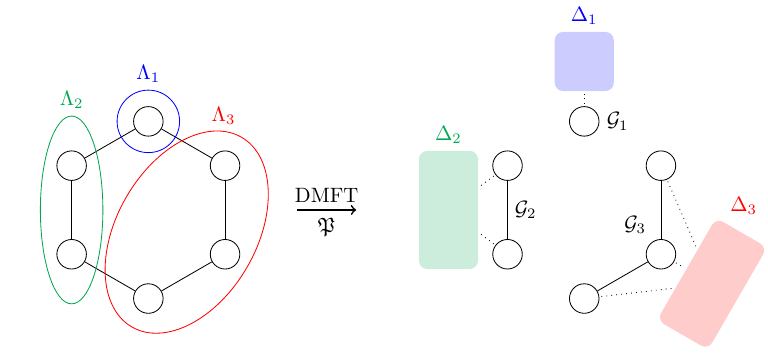}
\caption{A partition $\DMFTPartition=\HubbardVertices_1\sqcup \HubbardVertices_2\sqcup \HubbardVertices_3$ of the 6 vertices of the $C_6$ graph, and the 3 induced graphs $\HubbardGraph[1], \HubbardGraph[2], \HubbardGraph[3]$: the impurities of the corresponding \acrshort{aim}s are defined using the original Hubbard model, while the bath and the impurity-bath coupling are specified by the DMFT approximation.} \label{fig:DMFTPartition}
\end{figure}

DMFT aims at computing approximations of the diagonal-blocks $G_1(z), \cdots, G_P(z)$ of the exact Green's function $G(z)$. 
Ideally, the baths should be ajusted in such a way that $\GreensFunction_{\mathrm{imp},p} =\GreensFunction_p$, but of course this is not possible since the functions $\GreensFunction_p$ are unknown. To make DMFT a practical tool, the idea is to introduce approximate Green's functions and self-energies of the form
\begin{align}
G_{\rm DMFT}(z)&=\left(\begin{matrix}
G_{{\rm DMFT},1}(z) &*&\cdots&*\\
*& G_{{\rm DMFT},2}(z) &\cdots&* \\
\vdots&\vdots&\ddots&\vdots \\
*&*&\cdots& G_{{\rm DMFT},P}(z)
\end{matrix}\right), \label{eq:G_DMFT} \\ 
\Sigma_{\rm DMFT}(z)&=\left(\begin{matrix}
\Sigma_{{\rm DMFT},1}(z) & 0 &\cdots& 0\\
0& \Sigma_{{\rm DMFT},1}(z)&\cdots&0 \\
\vdots&\vdots&\ddots&\vdots \\
0&0&\cdots& \Sigma_{{\rm DMFT},P}(z)
\end{matrix}\right), \label{eq:Sigma_DMFT}
\end{align}
related by 
\begin{align*}
    \GreensFunction_{\mathrm{DMFT}}(z) &= ((\GreensFunction^0_\mathrm{DMFT})^{-1}(z)- \SelfEnergy_{\mathrm{DMFT}}(z))^{-1},\\
    \GreensFunction^0_{\mathrm{DMFT}}(z)&= \left(z-\NIHamiltonian_{\mathrm{H}} \right)^{-1},
\end{align*}
and to choose the baths in such a way that 
\begin{align}
    \forall 1 \le p \le P, \quad & G_{{\rm DMFT},p}(z)=G_{{\rm imp},p}(z), \label{eq:SC_G} \\ 
 & \Sigma_{{\rm DMFT},p}(z)=\Sigma_{{\rm imp},p}(z). \label{eq:SC_Sigma}
\end{align}
Of course, it is not clear whether a collection of baths that satisfies the above conditions exists, and this article partially answers this question. 

\medskip

Note that the DMFT Green's function $G_{\rm DMFT}(z)$ is \emph{not}, a priori, the Green's function of some interacting quantum many-body problem which could be defined as in Section~\ref{subsec:OneBodyTimeOrderedGreensFunctions}. Instead, it is defined from an approximate self-energy $\Sigma_{\rm DMFT}(z)$.
Equations \eqref{eq:Sigma_DMFT} and \eqref{eq:SC_Sigma} indicate that, in the DMFT approximation, each impurity interacts with its neighborhood as if the former were an impurity and the latter a bath, in accordance with Theorem \ref{thm:SelfEnergyImpurityPattern}, hence the name of \emph{mean-field}. This mean-field is \emph{dynamical} because it is frequency dependent, in contrast with static mean-field theory such as in Hartree-Fock or Density-Matrix Embedding Theory (DMET).

\begin{remark}\label{rmk:AIMDMFTStates} In the above sketch of the DMFT framework, we did not specify the states~$(\EquState_p)_{p \in \IntSubSet{1}{\PartitionSize}}$ of the associated \acrshort{aim}s from which the self-energies $\left(\SelfEnergy_{\mathrm{imp},p}\right)_{p\in \IntSubSet{1}{\PartitionSize}}$ are computed. In \cite{georges_dynamical_1996}, it is implicitly stated that, for a Hubbard model in the \emph{Gibbs} state $\EquState_{\mathrm{H},\StatisticalTemperature,\ChemicalPotential}$ as defined below in Section \ref{subsec:MatsubaraGF}, the \acrshort{aim}'s equilibrium states $\EquState_p$ are defined to be \emph{Gibbs} states as well $\EquState_p=\EquState_{\mathrm{AIM},\StatisticalTemperature,\ChemicalPotential'}$, with the same temperature as that of the \emph{Gibbs} state of the whole system, but with \emph{a priori} different chemical potential $\ChemicalPotential'$. The latter is to be chosen to satisfy appropriate \emph{filling} conditions. We will address this question in a future work and simply assume here that the $\ChemicalPotential'=\ChemicalPotential$. Moreover, when working with the \acrshort{ipt} solver, the chemical potential is fixed by the on-site interaction, as detailed in Section \ref{subsec:IPTSolver}.
\end{remark}

Our analysis is based on a reformulation of conditions \eqref{eq:SC_G}-\eqref{eq:SC_Sigma} using the hybridization functions $(\Delta_p(z))_{1 \le p \le P}$ of the $P$ impurity problems as the main variables. As mentionned previously, knowing $\Delta_p$ as well as $T_p$, $U_p$ and $\Omega_p$,
suffices in principle to compute $\Sigma_{{\rm imp},p}$. In practice, this is done by using an approximate impurity solver. A particular example of such a solver will be presented in Section~\ref{subsec:IPTSolver}. Conversely, knowing $(\Sigma_{{\rm imp},p})_{1 \le p \le P}$ and $T$ suffices to characterize the unique set $(\Delta_p(z))_{1 \le p \le P}$ for which \eqref{eq:SC_G}-\eqref{eq:SC_Sigma} hold true. Indeed, by denoting
$$
\OneParticleSpace[\overline{p}] := \bigoplus_{p \neq q=1}^P \OneParticleSpace[q],
$$
$$
\Sigma_{{\rm DMFT},\overline{p}}(z) =\left(\begin{matrix}
\Sigma_{{\rm DMFT},1}(z) & \cdots &0 &\cdots& 0 \\
\vdots & \ddots & \vdots & & & \\
0 & \cdots & \Sigma_{{\rm DMFT},p-1}(z) & 0 & & & \\ 
0 & \cdots & 0 & \Sigma_{{\rm DMFT},p+1}(z) &\cdots & 0 \\
\vdots&\ddots&\vdots&\vdots \\
0&\cdots & 0 & 0 & \cdots& \Sigma_{{\rm DMFT},P}(z)
\end{matrix}\right), 
$$
and using the Schur complement formula, we have on the one hand
\begin{align}
    G_{{\rm DMFT},p}(z)&=\left( \left(z-(H^0_{\rm H}+\Sigma_{{\rm DMFT}}(z))\right)^{-1}\right)_p \nonumber \\
    &=  \left( z- \left(H^0_p+\Sigma_{{\rm DMFT},p}(z)\right) - W_p \left(z- \left( H^0_{\overline{p}} +  \Sigma_{{\rm DMFT},\overline{p}}(z)\right) \right)^{-1} W_p^\dagger \right)^{-1}, 
    \label{eq:G_DMFT_p}
\end{align}
where 
\begin{equation}
\label{eq:BlockStructure}
\begin{pmatrix}
    \NIHamiltonian_p & \NIHamiltonianSchurBlock \\ \NIHamiltonianSchurBlock^\dagger & \NIHamiltonian_{\Bar{p}}
\end{pmatrix}
\text{ and }
\left( \begin{array}{cc} z-(H^0_p+\Sigma_{\rm DFMT,p}(z))  & -W_p \\
-W_p^\dagger & z-(H^0_{\overline{p}}+\Sigma_{\rm DFMT,\overline{p}}(z)) \end{array} \right)
\end{equation}
 are the block-representations of $\NIHamiltonian_{\rm H}$ and $(z-(H^0_{\rm H}+\Sigma_{\rm DMFT}(z)))$, respectively, in the decomposition $\OneParticleSpace=\OneParticleSpace[p] \oplus \OneParticleSpace[\overline{p}]$. Note that $H^0_p \in \LinearOperator(\OneParticleSpace[p])$, $W_p \in 
\LinearOperator\left(\OneParticleSpace[\overline{p}];\OneParticleSpace[p] \right)$, and $ H^0_{\overline{p}}, \Sigma_{\rm DFMT,\overline{p}}(z) \in \LinearOperator\left(\OneParticleSpace[\overline{p}]\right)$. 
On the other hand, using again the Schur complement formula, we have

\begin{align*}
    G_{{\rm AI},p}(z) &= \left( G^0_{{\rm AI},p}(z)^{-1} - \Sigma_{{\rm AI},p}(z) \right)^{-1} =  \left( z-H^0_{{\rm AI},p} - \Sigma_{{\rm AI},p}(z) \right)^{-1} \\
    &=
    \left(\begin{array}{cc}
z-H^0_{\rm p}-\Sigma_{{\rm imp},p}(z) & -V_{p} \\ -V_p^\dagger & z-H^0_{{\rm bath},p} 
\end{array}\right)^{-1}
= 
    \left(\begin{array}{cc}
(z-H^0_{\rm p}-\Sigma_{{\rm imp},p}(z)-\Delta_p(z))^{-1} & * \\ * & * 
\end{array}\right),
\end{align*}
so that
\begin{equation} \label{eq:G_imp_p}
    G_{{\rm imp},p}(z) = (z-H^0_{\rm p}-\Sigma_{{\rm imp},p}(z)-\Delta_p(z))^{-1}.
\end{equation}
Conditions \eqref{eq:SC_G}-\eqref{eq:SC_Sigma}, together with \eqref{eq:G_DMFT_p}-\eqref{eq:G_imp_p} yield the self-consistent condition 

\begin{equation}  \label{eq:SC_condition}
\forall 1 \le p \le P, \quad \Hybridization_p(z) = W_p \left(z-H^0_{\overline{p}} - \bigoplus_{q=1,q \neq p}^\PartitionSize \Sigma_{\mathrm{imp},q}(z) \right)^{-1} W_p^\dagger.
\end{equation}
Note that the matrices $W_p$, $\NIHamiltonian_{\bar{p}}$ depend only on the  hopping matrix $\HoppingMatrix[]$ through
\begin{align*}
[W_p]_{i\sigma,j\sigma'}&=\left\lbrace \begin{matrix}
\HoppingMatrix[i,j] \\ 0 \end{matrix} \begin{matrix} \text{ if }  i \in \HubbardVertices_p, j \in \HubbardVertices\setminus\HubbardVertices_p, \{i,j\} \in \HubbardEdges,  \; \sigma=\sigma', \\
 \text{ otherwise,}
\end{matrix} \right. \\ [\NIHamiltonian_{\bar{p}}]_{i\sigma,j\sigma'}&= \left\lbrace \begin{matrix} \HoppingMatrix[i,j] \\ 0 \end{matrix} \begin{matrix} \text{ if }  i,j \in \HubbardVertices\setminus\HubbardVertices_p, \{i,j\} \in \HubbardEdges, \; \sigma=\sigma', \\
 \text{ otherwise,} \end{matrix} \right.
\end{align*}
where $i,j \in \Lambda$ denote site indices and $\sigma,\sigma' \in \{\uparrow,\downarrow\}$ spin indices.

\begin{remark}
\label{rmk:TranslationInvariantDMFT}
This formulation of \acrshort{dmft} agrees with the original one \cite{georges_fermions_2018, georges_dynamical_1996} in the translation-invariant setting where
\begin{equation}
\forall p \in \IntSubSet{1}{\PartitionSize}, \quad \OneParticleSpace[p]=\OneParticleSpace[1], \quad H^0_p=H^0_1, \quad W_p=W_1, \quad H^0_{\overline{p}}=H^0_{{\overline{1}}}, \quad U_p=U_1. \label{eq:TranslationInvariantSetting}
\end{equation} 
In this setting, we can consider translation-invariant solutions to the DMFT equations, for which 
\begin{align*}
\forall p \in \IntSubSet{1}{\PartitionSize}, \quad \forall z \in \UpperHalfPlane, \quad \Sigma_{{\rm imp},p}(z)&=\Sigma_{{\rm imp}}(z), \quad \Hybridization_p(z)=\Hybridization(z),
\end{align*}
where $\Sigma_{{\rm imp}}(z)$ and $\Hybridization(z)$ are related by the translation-invariant self-consistent condition
\begin{equation}
\Hybridization(z)=W_1 \left(z-\NIHamiltonian_{\bar{1}} -\bigoplus_{q=2}^{\PartitionSize} \SelfEnergy_{\mathrm{imp}}(z)\right)^{-1}W_1^\dagger.
\label{eq:translation_invariant_SC}
\end{equation}
This setting is sketched in Figure~\ref{fig:TranslationInvariantDMFT}. 
When $\DMFTPartition$ is a partition into singletons (single-site \acrshort{dmft}), condition \eqref{eq:TranslationInvariantSetting} is equivalent to the fact that $\OnSiteRepulsion[]$ is constant and  $(\HubbardGraph,\HoppingMatrix[])$ is \emph{vertex-transitive}, namely that for all $\lambda_1,\lambda_2 \in \HubbardVertices$, there exists a graph isomorphism $\tau:\HubbardVertices\to\HubbardVertices$ such that 
\begin{equation*}
\tau(\lambda_1)=\lambda_2, \quad \forall \lambda,\lambda' \in \HubbardVertices, \quad\HoppingMatrix[\tau(\lambda),\tau(\lambda')]=\HoppingMatrix[\lambda,\lambda']
\end{equation*} In particular, the graph Cartesian product $C_{N}^{\square d}$ of $d$ copies of the $N$-cycle, which is the nearest neighbor graph of a truncation of the $d$-dimensional square lattice, with constant hopping and on-site repulsion, and artificial periodic boundary conditions (supercell method), is vertex-transitive due to the translation invariance of the corresponding lattice. This setting was the one considered in the first \acrshort{dmft} computations \cite{georges_dynamical_1996}.
 Due to translation invariance, a single impurity problem has to be solved at each iteration. Recall that the impurity solver is the computational bottleneck in DMFT.
\end{remark}

\begin{figure}[ht]
\centering
\includegraphics{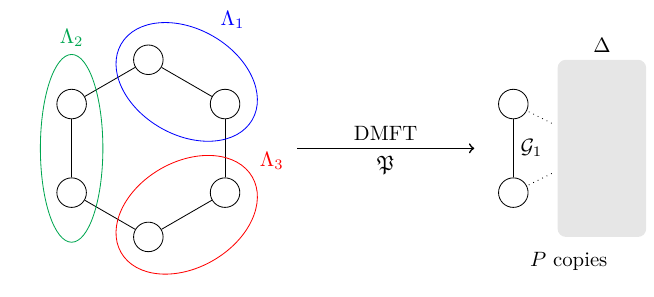}
\caption{Translation-invariant \acrshort{dmft} formalism.} \label{fig:TranslationInvariantDMFT}
\end{figure}

The DMFT self-consistent equations \eqref{eq:SC_condition}, combined with an exact impurity solver, give the exact Green's function of the original Hubbard model in the following trivial limits.

\begin{proposition}[Exactness of DMFT in some trivial limits]\label{prop:DMFTTrivialLimitsExactness} Consider a Hubbard model $(\HubbardGraph,\HoppingMatrix[],\OnSiteRepulsion[])$ in a Gibbs state $\GibbsState$.
The self-consistent DMFT equations \eqref{eq:SC_condition}, combined with an exact impurity solver, admit a unique solution in each of the following two settings:
\begin{enumerate}
    \item non-interacting particles. If the on-site repulsion term is equal to zero  ($\OnSiteRepulsion[i]=0$ for all $i \in \Lambda$), then this solution is given by
\begin{align}
\forall p \in \IntSubSet{1}{\PartitionSize}, \quad \forall z \in \UpperHalfPlane, \quad \SelfEnergy_{\mathrm{imp},p}(z)&=0, \nonumber \\
 \Hybridization_p(z)&= W_p \left(z-\NIHamiltonian_{\overline{p}}\right)^{-1} W_p^\dagger;
 \label{eq:hybridization_NI}
\end{align}
\item disconnected graphs and atomic limit. If the partition $\DMFTPartition$ of $\HubbardGraph$ is such that $\HubbardGraph=\bigoplus_{p=1}^\PartitionSize \HubbardGraph[p]$ (meaning $\HubbardEdges_H=\sqcup_{p=1}^\PartitionSize \HubbardEdges_p$), that is if the partition decomposes the original Hubbard model $(\HubbardGraph,\HoppingMatrix[],\OnSiteRepulsion[])$ into $P$ independent Hubbard models $(\HubbardGraph[p],\HoppingMatrix[p],\OnSiteRepulsion[p])$, or if the hopping matrix $\HoppingMatrix[]$ vanishes, then this solution is given by

\begin{align}
\forall p \in \IntSubSet{1}{\PartitionSize}, \quad \forall z \in \UpperHalfPlane, \quad \SelfEnergy_{\mathrm{imp},p}(z)&=\SelfEnergy_{p}(z), \nonumber \\
\Hybridization_p(z)&=0,
\label{eq:hybridization_disconnected}
\end{align}
where $\SelfEnergy_{p}$ is the exact self-energy associated to the $p$-th Hubbard model in the associated Gibbs State $\GibbsState^p$.
\end{enumerate} 
In both settings, DMFT is exact in the sense that
\begin{equation*}
\forall z \in \UpperHalfPlane, \quad \GreensFunction_{\mathrm{DMFT}}(z)=\GreensFunction(z).
\end{equation*}
\end{proposition}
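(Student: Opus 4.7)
The plan is to treat the two limits separately and, in each case, exhibit an explicit solution to the self-consistent equation \eqref{eq:SC_condition} and then check uniqueness and the exactness identity $\GreensFunction_{\rm DMFT}=\GreensFunction$.

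\emph{Non-interacting case.} When $\OnSiteRepulsion[i]=0$ for every $i \in \HubbardVertices$, the Hubbard Hamiltonian reduces to $\HubbardHamiltonian=d\Gamma(\NIHamiltonian_{\rm H})$, which is non-interacting in the sense of \eqref{eq:HamiltonianNIAndIDecomposition}. Proposition~\ref{prop:NonInteractingGreensFunction} combined with Definition~\ref{def:SelfEnergy} yields $\SelfEnergy \equiv 0$, and in particular $\SelfEnergy_p \equiv 0$ for every $p$. For any choice of bath parameters $(\NIHamiltonian_{{\rm bath},p},V_p)$ with $\OnSiteRepulsion[p]=0$, the associated AIM is itself non-interacting, so the same result gives that its self-energy vanishes identically; Theorem~\ref{thm:SelfEnergyImpurityPattern} then forces $\SelfEnergy_{\mathrm{imp},p} \equiv 0$. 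Plugging this into \eqref{eq:SC_condition} yields exactly \eqref{eq:hybridization_NI}, and this pair uniquely solves the DMFT system (the impurity self-energies are forced to vanish, and \eqref{eq:SC_condition} then fixes $\Hybridization_p$ unambiguously). Finally $\SelfEnergy_{\rm DMFT}=0$ gives $\GreensFunction_{\rm DMFT}(z)=(z-\NIHamiltonian_{\rm H})^{-1}=\GreensFunction(z)$.

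\emph{Disconnected and atomic limits.} In both subcases every edge of $\HubbardGraph$ lies inside some cluster $\HubbardVertices_p$ (trivially so when $\HoppingMatrix[]\equiv 0$), and consequently $W_p=0$ for all $p \in \IntSubSet{1}{\PartitionSize}$ in \eqref{eq:BlockStructure}. Inserting $W_p=0$ into \eqref{eq:SC_condition} forces $\Hybridization_p \equiv 0$ irrespective of $(\SelfEnergy_{\mathrm{imp},q})_{q \neq p}$. With vanishing hybridization, the $p$-th AIM may be realized with a decoupled (or empty) bath, which reduces it to the isolated sub-Hubbard model $(\HubbardGraph[p],\HoppingMatrix[p],\OnSiteRepulsion[p])$. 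On the other hand, since $\NIHamiltonian_{\rm H} = \bigoplus_p \NIHamiltonian_p$ and the Hubbard interaction is block-diagonal in \eqref{eq:decompositionFockSpaceDMFT}, the full Hamiltonian factors across $\FockSpace_H = \bigotimes_p \FockSpace_p$; the associated Gibbs state splits as $\GibbsState = \bigotimes_p \GibbsState^p$ and restricts to $\GibbsState^p$ on each impurity Fock factor. Hence $\SelfEnergy_{\mathrm{imp},p}(z)=\SelfEnergy_p(z)$ and $\SelfEnergy = \bigoplus_p \SelfEnergy_p = \SelfEnergy_{\rm DMFT}$, so $\GreensFunction_{\rm DMFT}(z) = (z-\NIHamiltonian_{\rm H}-\SelfEnergy_{\rm DMFT}(z))^{-1} = \GreensFunction(z)$.

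\emph{Main obstacle.} The first case is essentially bookkeeping built on Proposition~\ref{prop:NonInteractingGreensFunction} and Theorem~\ref{thm:SelfEnergyImpurityPattern}. The only mildly delicate step lies in the second case: one must justify that an AIM with $\Hybridization_p \equiv 0$, equipped with a Gibbs state at the inverse temperature $\StatisticalTemperature$ and chemical potential $\ChemicalPotential$ of the global system, produces on the impurity factor exactly the self-energy of the $p$-th sub-Hubbard model in $\GibbsState^p$. This is precisely where the convention of Remark~\ref{rmk:AIMDMFTStates} (matching $\StatisticalTemperature$ and $\ChemicalPotential' = \ChemicalPotential$) is essential: it ensures that the tensor-product Gibbs factorization of the decoupled impurity-plus-bath system is consistent with the restriction of the global Gibbs state to the impurity factor. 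Once this consistency is acknowledged, the remaining identities follow from direct Schur-complement manipulations.
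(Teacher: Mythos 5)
Your proof is correct and follows essentially the same route as the paper's: in the non-interacting limit you argue that $U_p=0$ forces $\SelfEnergy_{\mathrm{imp},p}\equiv 0$ for any bath (so \eqref{eq:SC_condition} pins down $\Hybridization_p$ uniquely), and in the disconnected/atomic limit you argue that $W_p=0$ forces $\Hybridization_p\equiv 0$ irrespective of the impurity self-energies, reducing each AIM to the isolated sub-Hubbard model. If anything your write-up is \emph{more} complete than the paper's: the paper stops at identifying the solution and does not spell out the Gibbs-state factorization $\GibbsState=\bigotimes_p\GibbsState^p$ or the concluding identity $\GreensFunction_{\mathrm{DMFT}}=\GreensFunction$, both of which you verify explicitly. (One cosmetic remark: invoking Theorem~\ref{thm:SelfEnergyImpurityPattern} in the non-interacting case is unnecessary, since the full AIM self-energy is already identically zero by Proposition~\ref{prop:NonInteractingGreensFunction}, so its impurity block is trivially zero without appealing to the sparsity pattern.)
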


 The purpose of \acrshort{dmft} \cite{georges_fermions_2018} is to build an appoximation that complies with these two limiting cases. Another limiting case in which \acrshort{dmft} is claimed to be exact is in the ``infinite dimension'' limit \cite{metzner_correlated_1989,georges_dynamical_1996}. We leave the mathematical investigation of this limit to future works.

\medskip

We deduce from \eqref{eq:hybridization_NI}-\eqref{eq:hybridization_disconnected} that in the trivial limits considered in Proposition~\ref{prop:DMFTTrivialLimitsExactness}, the hybridization functions $\Delta_p(z)$ are either identically zero, or have a finite number of poles, so that the corresponding baths can be chosen finite dimensional.

Anticipating on the following, we prove in Proposition \ref{prop:NonExistenceFiniteBathIPTDMFT} that, when coupled to the \acrfull{ipt} impurity solver, the translation-invariant self-consistent equation \eqref{eq:translation_invariant_SC} has no solution with a finite number of poles. This motivates the functional setting described in Section \ref{sec:MainResults}.

\subsection{A specific impurity solver : the \acrfull{ipt} solver.}\label{subsec:IPTSolver}

To define properly the \acrshort{ipt} solver, we need first to introduce \emph{Matsubara}'s formalism, and more precisely \emph{Matsubara's Green's function} $\MatsubarasGreensFunction$.

\subsubsection{Matsubara's Green's functions}\label{subsec:MatsubaraGF}

Matsubara's Green's functions are defined only for \emph{Gibbs states} $\GibbsState$ at a given inverse temperature $\StatisticalTemperature$ and chemical potential $\ChemicalPotential$. These functions have been more extensively studied mathematically than time-ordered Green's functions \cite{bratteli_operator_1987,bratteli_operator_1997}. In this section, we recall their definition and prove an analytic continuation result that will be useful for our analysis. The setup remains the same as the one introduced in Section \ref{subsec:OneBodyTimeOrderedGreensFunctions}.
\begin{definition}[Gibbs's state and Matsubara's time-ordered Green's function] Given a particle-number conserving Hamiltonian $\Hamiltonian \in \SelfAdjointOperator(\FockSpace)$, that is $\Commutator{\Hamiltonian}{\TotalNumberOperator}=0$, the \emph{Gibbs state} $\GibbsState$ at inverse temperature $\StatisticalTemperature \in \RealNumbers_+^*$ and chemical potential $\ChemicalPotential \in \RealNumbers$ is defined through its density operator by
\begin{equation*}
\DensityOperator=\frac{1}{\PartitionFunction} e^{-\StatisticalTemperature\left(\Hamiltonian-\ChemicalPotential \TotalNumberOperator\right)}, \quad \PartitionFunction= \Trace\left(e^{-\StatisticalTemperature\left(\Hamiltonian-\ChemicalPotential \TotalNumberOperator\right)}\right).
\end{equation*}
The \emph{Matsubara's Green's function} $\MatsubarasGreensFunction$ is defined as the $\LinearOperator(\OneParticleSpace)$-valued map $\MatsubarasGreensFunction:[-\StatisticalTemperature,\StatisticalTemperature) \to \LinearOperator(\OneParticleSpace)$ represented by the sesquilinear form verifying for all $\OneParticleState,\OneParticleState' \in \OneParticleSpace$,
\begin{equation}
\label{eq:DefinitionMatsubaraGreensFunction}
\forall \tau \in [-\StatisticalTemperature,\StatisticalTemperature), \quad 
\HermitianProduct{\OneParticleState}{-\MatsubarasGreensFunction(\tau)\OneParticleState'}= \CharacteristicFunction{\RealNumbers_+}(\tau) \,
\GibbsState\left(\MatsubarasPicture{\AnnihilationOperator[\OneParticleState]}(\tau) \CreationOperator[\OneParticleState']\right) - \CharacteristicFunction{\RealNumbers_-^*}(\tau) \, \GibbsState\left(\CreationOperator[\OneParticleState']\MatsubarasPicture{\AnnihilationOperator[\OneParticleState]}(\tau)\right),
\end{equation}
where for all $\Operator \in \LinearOperator(\FockSpace)$,
$$
\MatsubarasPicture{\Operator}:[-\StatisticalTemperature,\StatisticalTemperature] \ni \tau \mapsto e^{\tau(\Hamiltonian-\ChemicalPotential\TotalNumberOperator)}\Operator e^{-\tau(\Hamiltonian-\ChemicalPotential\TotalNumberOperator)}
$$
is the \emph{Matsubara picture} of $\Operator$.
\end{definition} 

Recall that in our setting, any operator is trace-class since $\FockSpace$ is finite dimensional, hence $\PartitionFunction$ is always finite.

As it is the case for the time-ordered Green's function $\TimeGreensFunction$, one can recast equation \eqref{eq:DefinitionMatsubaraGreensFunction} using the time-ordered product :
\begin{equation*}
\forall \OneParticleState,\OneParticleState' \in \OneParticleSpace, \ \forall \tau \in [-\StatisticalTemperature,\StatisticalTemperature), \quad \HermitianProduct{\OneParticleState}{-\MatsubarasGreensFunction(\tau) \OneParticleState'}=\GibbsState\left(\mathcal{T}\left(\MatsubarasPicture{\AnnihilationOperator[\OneParticleState]},\MatsubarasPicture{\CreationOperator[\OneParticleState']}\right)(\tau,0)\right).
\end{equation*}
Note that the negative sign in the definition of $\MatsubarasGreensFunction$ must be consistent with the $i$ prefactor in the definition of $\TimeGreensFunction$ for Theorem \ref{thm:MatsubaraFourierAnalyticExtension} below to hold. Considering the grand canonical Hamiltonian $\Hamiltonian'=\Hamiltonian-\ChemicalPotential\TotalNumberOperator$ as the Hamiltonian from which the Green's function $\TimeGreensFunction$ is defined, the Matsubara formalism involves the following formal connection (known as \emph{Wick rotation} \cite{wick_properties_1954}):
\begin{equation*}
\tau \leftrightarrow it
\end{equation*}
or, in other words, working with $t \leftrightarrow -i \tau$ where $\tau$ is real. Hence, the term \emph{imaginary-time} Green's function \cite{martin_interacting_2016}.

Contrary to the Heisenberg picture $\Operator \mapsto \HeisenbergPicture{\Operator}$, the Matsubara picture does not consist in a family of $C^*$-morphisms: one has for all $\tau \in [-\StatisticalTemperature,\StatisticalTemperature]$,
\begin{equation*}
\left(\MatsubarasPicture{\Operator}(\tau)\right)^\dagger = \MatsubarasPicture{\Operator^\dagger}(-\tau).
\end{equation*}
A consequence of this property is that the Matsubara's Green's function is Hermitian:
\begin{equation*}
\left(\MatsubarasGreensFunction(\tau)\right)^\dagger=\MatsubarasGreensFunction(\tau).
\end{equation*}

Note moreover that Gibbs states are \emph{\acrfull{kms}} states \cite{bratteli_operator_1997}, meaning that they satisfy the following property:
\begin{equation*}
\forall\ \Operator,\Operator' \in \LinearOperator(\FockSpace),\ \forall \tau \in [-\StatisticalTemperature,0], \quad \GibbsState(\MatsubarasPicture{\Operator}(\tau + \StatisticalTemperature)\Operator')=\GibbsState(\Operator'\MatsubarasPicture{\Operator}(\tau)).
\end{equation*}
This implies that $\MatsubarasGreensFunction$ is $\StatisticalTemperature$-\emph{anti-periodic}, i.e.
\begin{equation*}
\forall \tau \in [-\StatisticalTemperature,0), \quad \MatsubarasGreensFunction(\tau + \beta)=-\MatsubarasGreensFunction(\tau).
\end{equation*}

As for the time-ordered Green's function, the Matsubara's Green's function has a Källén-Lehmann's representation: given an orthonormal basis $\FockSpaceBasis$ of $\FockSpace$ which jointly diagonalizes $\Hamiltonian$ and $\TotalNumberOperator$ ($\forall \FockSpaceState \in \FockSpaceBasis,  \Hamiltonian \FockSpaceState = E_\FockSpaceState\FockSpaceState, \; \TotalNumberOperator\FockSpaceState= N_\FockSpaceState\FockSpaceState$), we have for all $\tau \in [0,\StatisticalTemperature)$
\begin{align*}
\HermitianProduct{\OneParticleState}{-\MatsubarasGreensFunction(\tau) \OneParticleState'} &= \sum_{\FockSpaceState,\FockSpaceState' \in \FockSpaceBasis}  e^{\tau((E_\FockSpaceState-\ChemicalPotential N_\FockSpaceState)-(E_{\FockSpaceState'} -  \ChemicalPotential N_{\FockSpaceState'}))} \HermitianProduct{\FockSpaceState}{\AnnihilationOperator[\OneParticleState]\FockSpaceState'} \HermitianProduct{\FockSpaceState'}{\CreationOperator[\OneParticleState']\FockSpaceState} e^{-\StatisticalTemperature(E_\FockSpaceState - \ChemicalPotential N_\FockSpaceState)} \\ & =\sum_{\FockSpaceState,\FockSpaceState' \in \FockSpaceBasis}  e^{\tau(E_\FockSpaceState-E_{\FockSpaceState'} + \ChemicalPotential)} \HermitianProduct{\FockSpaceState}{\AnnihilationOperator[\OneParticleState]\FockSpaceState'} \HermitianProduct{\FockSpaceState'}{\CreationOperator[\OneParticleState']\FockSpaceState} e^{-\StatisticalTemperature(E_\FockSpaceState - \ChemicalPotential N_\FockSpaceState)},
\end{align*}
as $N_{\FockSpaceState'}=N_\FockSpaceState+1$ whenever $\HermitianProduct{\FockSpaceState'}{\CreationOperator[\OneParticleState']\FockSpaceState} \neq 0$.

Similarly to the time-ordered Green's function $\TimeGreensFunction$, it is convenient to work with a Fourier representation of the Matsubara Green's function $\MatsubarasGreensFunction$.
Since the latter is defined only on $[-\StatisticalTemperature,\StatisticalTemperature)$, it is quite natural to extend $\MatsubarasGreensFunction$ to the real-line by periodicity and introduce the associated  Fourier series with coefficients defined by
\begin{equation*}
\forall n \in \RelativeIntegers, \quad \frac{1}{2}\int_{-\StatisticalTemperature}^{\StatisticalTemperature}\MatsubarasGreensFunction(\tau) e^{in\frac{\pi}{\StatisticalTemperature}\tau}d\tau.
\end{equation*}
Due to the $\StatisticalTemperature$-anti-periodicity of $\MatsubarasGreensFunction$, the even Fourier coefficients vanish and it holds
\begin{equation*}
\frac{1}{2}\int_{-\StatisticalTemperature}^{\StatisticalTemperature}\MatsubarasGreensFunction(\tau) e^{in\frac{\pi}{\StatisticalTemperature}\tau}d\tau= \left\lbrace \begin{matrix}
\int_{0}^\StatisticalTemperature \MatsubarasGreensFunction(\tau) e^{in\frac{\pi}{\StatisticalTemperature}\tau}d\tau & \text{if $n$ is odd,} \\
0 & \text{ otherwise.}
\end{matrix} \right.
\end{equation*}

\begin{definition}[Matsubara's Fourier series and frequencies] \label{def:MatsubarasFourierSeries} The \emph{Matsubara's Fourier coefficients} $\left(\MatsubaraFourierSeries_n\right)_{n \in \RelativeIntegers}$ are defined by
\begin{equation*}
\forall n \in \RelativeIntegers, \quad 
\MatsubaraFourierSeries_n=\int_0^\StatisticalTemperature \MatsubarasGreensFunction(\tau) e^{i\MatsubaraFrequency_n\tau} d\tau
\end{equation*}
where $\MatsubaraFrequency_n=(2n+1)\frac{\pi}{\StatisticalTemperature}$ is the $n$-th \emph{Matsubara's frequency}. 
\end{definition}

We thus have
\begin{equation*}
\forall \tau \in [-\StatisticalTemperature,\StatisticalTemperature), \quad
\MatsubarasGreensFunction(\tau)=\frac{1}{\StatisticalTemperature} \sum_{n \in \RelativeIntegers}e^{-i\MatsubaraFrequency_n\tau} \MatsubaraFourierSeries_n.
\end{equation*}

The very reason these coefficients are useful in Green's functions methods lies in the following theorem.
\begin{theorem}[Matsubara's Fourier coefficients analytic continuation]
\label{thm:MatsubaraFourierAnalyticExtension}
Let $\Hamiltonian \in \SelfAdjointOperator(\FockSpace)$ be a particle-number-conserving Hamiltonian, $\GibbsState$ the associated Gibbs state, and $\GreensFunction: \UpperHalfPlane\to\LinearOperator(\OneParticleSpace)$ the \acrlong{gft} of the associated time-ordered Green's function $\TimeGreensFunction$ defined from the grand canonical Hamiltonian $\Hamiltonian'=\Hamiltonian-\ChemicalPotential\TotalNumberOperator$. Then $\GreensFunction$ is the only analytic matrix-valued function such that
\begin{equation}
\label{eq:NegativeImaginaryPartG}
    \forall z\in\UpperHalfPlane, \ \Im(G(z)):=\frac{G(z)-G(z)^\dagger}{2i}\leq 0, \text{ and }
\end{equation}
\begin{equation}
\label{eq:MatsubaraInterpolation}
\forall n \in \Integers, \quad \GreensFunction(i\MatsubaraFrequency_n)= \MatsubaraFourierSeries_n.
\end{equation}
\end{theorem}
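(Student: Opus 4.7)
The plan is to split the proof into existence (that $\GreensFunction$ itself satisfies \eqref{eq:NegativeImaginaryPartG}--\eqref{eq:MatsubaraInterpolation}) and uniqueness. For existence, property \eqref{eq:NegativeImaginaryPartG} is exactly \eqref{eq:G_Pick} applied to the grand canonical Hamiltonian $\Hamiltonian'=\Hamiltonian-\ChemicalPotential\TotalNumberOperator$, whose eigenvalues are $E_\FockSpaceState-\ChemicalPotential N_\FockSpaceState$. For the interpolation identity \eqref{eq:MatsubaraInterpolation}, the strategy is to start from the Källén--Lehmann expansion of $\MatsubarasGreensFunction(\tau)$ for $\tau\in[0,\StatisticalTemperature)$ recalled just before Definition \ref{def:MatsubarasFourierSeries}, substitute it in that definition, and compute the resulting integrals $\int_0^\StatisticalTemperature e^{\tau(E_\FockSpaceState-E_{\FockSpaceState'}+\ChemicalPotential)+i\MatsubaraFrequency_n\tau}\,d\tau$ in closed form. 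Two elementary observations then make the outcome match, term by term, the KL formula \eqref{eq:KL_C+} for $\GreensFunction(i\MatsubaraFrequency_n)$ associated with $\Hamiltonian'$: the anti-periodicity identity $e^{i\MatsubaraFrequency_n\StatisticalTemperature}=-1$, and the particle-number selection rule $N_{\FockSpaceState'}=N_\FockSpaceState+1$ valid on the indices where $\HermitianProduct{\FockSpaceState'}{\CreationOperator[\OneParticleState']\FockSpaceState}\neq 0$. The former reshapes the boundary contribution into the symmetric combination $e^{-\StatisticalTemperature(E_\FockSpaceState-\ChemicalPotential N_\FockSpaceState)}+e^{-\StatisticalTemperature(E_{\FockSpaceState'}-\ChemicalPotential N_{\FockSpaceState'})}$, proportional to $\rho_\FockSpaceState+\rho_{\FockSpaceState'}$, while the latter turns the denominator into $i\MatsubaraFrequency_n+(E_\FockSpaceState-\ChemicalPotential N_\FockSpaceState)-(E_{\FockSpaceState'}-\ChemicalPotential N_{\FockSpaceState'})$.

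For uniqueness, let $F:\UpperHalfPlane\to\LinearOperator(\OneParticleSpace)$ be another analytic function satisfying both conditions. The indices $n<0$ bring no additional constraint beyond the Hermitian-reflection symmetry $\MatsubaraFourierSeries_{-n-1}=\MatsubaraFourierSeries_n^\dagger$, so I would focus on the positive Matsubara frequencies $z_n=i\MatsubaraFrequency_n$, $n\ge 0$, which all lie in $\UpperHalfPlane$. For every $\OneParticleState\in\OneParticleSpace$, the scalar maps $z\mapsto -\HermitianProduct{\OneParticleState}{F(z)\OneParticleState}$ and $z\mapsto -\HermitianProduct{\OneParticleState}{\GreensFunction(z)\OneParticleState}$ are (scalar) Pick functions that coincide on $(z_n)_{n\ge 0}$. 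The next step is to invoke the classical uniqueness theorem for Pick functions: two Pick functions that agree on a sequence $(z_n)\subset\UpperHalfPlane$ with $\sum_n \Im z_n/(1+|z_n|^2)=+\infty$ coincide on $\UpperHalfPlane$. A direct estimate gives $\Im z_n/(1+|z_n|^2)=\MatsubaraFrequency_n/(1+\MatsubaraFrequency_n^2)\sim \StatisticalTemperature/(2\pi n)$ as $n\to\infty$, so the series diverges. Polarization in $\OneParticleState,\OneParticleState'$ then upgrades the scalar equality to the matrix equality $F=\GreensFunction$ on $\UpperHalfPlane$.

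The main obstacle is the Pick uniqueness step: the Matsubara frequencies accumulate only at $\infty$ in $\overline{\UpperHalfPlane}$, so ordinary analytic continuation does not apply. To keep the argument self-contained, I would transfer the problem to the unit disk via the Cayley transform $w=(z-i)/(z+i)$, which sends $\UpperHalfPlane$ onto $\mathbb{D}$ and any Pick function $p$ to the Schur function $s=(p-i)/(p+i)$ bounded by $1$ on $\mathbb{D}$. The images $w_n=(\MatsubaraFrequency_n-1)/(\MatsubaraFrequency_n+1)$ accumulate at the boundary with $1-|w_n|\sim 2/\MatsubaraFrequency_n$, so $\sum_{n\ge 0}(1-|w_n|)=+\infty$; this is precisely the failure of the Blaschke condition, which forces any bounded analytic function on $\mathbb{D}$ vanishing at $(w_n)$ to be identically zero, and hence the two Schur (thus Pick) functions to coincide.
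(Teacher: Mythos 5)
Your existence argument (Pick property via \eqref{eq:G_Pick} for the grand canonical Hamiltonian, then a term-by-term comparison of the Källén--Lehmann expansions of $\MatsubaraFourierSeries_n$ and $\GreensFunction(i\MatsubaraFrequency_n)$ using the selection rule $N_{\FockSpaceState'}=N_\FockSpaceState+1$) is essentially the same as the paper's; the closed-form integral you compute is exactly the KL representation of $\MatsubaraFourierSeries_n$ that the paper writes out before concluding.

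Your uniqueness argument, on the other hand, takes a genuinely different and more elementary route. The paper reduces uniqueness to Theorem~\ref{thm:UniquenessNevanlinnaPickInterpolation}, the appendix result for analytic continuation problems admitting a \emph{rational} Pick solution, which is proved by an induction on the number of poles via Schur reductions; that result exploits the finite-dimensionality of $\FockSpace$ (so that the Nevanlinna--Riesz measure of $-\GreensFunction$ is a finite sum of Dirac masses) and is reused elsewhere in the paper. You instead polarize to scalar diagonal Pick functions, push to the disk via the Cayley transform, and invoke the classical fact that a nonzero $H^\infty$ function must have a Blaschke zero sequence; since $\sum_{n\ge 0}\bigl(1-|w_n|\bigr)$ diverges for the Cayley images of $i\MatsubaraFrequency_n$, the two Schur functions coincide, hence the two Pick functions coincide. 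This is correct, self-contained, and does not need $\GreensFunction$ to be rational, so it would also cover infinite-dimensional Fock spaces; what it loses is the tighter ``$K+2$ points suffice'' quantitative statement of Theorem~\ref{thm:UniquenessNevanlinnaPickInterpolation}, which the paper needs for Proposition~\ref{prop:IPTFiniteDimension}. One small remark: when you say ``ordinary analytic continuation does not apply'' because the Matsubara frequencies accumulate only at $\infty$, that is exactly why the Blaschke (non-)condition is the right tool here, and your divergence estimate $\MatsubaraFrequency_n/(1+\MatsubaraFrequency_n^2)\sim\StatisticalTemperature/(2\pi n)$ is correct.
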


Note that, since $\MatsubaraFrequency_{-n}=-\MatsubaraFrequency_{n-1}$ and $\MatsubarasGreensFunction$ is Hermitian, it holds 
\begin{equation*}
\forall n \in \Integers^*, \quad \left( \MatsubaraFourierSeries_{n-1}\right)^\dagger = \MatsubaraFourierSeries_{-n},
\end{equation*}
so that \eqref{eq:MatsubaraInterpolation} actually holds for all $n \in \RelativeIntegers$, with the extension of $\GreensFunction$ to $\LowerHalfPlane$ defined as in \eqref{eq:extension_C-}.

\begin{remark}
The requirement that $\GreensFunction$ is analytic and verifies \eqref{eq:NegativeImaginaryPartG} is crucial for uniqueness: for instance, for each $m \in 2\RelativeIntegers+1$, the function
\begin{equation*}
\UpperHalfPlane \ni z \mapsto \frac{1-e^{m\StatisticalTemperature z}}{2} \GreensFunction(z) \in \LinearOperator(\OneParticleSpace)
\end{equation*}
also satisfies \eqref{eq:MatsubaraInterpolation} and is analytic, but its imaginary part is not negative semi-definite.
\end{remark}

Theorem~\ref{thm:MatsubaraFourierAnalyticExtension} is extensively used for practical computations: it is sufficient to run the computations for the Matsubara frequencies and then perform an analytic continuation \cite{fei_analytical_2021}.

Note that Theorem \ref{thm:MatsubaraFourierAnalyticExtension} works in particular for non-interacting Hamiltonians, which leads to the following definition.

\begin{definition}[Matsubara's self-energy]
The \emph{Matsubara's self-energy Fourier coefficients} $\left(\MatsubaraSelfEnergyFourierSeries_n\right)_{n\in \RelativeIntegers}$ are defined by 
\begin{equation*}
\forall n \in \RelativeIntegers, \quad \MatsubaraSelfEnergyFourierSeries_n=i\MatsubaraFrequency_n +\ChemicalPotential -\NIHamiltonian - \left(\MatsubaraFourierSeries_n\right)^{-1}.
\end{equation*}
\end{definition}

In fact, the self-energy $\SelfEnergy: \UpperHalfPlane\to\LinearOperator(\OneParticleSpace)$ defined in \eqref{eq:self-energy} is the only analytic function with negative imaginary part such that  
\begin{equation*}
\forall n \in \Integers, \quad \SelfEnergy(i\MatsubaraFrequency_n)=\MatsubaraSelfEnergyFourierSeries_n.
\end{equation*}
This follows from Proposition \ref{prop:PickFunctionsDMFT} and Theorem \ref{thm:UniquenessNevanlinnaPickInterpolation}.
One can define the \emph{Matsubara's self-energy} $\MatsubaraSelfEnergy(\tau)$ by Fourier summation formula as in Definition \ref{def:MatsubarasFourierSeries}, but this function will not play any role in what follows.

\medskip

With all these definitions in place, we can now introduce the final ingredient of the model under investigation, namely the paramagnetic single-site translation-invariant \acrfull{ipt}.

\subsubsection{IPT approximation}

In this article, we will not discuss the derivation nor the validity of the Iterative Perturbation Theory (IPT) approximation and refer the interested reader to \cite{arsenault_benchmark_2012}. As in the review paper \cite{georges_dynamical_1996}, we only consider \emph{single-site} and \emph{translation-invariant} \acrshort{dmft}. This seems to be a setting in which the usual IPT approximation is generally considered as valid in the physics literature, while constructing an IPT-like approximation for multi-site \acrshort{dmft} is still an active field of research \cite{arsenault_benchmark_2012}. 
In the former case, $\DMFTPartition$ is a partition of the $L$ sites into $P=L$ singletons and the on-site repulsion $\OnSiteRepulsion[]$ is constant as exposed in remark \ref{rmk:TranslationInvariantDMFT}. Also, we will focus on the paramagnetic case~\cite{rozenberg_mott-hubbard_1994}. In other words, we will assume that there is no spin-symmetry breaking, so that the spin components can be factored out as detailed in Appendix \ref{appendix:SpinIndependenceIPTDMFT}. In the single-site translation invariant paramagnetic IPT-DMFT approximation considered in the sequel, we thus have $P=L$ and for $z\in\UpperHalfPlane$,

\begin{equation*}
        \NIHamiltonian_{\rm H},\ \GreensFunction^{\acrshort{dmft}}(z),\ \SelfEnergy^{\acrshort{dmft}}(z) \in \ComplexNumbers^{P \times P} .
\end{equation*}
Recall that in translation invariant \acrshort{dmft}, we restrict ourselves to translation invariant solutions to the \acrshort{dmft} equations, so that we consider only one hybridization  function $\Hybridization$ and self-energy $\SelfEnergy$, with for all $z\in\UpperHalfPlane$,
\begin{equation*}
     \Delta(z),\ \SelfEnergy(z) \in \ComplexNumbers,  
\end{equation*}
and we use the following notations
\begin{equation*}
      W^\dagger := W_p^\dagger \in \ComplexNumbers^{P-1}, \quad \NIHamiltonian_\perp := \NIHamiltonian_{\overline p} \in \ComplexNumbers^{(P-1) \times (P-1)}. 
\end{equation*}

Moreover, we stick to the \emph{half-filled} setting \cite{georges_dynamical_1996}, for which the chemical potential $\ChemicalPotential$ of the \acrlong{aim} is set to $\OnSiteRepulsion[]/2$. The Hamiltonians of interest are then on the one-hand the grand canonical Hubbard Hamiltonian $\Hamiltonian_{H}-\ChemicalPotential\TotalNumberOperator_{H}$, and on the other hand the "impurity" grand canonical Hamiltonian $\AIHamiltonian-\ChemicalPotential\TotalNumberOperator_\mathrm{imp}$ where $\TotalNumberOperator_\mathrm{imp}$ is the impurity number operator, complying with \cite[eq. 14]{georges_dynamical_1996}. 

The \acrshort{ipt} solver is based on a second order \emph{perturbation expansion} of the Matsubara's self-energy Fourier coefficients $\MatsubaraSelfEnergyFourierSeries_{\mathrm{imp},n}$ of the single-site impurity problem in the parameter $\OnSiteRepulsion[]$ : the Matsubara's self-energy Fourier coefficients of the impurity problem is approximated by 
\begin{equation*}
\forall n \in \Integers, \quad \MatsubaraSelfEnergyFourierSeries_{\mathrm{imp},n} \approx \Sigma_{\mathrm{imp},n}^{M,{\acrshort{ipt}}} :=\frac{\OnSiteRepulsion[]}{2}+\OnSiteRepulsion[]^2\int_0^\StatisticalTemperature e^{i\MatsubaraFrequency_n\tau} \left( \NIMatsubarasGreensFunction_{\mathrm{imp}}(\tau)\right)^3 d\tau,
\end{equation*}
where $\NIMatsubarasGreensFunction_{\mathrm{imp}}$ is the restriction to $\OneParticleSpace[\mathrm{imp}]$ of the Matsubara's Green's function of the non-interacting Hamiltonian $H^0_{{\rm AI}}$. The Fourier coefficients of $\NIMatsubarasGreensFunction_{\mathrm{imp}}$ are given by 
\begin{align*}
\NIMatsubaraFourierSeries_{\mathrm{imp},n}= \left(\left(\GreensFunction_{\mathrm{imp}}^0(i\MatsubaraFrequency_n)\right)^{-1} - \ChemicalPotential \right)^{-1} = \left( i\MatsubaraFrequency_n - \NIHamiltonian_{\mathrm{imp}} - \Hybridization(i\MatsubaraFrequency_n) \right)^{-1} =\left( i\MatsubaraFrequency_n - \Hybridization(i\MatsubaraFrequency_n) \right)^{-1},
\end{align*}
since $H^0_{\mathrm{imp}}=0$ and where the first equality is a consequence of a shift to enforce particle-hole symmetry \cite[p.50]{georges_dynamical_1996}. 
Finally, noticing that 
\begin{equation}
    \Hybridization(z)=W\left(z+\ChemicalPotential-\NIHamiltonian_{\perp} + \SelfEnergy(z) \right)^{-1}W^\dagger = W \left(z-\NIHamiltonian_{\perp} -(\SelfEnergy(z)-\ChemicalPotential)\right)^{-1}W^\dagger,
\end{equation}
we make the change of variable $\SelfEnergy \leftarrow \SelfEnergy - \ChemicalPotential$ and we thus have, due to the filling condition,
\begin{equation}
\SelfEnergy_{\mathrm{imp},n}^{M,\acrshort{ipt}}=\OnSiteRepulsion[]^2\int_{0}^\StatisticalTemperature e^{i\MatsubaraFrequency_n \tau} \left(\frac{1}{\StatisticalTemperature}\sum_{n' \in \RelativeIntegers}\left(i\MatsubaraFrequency_{n'} - \Hybridization (i\MatsubaraFrequency_{n'}) \right)^{-1} e^{-i\MatsubaraFrequency_{n'} \tau}  \right)^3 d \tau . \label{eq:IPT_2}
\end{equation}
The IPT approximation therefore provides an explicit expression of the Matsubara Fourier coefficients of the impurity self-energy $\SelfEnergy_{\mathrm{imp}}$ as a function of $\OnSiteRepulsion[]$ and $\Hybridization$. To reconstruct $\SelfEnergy^{\acrshort{ipt}}_{{\rm imp}}(z)$ from these Fourier coefficients, we have to solve an \acrlong{acp}. The following result shows that, in the case of finite-dimensional baths, this problem has a unique solution, which can be computed (almost) explicitly. Our results are similar to computations already obtained in \cite[eq. 4]{kajueter_new_1996}.

\begin{proposition}[\acrshort{ipt} solver for finite-dimensional baths] \label{prop:IPTFiniteDimension}
Let $U \in \RealNumbers$, $\Delta : \UpperHalfPlane \to \ComplexNumbers$ of the form
\begin{equation} \label{eq:finite_bath_Delta}
\forall z \in \UpperHalfPlane, \quad \Hybridization(z) =  
\sum_{k=1}^K \frac{a_k}{z-\eps_k}, \quad \mbox{with} \quad \eps_1 < \eps_2 < \cdots < \eps_K \mbox{ and }  a_k > 0 \mbox{ for all } 1 \le k \le K,
\end{equation}
and $(\SelfEnergy_{{\rm imp},n})_{n \in \Integers}$ defined by 
\begin{equation*}
\forall n \in \Integers, \quad \SelfEnergy_{{\rm imp},n}=\OnSiteRepulsion[]^2\int_{0}^\StatisticalTemperature e^{i\MatsubaraFrequency_n\tau}\left(\frac{1}{\StatisticalTemperature} \sum_{n' \in \RelativeIntegers}\left(i\MatsubaraFrequency_{n'} - \Hybridization(i\MatsubaraFrequency_{n'})\right)^{-1} e^{-i\MatsubaraFrequency_{n'}\tau}\right)^3 d\tau.
\end{equation*}
Then, the \acrfull{acp}
\begin{equation}
\label{eq:IPTAnalyticalContinuationProblem}
\begin{cases}
    \SelfEnergy_{\rm imp} : \UpperHalfPlane\to \overline{\UpperHalfPlane} \\ 
    \SelfEnergy_{\rm imp} \text{ is analytic} \\
    \forall n\in\Integers, \ \SelfEnergy_{\rm imp}(i\MatsubaraFrequency_n) = \SelfEnergy_{{\rm imp},n}
\end{cases}
\end{equation}
has a unique solution, which is given by 
\begin{equation} \label{eq:IPT_finite_bath}
\forall z \in \UpperHalfPlane, \quad \SelfEnergy_{\rm imp}^{\acrshort{ipt}}(z)= \OnSiteRepulsion[]^2\sum_{k_1,k_2,k_3=1}^{K+1}\frac{a'_{k_1,k_2,k_3}}{z-\left(\eps_{k_1}'+\eps_{k_2}'+\eps_{k_3}'\right)},
\end{equation}
where $\eps_1' < \eps_2' < \cdots < \eps'_{K+1}$ are the $(K+1)$ real roots of the equation
\begin{equation*}
\eps-\Hybridization(\eps)=0,
\end{equation*}
which satisfy the interlacing relation 
$$
\eps_1' < \eps_1 < \eps_2' < \eps_2 < \cdots < \eps_{K} < \eps'_{K+1},
$$
and for all $k_1,k_2,k_3 \in \IntSubSet{1}{K+1}$, $a_{k_1,k_2,k_3}$ is defined by
\begin{equation}
\label{eq:FormulaResiduesSelfEnergy}
a'_{k_1,k_2,k_3}=\frac{1+e^{-\StatisticalTemperature(\eps_{k_1}'+\eps_{k_2}'+\eps_{k_3}')}}{\left(1+e^{-\StatisticalTemperature \eps_{k_1}'}\right)\left(1+e^{-\StatisticalTemperature \eps_{k_2}'}\right)\left(1+e^{-\StatisticalTemperature \eps_{k_3}'}\right)} \prod_{i=1}^3\left(1-\Hybridization'(\eps_{k_i}')\right)^{-1} >0.
\end{equation}
\end{proposition}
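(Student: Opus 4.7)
The plan is to turn the IPT formula \eqref{eq:IPT_2} into an explicit partial fraction expansion by: (i) locating the poles of $(z-\Hybridization(z))^{-1}$ on the real axis, (ii) computing the corresponding imaginary-time Green's function $\NIMatsubarasGreensFunction_{\mathrm{imp}}$ via the single-pole Matsubara sum, (iii) cubing it and integrating against $e^{i\MatsubaraFrequency_n\tau}$, and (iv) invoking Nevanlinna-Pick uniqueness to identify the resulting rational function with the ACP solution.

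First, I would analyse $\phi(\eps):=\eps-\Hybridization(\eps)$ on the real axis. Its derivative $\phi'(\eps)=1-\Hybridization'(\eps)=1+\sum_{k=1}^{K}a_k/(\eps-\eps_k)^2\geq 1$ wherever defined, so $\phi$ is strictly increasing on each of the $K+1$ open intervals composing $\RealNumbers\setminus\{\eps_1,\dots,\eps_K\}$. Inspecting the limits ($\phi\to-\infty$ at $-\infty$ and at every $\eps_k^+$, $\phi\to+\infty$ at every $\eps_k^-$ and at $+\infty$) shows that $\phi$ takes every real value exactly once per interval, producing the $K+1$ real roots $\eps_k'$ together with the claimed interlacing and the factorisation
\begin{equation*}
z - \Hybridization(z) \;=\; \frac{\prod_{k=1}^{K+1}(z-\eps_k')}{\prod_{k=1}^{K}(z-\eps_k)}.
\end{equation*}
Partial fractions then yield $(z-\Hybridization(z))^{-1}=\sum_{k=1}^{K+1}b_k/(z-\eps_k')$ with $b_k:=(1-\Hybridization'(\eps_k'))^{-1}>0$, by L'Hôpital and $\Hybridization'<0$ off the poles.

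Next, I would invert Matsubara's Fourier series term by term using the elementary identity
\begin{equation*}
\frac{1}{\StatisticalTemperature}\sum_{n\in\Integers}\frac{e^{-i\MatsubaraFrequency_n\tau}}{i\MatsubaraFrequency_n-\eps}\;=\;-\frac{e^{-\tau\eps}}{1+e^{-\StatisticalTemperature\eps}},\qquad \tau\in(0,\StatisticalTemperature),
\end{equation*}
which is Theorem~\ref{thm:MatsubaraFourierAnalyticExtension} applied to the single-site non-interacting Hamiltonian $\eps\,\widehat n$. This gives $\NIMatsubarasGreensFunction_{\mathrm{imp}}(\tau)=-\sum_{k=1}^{K+1}b_k\,e^{-\tau\eps_k'}/(1+e^{-\StatisticalTemperature\eps_k'})$ on $(0,\StatisticalTemperature)$. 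Cubing, expanding as a triple sum in $(k_1,k_2,k_3)$, setting $E:=\eps'_{k_1}+\eps'_{k_2}+\eps'_{k_3}$, and using $e^{i\MatsubaraFrequency_n\StatisticalTemperature}=-1$ to evaluate $\int_0^{\StatisticalTemperature} e^{(i\MatsubaraFrequency_n-E)\tau}\,d\tau=-(1+e^{-\StatisticalTemperature E})/(i\MatsubaraFrequency_n-E)$, the two sign factors cancel and produce
\begin{equation*}
\SelfEnergy_{{\rm imp},n}\;=\;\OnSiteRepulsion[]^2\sum_{k_1,k_2,k_3=1}^{K+1}\frac{a'_{k_1,k_2,k_3}}{i\MatsubaraFrequency_n-(\eps'_{k_1}+\eps'_{k_2}+\eps'_{k_3})},
\end{equation*}
with $a'_{k_1,k_2,k_3}$ exactly the expression in \eqref{eq:FormulaResiduesSelfEnergy}; positivity is immediate from $b_k>0$ and positivity of the exponential factors.

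Finally, define $\SelfEnergy^{\mathrm{IPT}}_{\mathrm{imp}}(z):=\OnSiteRepulsion[]^2\sum_{k_1,k_2,k_3}a'_{k_1,k_2,k_3}/(z-(\eps'_{k_1}+\eps'_{k_2}+\eps'_{k_3}))$. Because each $a'_{k_1,k_2,k_3}>0$, this function is meromorphic on $\ComplexNumbers$ with real simple poles and satisfies $\Im\SelfEnergy^{\mathrm{IPT}}_{\mathrm{imp}}(z)<0$ on $\UpperHalfPlane$; by construction it interpolates $(\SelfEnergy_{{\rm imp},n})_{n\in\Integers}$ at the Matsubara frequencies, hence solves \eqref{eq:IPTAnalyticalContinuationProblem}. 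Uniqueness is then the same Pick-function argument recalled after Definition~\ref{def:MatsubarasFourierSeries}, i.e.\ Proposition~\ref{prop:PickFunctionsDMFT} combined with Theorem~\ref{thm:UniquenessNevanlinnaPickInterpolation}. I expect the only delicate point to be the sign bookkeeping: the fermionic relation $e^{i\MatsubaraFrequency_n\StatisticalTemperature}=-1$ enters both in the single-pole inverse Matsubara sum and in the forward time integral, and the two $(-1)$'s must conspire to yield positive residues; the remaining ingredients (interlacing, partial fractions, Pick-function uniqueness) are standard.
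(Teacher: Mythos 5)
Your proposal is correct, and it takes a genuinely different route from the paper at the central computational step. The paper lifts $(z-\Hybridization(z))^{-1}$ to the $(1,1)$-entry of the resolvent of an auxiliary bordered Hermitian matrix $\NIHamiltonian_{\mathrm{AIM}} \in \ComplexNumbers^{(K+1)\times(K+1)}$ (the ``star'' Hamiltonian with impurity coupled to $K$ bath levels), performs the Matsubara sum by matrix functional calculus, then diagonalizes $\NIHamiltonian_{\mathrm{AIM}}=P\diag(\eps'_1,\dots,\eps'_{K+1})P^\dagger$ and explicitly computes the eigenvector weights $|P_{1,k}|^2=(1-\Hybridization'(\eps'_k))^{-1}$. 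You instead treat $(z-\Hybridization(z))^{-1}$ directly as a scalar rational function: you establish the $K+1$ real roots and the interlacing by elementary monotonicity of $\eps\mapsto\eps-\Hybridization(\eps)$ on each of the $K+1$ intervals, clear denominators to get the monic factorisation, read off the residues $b_k=(1-\Hybridization'(\eps'_k))^{-1}$ by L'H\^opital, then apply the scalar single-pole Matsubara-sum identity term by term. The two routes are of course equivalent in substance --- the partial fraction expansion \emph{is} the spectral decomposition of $(z-\NIHamiltonian_{\mathrm{AIM}})^{-1}_{1,1}$ --- but yours is more elementary: it never introduces the auxiliary matrix, gets the interlacing for free from the intermediate value theorem (the paper leaves that to Lemma~\ref{lem:IncreaseNumberOfPoles}, proved elsewhere), and is closer in spirit to the verification the paper later performs at the end of the proof of Proposition~\ref{prop:DefinitionIPTmap}. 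What the paper's route buys in exchange is the physical interpretation ($(z-\Hybridization(z))^{-1}$ as an actual non-interacting resolvent of a concrete $(K+1)$-site Anderson model), which the authors explicitly say they want to highlight. The sign bookkeeping you flag as delicate is indeed the only place to be careful, and you handle it correctly: the single-pole inverse Matsubara sum carries a $-e^{-\tau\eps}/(1+e^{-\StatisticalTemperature\eps})$, so the cube carries a $(-1)^3=-1$, and $\int_0^\StatisticalTemperature e^{(i\MatsubaraFrequency_n-E)\tau}d\tau=-\frac{1+e^{-\StatisticalTemperature E}}{i\MatsubaraFrequency_n-E}$ via $e^{i\MatsubaraFrequency_n\StatisticalTemperature}=-1$, and the two minus signs cancel to give the positive residues $a'_{k_1,k_2,k_3}$. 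The final uniqueness step (Proposition~\ref{prop:PickFunctionsDMFT} plus Theorem~\ref{thm:UniquenessNevanlinnaPickInterpolation}) coincides with the paper's.
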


We denote by 
\begin{equation*}
\SelfEnergy_{\rm imp}^{\acrshort{ipt}}=\IPTSolver_\beta(\OnSiteRepulsion[],\Hybridization)
\end{equation*}
the output of this solver. At this time, this solver is defined only for finite baths, that is for hybridization functions that are rational functions of the form \eqref{eq:finite_bath_Delta}. We will see later (in Proposition~\ref{prop:DefinitionIPTmap}) that this map can be extended by (weak) continuity to the space of all physically admissible hybridization functions. We thus finally obtain the system of translation invariant paramagnetic single-site \acrshort{ipt}-\acrshort{dmft} equations 
\begin{align}
 \forall z  \in \UpperHalfPlane, \quad \Hybridization(z)&=W\left(z-\NIHamiltonian_{\perp} - \SelfEnergy(z) \right)^{-1}W^\dagger \label{eq:IPT-DMFT_eq1} \\
\SelfEnergy&=\IPTSolver_\StatisticalTemperature(\OnSiteRepulsion[],\Hybridization) \label{eq:IPT-DMFT_eq2},
\end{align}
where the on-site interaction energies $U\in \RealNumbers$, the inverse temperature $\StatisticalTemperature > 0$, the vector $W^\dagger \in \ComplexNumbers^{P-1}$ and the matrix $H^0_{\perp} \in \ComplexNumbers^{(P-1)\times(P-1)}$ obtained from the hopping matrix $T$ are the parameters of the model, and where $\Delta:\UpperHalfPlane \to \ComplexNumbers$ and $\Sigma:\UpperHalfPlane \to \ComplexNumbers$ are the unknowns.

In the remainder of this article, our main focus will be the existence of solutions to the above equations.
\section{Main results}
\label{sec:MainResults}

Let us introduce and recall some useful notation. We denote by 
$$
\UpperHalfPlane:= \{z\in\ComplexNumbers \ | \ \Im(z)>0\}
$$
the complex open upper-half plane, that is the set of complex numbers with positive imaginary part. For a matrix $M\in\ComplexNumbers^{n\times n}$, $n\geq 1$, the imaginary part of $M$ is defined by
\begin{equation*}
	\Im(M) = \frac{M-M^\dagger}{2i}.
\end{equation*}
The set of Hermitian matrices of size $n$ is denoted by $\SelfAdjointOperator_n(\ComplexNumbers)$, and the set of positive-semidefinite matrices by $\SelfAdjointOperator_n^+(\ComplexNumbers)$. The notation $M \ge 0$ (resp. $M > 0$) means that the matrix $M$ is positive semidefinite (resp. positive definite). In the following, we will also deal with measure and probability theory. The set of finite signed Borel measures on $\RealNumbers$ is denoted by $\FiniteMeasures$. The subset $\PositiveFiniteMeasures\subset\FiniteMeasures$ is the set of finite positive Borel measures on $\RealNumbers$, and finally, the subset $\ProbabilityMeasures\subset\PositiveFiniteMeasures$ denotes the set of Borel probability measures on $\RealNumbers$. 

For a positive Borel measure $\mu$ on $\RealNumbers$, we say that $\mu$ has a finite moment of order $k\in\Integers$ if $\int_\RealNumbers|\eps|^k d\mu(\eps)<\infty$. In this case, we denote by $m_k(\mu)=\int_\RealNumbers \eps^k d\mu(\eps)$ its $k$-th moment. In particular, $\mu$ is finite if and only if it has a finite moment of order 0. In this case, $\mu\in\PositiveFiniteMeasures$ and its 0-th moment is called the \emph{mass} of $\mu$, denoted by $\mu(\RealNumbers)=m_0(\mu)=\int_\RealNumbers d\mu$. These notation and considerations extend to the case of matrix-valued measures, as will be discussed in the next section.

\subsection{Pick functions}
\label{sec:PickFunctions}

Our mathematical framework intersects with the realm of complex analysis pioneered by Pick \cite{pick_uber_1915} and Nevanlinna \cite{nevanlinna_uber_1919}, focusing on the study of so-called Pick functions.	A \emph{Pick function} is a map $f : \UpperHalfPlane\to\overline{\UpperHalfPlane}$ which is analytic. In this article, we use the term Pick functions, but several terminologies coexist in the literature: Nevanlinna functions, Herglotz functions, Riesz functions, or $R$-functions A \emph{Pick matrix} is an analytic map $f:\UpperHalfPlane\to\ComplexNumbers^{n\times n}$, $n\geq 1$, such that for all $z\in\UpperHalfPlane, \Im(f(z))\geq 0$ (i.e. $\Im(f(z))\in\SelfAdjointOperator_n^+(\ComplexNumbers)$).
Sometimes, it is convenient to extend Pick matrices to the lower-half-plane $\LowerHalfPlane$. In this case, the usual convention is to set for all $z \in \LowerHalfPlane, f(z)=f(\overline{z})^\dagger$ \cite{gesztesy_matrixvalued_2000}.
One of the most important results about Pick functions is that they have an integral representation.

\begin{theorem}[Nevanlinna-Riesz representation theorem \cite{gesztesy_matrixvalued_2000}]
\label{thm:NevanlinnaRepresentation}
	Let $f : \UpperHalfPlane\to \ComplexNumbers^{n \times n}$ be a Pick matrix. 
	There exist $a\in\SelfAdjointOperator_n^+(\ComplexNumbers)$, $b\in\SelfAdjointOperator_n(\ComplexNumbers)$ and $\mu$ a Borel $\SelfAdjointOperator_n^+(\ComplexNumbers)$-valued measure on $\RealNumbers$ such that $(1+|\eps|)^{-1}$ is $\mu$-integrable and
	\begin{equation}
	\label{eq:NevanlinnaRepresentation}
	\forall z\in\UpperHalfPlane, \quad 	f(z) = az+b+\int_\RealNumbers \left(\dfrac{1}{\eps-z}-\dfrac{\eps}{1+\eps^2}\right)d\mu(\eps).
	\end{equation}
	The measure $\mu$ is called the Nevanlinna-Riesz measure of $f$, and $a = \underset{y\to +\infty}{\lim} \dfrac{1}{iy} f(iy)$, $b = \Re(f(i)):=(f(i)+f(i)^\dagger)/2$.	
	In the particular case of Pick functions, i.e. $n=1$, we have $a\geq 0$, $b\in\RealNumbers$, and $\mu$ is a positive Borel measure on $\RealNumbers$, with the same integrability condition.
\end{theorem}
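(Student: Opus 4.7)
The plan is to reduce the matrix-valued statement to the classical scalar case ($n=1$), and then reassemble the matrix-valued data by polarization. For any $v \in \ComplexNumbers^n$, the map $f_v : z \mapsto \HermitianProduct{v}{f(z)v}$ is analytic on $\UpperHalfPlane$ and satisfies $\Im(f_v(z)) = \HermitianProduct{v}{\Im(f(z))v} \geq 0$ by the Pick matrix hypothesis, hence $f_v$ is a scalar Pick function. It therefore suffices to establish the scalar representation and then promote it to matrices in a controlled way.

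For the scalar case, the standard route is to transport the problem to the unit disk through the Cayley conformal map $\varphi : w \mapsto i(1-w)/(1+w)$, which sends the open unit disk onto $\UpperHalfPlane$. Setting $g(w) := -i f_v(\varphi(w))$ produces an analytic function on the unit disk with $\Re(g) \geq 0$. The classical Riesz--Herglotz representation then yields a finite positive Borel measure $\nu_v$ on $[0,2\pi)$ and a real constant $c_v$ with
\begin{equation*}
g(w) = i c_v + \int_0^{2\pi} \frac{e^{i\theta}+w}{e^{i\theta}-w}\, d\nu_v(\theta).
\end{equation*}
Pulling back through $\varphi^{-1}$, I would split off the point mass of $\nu_v$ at $\theta = \pi$ (which corresponds to the boundary point $z = \infty$ and accounts for the linear term $a_v z$ with $a_v \geq 0$), and reparametrize the remainder by $\eps = \cot(\theta/2) \in \RealNumbers$. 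The subtraction $-\eps/(1+\eps^2)$ in \eqref{eq:NevanlinnaRepresentation} is precisely what compensates the non-integrability of $\eps \mapsto 1/(\eps - z)$ at infinity, so that the representation converges under the weak integrability $\int_\RealNumbers (1+|\eps|)^{-1} d\mu_v(\eps) < \infty$ inherited from the finiteness of $\nu_v$.

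To recover the matrix-valued version, I would apply sesquilinear polarization to the identity $f_v(z) = \HermitianProduct{v}{f(z)v}$: the map $v \mapsto a_v$ is a nonnegative quadratic form and $v \mapsto b_v$ a Hermitian quadratic form, so they define $a \in \SelfAdjointOperator_n^+(\ComplexNumbers)$ and $b \in \SelfAdjointOperator_n(\ComplexNumbers)$. For the measure part, polarization of the family $(\mu_v)_{v \in \ComplexNumbers^n}$ yields, for each Borel set $A \subset \RealNumbers$, a sesquilinear form $(v,w) \mapsto \mu(A)[v,w]$ whose matrix representative $\mu(A) \in \SelfAdjointOperator_n^+(\ComplexNumbers)$ satisfies $\HermitianProduct{v}{\mu(A)v} = \mu_v(A)$. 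Countable additivity of $A \mapsto \mu(A)$ follows from that of each scalar $\mu_v$ together with the polarization formula, and positivity propagates from the diagonal forms $\mu_v$ via the Gram-matrix structure.

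Finally, the identification of $a$ and $b$ falls out of \eqref{eq:NevanlinnaRepresentation} directly. Evaluating at $z = i$ and using $\Re(1/(\eps - i)) = \eps/(1+\eps^2)$, the real part of the integrand cancels pointwise, yielding $\Re(f(i)) = b$. Dividing by $iy$ and sending $y \to +\infty$, the integrand $(1/(\eps - iy) - \eps/(1+\eps^2))/(iy)$ tends to zero pointwise and is dominated by $C/(1+|\eps|)$, an integrable bound with respect to $\mu$, so dominated convergence gives $\lim_{y \to +\infty} (iy)^{-1} f(iy) = a$. The main technical obstacle is the matrix-measure construction: one must verify that the polarization procedure delivers a genuine countably additive $\SelfAdjointOperator_n^+(\ComplexNumbers)$-valued measure rather than merely a family of scalar measures, and that the interchange of summation and integration used to validate \eqref{eq:NevanlinnaRepresentation} for the polarized quantities is justified by the uniform $(1+|\eps|)^{-1}$-integrability that emerges from the scalar case.
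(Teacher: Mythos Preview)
The paper does not prove this theorem; it is stated with a citation to \cite{gesztesy_matrixvalued_2000} and invoked as a known result throughout. Your sketch follows the standard route---reduction to scalar Pick functions via the quadratic forms $f_v$, Cayley transfer to the disk, Riesz--Herglotz representation, pullback with separation of the point mass at $\theta=\pi$, and polarization to rebuild the matrix-valued measure---and is correct in outline.

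Two minor corrections. First, the integrability condition that actually emerges from finiteness of $\nu_v$ on the circle is $\int_\RealNumbers (1+\eps^2)^{-1}\,d\mu_v(\eps)<\infty$, since the change of variables $\eps\leftrightarrow\theta$ has Jacobian comparable to $(1+\eps^2)^{-1}$; the condition $(1+|\eps|)^{-1}$ in the statement is likely a typo in the paper, and your argument yields only the weaker (and standard) condition. Second, with your Cayley map $\varphi(w)=i(1-w)/(1+w)$ one has $\varphi(e^{i\theta})=\tan(\theta/2)$, so the boundary reparametrization is $\eps=\tan(\theta/2)$ rather than $\cot(\theta/2)$; this is cosmetic. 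The polarization step you flag as the main technical point is indeed where the work lies, and your description of how countable additivity and positive-semidefiniteness transfer is accurate.
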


The following theorem extends to Pick matrices a result on Pick functions which can be found in~\cite{eckhardt_continued_2022} and~\cite[Theorem 3.2.1]{akhiezer_classical_2020}. It states that the moments of the Nevanlinna-Riesz measure of a Pick function or matrix are related to its expansion on the imaginary axis at $+\infty$.

\begin{theorem}
\label{thm:MomentsAsymptoticExpansion}
	Let $f:\UpperHalfPlane\to\ComplexNumbers^{n\times n}$ be a Pick matrix and $\mu$ its Nevanlinna-Riesz measure. Let $n\in\Integers$. The function $f$ satisfies:
	\begin{equation}
	\label{eq:AsymptoticExpansion}
		-f(iy)  = m_{-2} (iy) + m_{-1} + \frac{m_0}{iy} + \frac{m_1}{(iy)^2} +\frac{m_2}{(iy)^3}+\dots+\frac{m_{2n}}{(iy)^{2n+1}}+o_{y \to +\infty} \left(\frac{1}{y^{2n+1}}\right)
	\end{equation}
	if and only if $\mu$ has finite moments of order less than or equal to $2n$, i.e. for all $x\in\ComplexNumbers^n$, $\HermitianProduct{x}{\int_\RealNumbers |\eps|^kd\mu(\eps)x}<\infty$ for $0\leq k\leq 2n$. For $0\leq k \leq 2n$, the coefficient $m_k$ is then the $k$-th moment of $\mu$, i.e. $m_k = \int_\RealNumbers \eps^kd\mu(\eps)\in\SelfAdjointOperator_n(\ComplexNumbers)$.
\end{theorem}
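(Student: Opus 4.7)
The plan is to compare the Nevanlinna-Riesz representation \eqref{eq:NevanlinnaRepresentation} evaluated at $z=iy$ with the expansion \eqref{eq:AsymptoticExpansion}, using the elementary algebraic identity
\begin{equation*}
\frac{1}{\eps-iy} \;=\; -\sum_{k=0}^{N-1}\frac{\eps^{k}}{(iy)^{k+1}} + \frac{\eps^{N}}{(iy)^{N}(\eps-iy)}, \qquad N\ge 1,
\end{equation*}
which follows from the finite geometric series. Since for every $x\in\ComplexNumbers^{n}$ the scalar map $z\mapsto \HermitianProduct{x}{f(z)x}$ is a Pick function with Nevanlinna-Riesz measure $\HermitianProduct{x}{d\mu\,x}$, polarization and the arbitrariness of $x$ reduce both implications to the scalar case ($n=1$), on which I focus.

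The sufficient direction is the easier one. Assuming $\mu$ admits finite moments up to order $2n$ (so that $\mu$ is in particular finite), insertion of the identity with $N=2n+1$ into \eqref{eq:NevanlinnaRepresentation} produces an exact finite expansion of $-f(iy)$ whose remainder is proportional to $(iy)^{-(2n+1)}\int_{\RealNumbers}\eps^{2n+1}/(\eps-iy)\,d\mu$. Its integrand is dominated by $|\eps|^{2n}$ uniformly in $y$ (using $|\eps-iy|\ge|\eps|$) and tends pointwise to $0$, so dominated convergence gives a factor of $o(1)$. This yields \eqref{eq:AsymptoticExpansion} with $m_{-2}=-a$, $m_{-1}=-b+\int \eps/(1+\eps^{2})\,d\mu$ and $m_{k}=m_{k}(\mu)$ for $0\le k\le 2n$.

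The necessary direction, where the real work lies, proceeds by induction on $n\ge 0$. For the base case, taking imaginary parts in \eqref{eq:NevanlinnaRepresentation} gives $\Im f(iy)=ay+y\int_{\RealNumbers}d\mu(\eps)/(\eps^{2}+y^{2})$; matching with the imaginary part of the assumed expansion forces $a=-m_{-2}$ together with $\int_{\RealNumbers} y^{2}/(\eps^{2}+y^{2})\,d\mu=m_{0}+o(1)$, so monotone convergence (the integrand increases pointwise to $1$) yields $\mu(\RealNumbers)=m_{0}<\infty$. For the inductive step, knowing $\mu$ has moments up to order $2n-2$, I apply the identity with $N=2n-1$ (whose remainder integrand is dominated by $|\eps|^{2n-2}$ and is therefore $\mu$-integrable by induction), substitute into \eqref{eq:NevanlinnaRepresentation}, and use the already-identified lower moments to obtain
\begin{equation*}
\int_{\RealNumbers}\frac{\eps^{2n-1}}{\eps-iy}\,d\mu \;=\; \frac{i\,m_{2n-1}}{y}+\frac{m_{2n}}{y^{2}}+o(1/y^{2}).
\end{equation*}
Its real part reads $\int_{\RealNumbers}\eps^{2n}y^{2}/(\eps^{2}+y^{2})\,d\mu=m_{2n}+o(1)$, and since $\eps^{2n}\ge 0$, monotone convergence delivers $\int_{\RealNumbers}\eps^{2n}\,d\mu=m_{2n}<\infty$. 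The moment of order $2n-1$ is then finite via the bound $|\eps|^{2n-1}\le 1+\eps^{2n}$, and dominated convergence applied to the imaginary part identifies it as $m_{2n-1}$.

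The main obstacle is clearly the necessary direction: one must extract the moments from asymptotic coefficients without any \emph{a priori} integrability of $\eps^{k}$ against $\mu$. What makes the scheme work is the asymmetry between even and odd moments: \emph{even} moments are captured by integrals with positive-sign integrands $\eps^{2k}y^{2}/(\eps^{2}+y^{2})$ for which monotone convergence applies, whereas odd moments can only be retrieved afterwards by dominated convergence, once a higher-order even moment has been secured. This is precisely why the induction naturally increments by two.
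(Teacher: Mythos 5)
Your reduction to the scalar case (polarization: $f$ is Pick iff every $f_x = \HermitianProduct{x}{f(\cdot)x}$ is Pick, with scalar Nevanlinna-Riesz measure $\HermitianProduct{x}{d\mu\, x}$) is exactly the paper's argument. Beyond that point the two proofs diverge: the paper disposes of the scalar statement by citing Eckhardt--Teschl and Akhiezer (Theorem 3.2.1) and stops, whereas you actually \emph{prove} the scalar Hamburger--Nevanlinna asymptotic-moment equivalence from scratch. Your argument is correct. The sufficient direction is a routine application of the finite geometric series identity with remainder $(iy)^{-(2n+1)}\int \eps^{2n+1}(\eps-iy)^{-1}\,d\mu$, controlled by dominated convergence since $|\eps^{2n+1}/(\eps-iy)|\le|\eps|^{2n}$. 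The heart is your necessary direction: the base case pins down $a=-m_{-2}$ and then delivers $m_0=\mu(\RealNumbers)<\infty$ by monotone convergence applied to $y^2/(\eps^2+y^2)\nearrow 1$; the inductive step applies the identity at $N=2n-1$ (whose remainder integrand is now integrable by the induction hypothesis), isolates $\int\eps^{2n-1}/(\eps-iy)\,d\mu = im_{2n-1}/y + m_{2n}/y^2 + o(1/y^2)$, and then extracts the \emph{even} moment $m_{2n}$ first from the real part via monotone convergence (integrand $\eps^{2n}y^2/(\eps^2+y^2)\nearrow\eps^{2n}\ge 0$), after which the odd moment $m_{2n-1}$ falls out of the imaginary part by dominated convergence using $|\eps|^{2n-1}\le 1+\eps^{2n}$. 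Your closing remark correctly identifies the key structural reason the induction advances by two: positivity of even powers enables monotone convergence where no a priori integrability is available, and odd moments are recoverable only after a higher even moment has been secured. What the paper buys by citing is brevity; what you buy is a self-contained, elementary proof whose mechanism is fully visible. The only point I would tighten is at the very start of the base case: before equating the coefficient of $y$ in the imaginary parts, one should first dispose of the \emph{real} part of the $O(y)$ term (comparing the real parts of both sides shows $\Im(m_{-2})=0$, so $m_{-2}\in\RealNumbers$) — a one-line remark, but worth stating since the expansion coefficients are a priori complex.
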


\begin{proof}
	The result for Pick functions can be found in \cite{eckhardt_continued_2022} and~\cite[Theorem 3.2.1]{akhiezer_classical_2020}. To extend the result to Pick matrices, it suffices to notice  $f$ is a Pick matrix if and only if for all $x\in\ComplexNumbers^n$, the map $f_x:z\in\UpperHalfPlane\mapsto \HermitianProduct{x}{f(z)x}$ is a Pick function.
\end{proof}

As mentioned in the previous section, Pick matrices are related to the study of Green's functions methods in general, and to \acrshort{dmft} in particular because of the following result.

\begin{proposition}[$-\GreensFunction,-\Hybridization,-\SelfEnergy$ are Pick]
\label{prop:PickFunctionsDMFT} $\;$
\begin{enumerate}
\item Let $\Hamiltonian$ be a Hamiltonian on ${\rm Fock}(\ComplexNumbers^n)$, the Fock space associated to $\OneParticleSpace\simeq \ComplexNumbers^n$, and $\GreensFunction : \UpperHalfPlane \to \ComplexNumbers^{n \times n}$  the one-body Green's function of $\Hamiltonian$ in an equilibrium state. Then $-\GreensFunction$ is a Pick matrix.
\item Let $\Hamiltonian$ be a Hamiltonian  on ${\rm Fock}(\ComplexNumbers^n)$, with non-interacting Hamiltonian $\SecondQuantization{\NIHamiltonian}$, $\GreensFunction^0$ the one-body Green's function of $\SecondQuantization{\NIHamiltonian}$, $\GreensFunction$ the one-body Green's function of $\Hamiltonian$ in an equilibrium state of $\Hamiltonian$, and $\SelfEnergy : \UpperHalfPlane \to \ComplexNumbers^{n \times n}$ the self-energy defined by
 \begin{equation*}
     \SelfEnergy(z) := \GreensFunction^0(z)^{-1} - \GreensFunction(z)^{-1}.
 \end{equation*}
 Then $-\Sigma$ is a Pick matrix.
	\item Let $\Hybridization :\UpperHalfPlane \to \ComplexNumbers^{n \times n}$ be the hybridization function of some \acrfull{aim} with impurity one-particle state space $\ComplexNumbers^n$. Then $-\Hybridization$ is a Pick matrix.
\end{enumerate}
\end{proposition}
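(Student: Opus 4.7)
I would handle the three claims separately, Parts~1 and 3 being immediate and Part~2 containing the substance.

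\emph{Part 1.} Analyticity of $\GreensFunction$ on $\UpperHalfPlane$ is built into Definition~\ref{def:GreensFunctionGFT}, and \eqref{eq:G_Pick} already shows $\Im\HermitianProduct{\OneParticleState}{\GreensFunction(z)\OneParticleState}<0$ for $\OneParticleState\neq 0$ and $z\in\UpperHalfPlane$, so $\Im(-\GreensFunction(z))>0$.
\emph{Part 3.} By Definition~\ref{def:HybridizationAIM}, $\Hybridization(z)=V(z-\NIHamiltonian_{\mathrm{bath}})^{-1}V^\dagger$ is analytic on $\UpperHalfPlane$ (since $\NIHamiltonian_{\mathrm{bath}}$ is self-adjoint), and $\Im(\Hybridization(z)) = -\Im(z)\,MM^\dagger \le 0$ with $M = V(z-\NIHamiltonian_{\mathrm{bath}})^{-1}$.

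\emph{Part 2.} By Proposition~\ref{prop:NonInteractingGreensFunction} and the definition of $\SelfEnergy$, $\SelfEnergy(z) = (z-\NIHamiltonian) - \GreensFunction(z)^{-1}$, which is analytic on $\UpperHalfPlane$ since $\GreensFunction(z)$ is invertible there (Part~1). The identity $\Im(M^{-1}) = -M^{-1}\Im(M)(M^{-1})^\dagger$ then yields
\begin{equation*}
\Im(\SelfEnergy(z)) = \Im(z)\,I + \GreensFunction(z)^{-1}\Im(\GreensFunction(z))(\GreensFunction(z)^{-1})^\dagger,
\end{equation*}
and testing on vectors $\OneParticleState = \GreensFunction(z)^\dagger \psi$ (which cover $\ComplexNumbers^n$), the Pick property of $-\SelfEnergy$ becomes the matrix inequality
\begin{equation*}
\Im(z)\,\GreensFunction(z)\GreensFunction(z)^\dagger \le -\Im(\GreensFunction(z)), \qquad z\in\UpperHalfPlane.
\end{equation*}

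To prove this inequality, I would recast \eqref{eq:KL_C+} in Nevanlinna-Riesz form: there exists a finite positive $\SelfAdjointOperator_n^+(\ComplexNumbers)$-valued Borel measure $\rho$ on $\RealNumbers$ such that $\GreensFunction(z) = \int_\RealNumbers (z-\eps)^{-1}\,d\rho(\eps)$. The decisive normalization $\rho(\RealNumbers) = I$ is the CAR in disguise: by Theorem~\ref{thm:MomentsAsymptoticExpansion} it equals the coefficient of $1/z$ in the expansion of $\GreensFunction(z)$ at $i\infty$, which \eqref{eq:KL_C+} telescopes to $\EquState(\AntiCommutator{\AnnihilationOperator[\OneParticleState]}{\CreationOperator[\OneParticleState']}) = \HermitianProduct{\OneParticleState}{\OneParticleState'}$. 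For a purely atomic $\rho = \sum_k P_k\delta_{\eps_k}$ with $P_k \ge 0$ and $\sum_k P_k = I$, I would introduce
\begin{equation*}
\tilde T\,\psi := \left(\frac{P_k^{1/2}\psi}{\bar z-\eps_k}\right)_k \in \bigoplus_k \ComplexNumbers^n, \qquad S\,(v_k)_k := \sum_k P_k^{1/2}\,v_k,
\end{equation*}
so that $\tilde T^*\tilde T = -\Im(\GreensFunction(z))/\Im(z)$, $S\tilde T = \GreensFunction(z)^\dagger$, and $SS^* = I$. Since $S$ is then a co-isometry, $S^*S$ is an orthogonal projection, hence $S^*S \le I$, and
\begin{equation*}
\GreensFunction(z)\GreensFunction(z)^\dagger = \tilde T^* S^*S\,\tilde T \le \tilde T^*\tilde T = -\Im(\GreensFunction(z))/\Im(z),
\end{equation*}
which is the desired inequality. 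The extension to a general positive matrix-valued measure is routine (Radon-Nikodym against a finite scalar measure such as ${\rm Tr}\,\rho$, replacing $\bigoplus_k \ComplexNumbers^n$ by $L^2({\rm Tr}\,\rho;\ComplexNumbers^n)$; the co-isometry argument is unchanged).

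The only substantive step is the matrix Cauchy-Schwarz estimate above. The factorization through the co-isometry $S$ makes it transparent, and its tightness hinges precisely on $SS^* = I$, i.e., on the CAR normalization $\rho(\RealNumbers) = I$; everything else is bookkeeping.
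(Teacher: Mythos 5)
Your Parts~1 and~3 follow the same route as the paper (Part~1 from \eqref{eq:G_Pick}, Part~3 from self-adjointness of $H^0_{\rm bath}$ and congruence). Part~2 is correct but takes a genuinely different path. The paper observes that $-G$ Pick implies $G^{-1}$ Pick, computes the large-$|z|$ expansion $G(iy)=I/(iy)+m_1/(iy)^2+\cdots$ from the K\"all\'en--Lehmann representation, extracts $m_0=I$ from the CAR, inverts to get $G(iy)^{-1}=iyI - m_1 + O(1/y)$, and then invokes Theorem~\ref{thm:MomentsAsymptoticExpansion} to identify the Nevanlinna--Riesz representation $G(z)^{-1}=z-\Sigma_\infty+\int \frac{d\mu(\eps)}{\eps-z}$; subtracting $z-H^0$ then makes $-\Sigma$ visibly Pick. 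You instead reduce the statement $-\Sigma$ Pick (via $\Im(M^{-1})=-M^{-1}\Im(M)(M^{-1})^\dagger$ and a congruence) to the equivalent matrix inequality $\Im(z)\,G(z)G(z)^\dagger \le -\Im(G(z))$, and prove that inequality directly by factoring through a co-isometry $S$, where the co-isometry property $SS^*=I$ is exactly the normalization $\rho(\RealNumbers)=I$. Both arguments rest on the same CAR normalization, but yours dispenses with the moment-expansion machinery of Theorem~\ref{thm:MomentsAsymptoticExpansion} and makes the role of the normalization sharper (it is precisely the tightness condition in the Cauchy--Schwarz/co-isometry step), while the paper's route is perhaps more systematic and hands you the finite Nevanlinna--Riesz measure of $-\Sigma$ for free, which the paper reuses elsewhere. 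A small remark: since $\FockSpace$ is finite-dimensional here, the KL representation already gives $\rho$ as a finite sum of Dirac measures, so your atomic case is in fact the general case in this setting and the $L^2(\mathrm{Tr}\,\rho;\ComplexNumbers^n)$ extension, while valid, is not needed.
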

\begin{remark}
In condensed matter physics, the Nevanlinna-Riesz measure of $-G$ is the so-called spectral function \cite{martin_interacting_2016}.
\end{remark}

The following proposition highlights the fact that, the single-site translation-invariant IPT-DMFT equations have no solution with hybridization functions of a finite-dimensional \acrshort{aim}. 

\begin{proposition}[Non-existence of solutions to the finite-dimensional bath single-site paramagnetic translation-invariant \acrshort{ipt}-\acrshort{dmft} equations]
\label{prop:NonExistenceFiniteBathIPTDMFT}
    Apart from the limit cases described in Proposition \ref{prop:DMFTTrivialLimitsExactness}, the single-site paramagnetic translation-invariant \acrshort{ipt}-\acrshort{dmft} equations \eqref{eq:IPT-DMFT_eq1}-\eqref{eq:IPT-DMFT_eq2} have no finite-dimensional bath solution, that is no solution $\Hybridization$ of the form
	\begin{equation*}
		\Hybridization(z) = \sum_{k=1}^K \frac{a_k}{z-\eps_k},\quad  K\geq 1, \quad a_k > 0, \quad \eps_k \in \RealNumbers.
	\end{equation*}
\end{proposition}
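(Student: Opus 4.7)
The plan is to derive a contradiction by counting the simple real poles of both sides of the self-consistency equation \eqref{eq:IPT-DMFT_eq1}, exploiting that all residues in both the IPT output and the Schur-complement expansion of the right-hand side are strictly positive, so that no cancellation can occur.

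Suppose, for contradiction, that $\Hybridization(z)=\sum_{k=1}^K \frac{a_k}{z-\eps_k}$ with $K\ge 1$, $a_k>0$, $\eps_1<\cdots<\eps_K$ solves \eqref{eq:IPT-DMFT_eq1}--\eqref{eq:IPT-DMFT_eq2} outside the trivial limits, i.e.\ $\OnSiteRepulsion[] \ne 0$ and $W\ne 0$. By Proposition \ref{prop:IPTFiniteDimension}, the IPT self-energy has the explicit form
$$
\SelfEnergy(z) = \OnSiteRepulsion[]^2 \sum_{k_1,k_2,k_3=1}^{K+1}\frac{a'_{k_1,k_2,k_3}}{z-(\eps'_{k_1}+\eps'_{k_2}+\eps'_{k_3})},\qquad a'_{k_1,k_2,k_3}>0,
$$
where $\eps'_1<\cdots<\eps'_{K+1}$ are the interlacing roots of $\eps-\Hybridization(\eps)=0$. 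Since $\OnSiteRepulsion[] \ne 0$ and all $a'_{k_1,k_2,k_3}$ are strictly positive, no cancellation occurs when combining contributions sharing the same pole location; the $K+1$ ``diagonal'' locations $3\eps'_1<\cdots<3\eps'_{K+1}$ are already distinct. Hence the number $N$ of distinct simple real poles of $\SelfEnergy$ satisfies $N\ge K+1$, and $\SelfEnergy$ admits the partial fraction expansion $\SelfEnergy(z)=\sum_{\ell=1}^N \frac{b_\ell}{z-d_\ell}$ with $d_1<\cdots<d_N$ and $b_\ell>0$.

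Next, I diagonalize $\NIHamiltonian_\perp$ into its distinct eigenvalues $\mu_1<\cdots<\mu_Q$ with eigenprojectors $\Pi_q$, set $w_q^2 := \|\Pi_q W^\dagger\|^2$, and rewrite the right-hand side of \eqref{eq:IPT-DMFT_eq1} as
$$
W\bigl(z-\NIHamiltonian_\perp - \SelfEnergy(z)\bigr)^{-1}W^\dagger \;=\; \sum_{q=1}^Q \frac{w_q^2}{z-\mu_q-\SelfEnergy(z)}.
$$
For each $q$ with $w_q^2>0$, the function $f_q(z)=z-\mu_q-\SelfEnergy(z)$ has derivative $f_q'(z)=1+\sum_\ell b_\ell/(z-d_\ell)^2>0$ on $\RealNumbers\setminus\{d_1,\dots,d_N\}$ and satisfies $f_q(z)\to \mp\infty$ as $z\to d_\ell^\pm$ and $f_q(z)\to\pm\infty$ as $z\to\pm\infty$ (since $\SelfEnergy\to 0$ at infinity). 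By monotonicity, $f_q$ therefore has exactly one real simple zero in each of the $N+1$ open intervals cut out by the $d_\ell$'s; each such zero is a simple real pole of $w_q^2/f_q$ with positive residue $w_q^2/f_q'(z^*)>0$. A common zero of $f_q$ and $f_{q'}$ would force $\mu_q=\mu_{q'}$, so the pole sets attached to distinct indices $q$ are disjoint. Since $W\ne 0$, one has $Q':=\#\{q:w_q^2>0\}\ge 1$, and the right-hand side of \eqref{eq:IPT-DMFT_eq1} is thus a rational function vanishing at infinity with exactly $Q'(N+1)$ distinct simple real poles, all with strictly positive residues.

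By uniqueness of the partial fraction decomposition of rational functions, the identity \eqref{eq:IPT-DMFT_eq1} then forces
$$
K \;=\; Q'(N+1) \;\ge\; 1\cdot(K+1+1) \;=\; K+2,
$$
which is absurd. The crux of the argument is the lower bound $N\ge K+1$, which rests on the strict positivity of the weights $a'_{k_1,k_2,k_3}$ in the IPT formula (ruling out cancellations among the $(K+1)^3$ contributions to $\SelfEnergy$); together with the analogous positivity of the $w_q^2$ in the Schur-complement expansion of the RHS, this positivity structure is precisely what makes the pole count strictly increase across \eqref{eq:IPT-DMFT_eq1} and produces the contradiction.
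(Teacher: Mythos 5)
Your proof is correct and follows the same pole-counting strategy as the paper: compute the IPT output $\SelfEnergy$ explicitly (via Proposition~\ref{prop:IPTFiniteDimension}), bound its number of poles from below by $K+1$ using the positivity of the residues $a'_{k_1,k_2,k_3}$, push this through the \acrlong{ba} map to show $\Hybridization$ would need at least $K+2$ poles, and contradict the assumption that it has exactly $K$. The one place where your write-up is genuinely more careful than the paper's is the passage through the congruence $W(\cdot)W^\dagger$: the paper applies Lemma~\ref{lem:IncreaseNumberOfPoles} to conclude that the \emph{matrix} $(z-\NIHamiltonian_\perp-\SelfEnergy(z))^{-1}$ has at least $K+2$ poles, and then dispatches the remaining step with ``since $W\neq 0$, $\Hybridization$ also has at least $K+2$ poles.'' That final inference is not automatic for an arbitrary matrix-valued function: conjugation by a rank-one $W$ can in principle annihilate residues. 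You make the needed positivity explicit — after diagonalizing $\NIHamiltonian_\perp$ you write the right-hand side as $\sum_q w_q^2/(z-\mu_q-\SelfEnergy(z))$, observe each $f_q=z-\mu_q-\SelfEnergy$ is strictly increasing between its poles so it has exactly $N+1$ simple real zeros, note the residues $w_q^2/f_q'(z^*)$ are strictly positive, and check that distinct $\mu_q$'s yield disjoint zero sets since $f_q-f_{q'}=\mu_{q'}-\mu_q$. This positivity-of-residues argument is what actually justifies the pole count surviving the $W(\cdot)W^\dagger$ step; it is implicit in the paper's proof of Lemma~\ref{lem:IncreaseNumberOfPoles} (where the $u_l'$ are shown positive) but not explicitly invoked at the point where it is needed, so your version closes a small gap. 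Your exact count $K=Q'(N+1)$ with $Q'\geq 1$ is slightly sharper than the paper's ``$\geq K+2$'' but leads to the same contradiction.
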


This implies that finding a solution to the DMFT equations requires considering  infinite-dimensional bath hybridization functions. The appropriate function space can be characterized in terms of Nevanlinna-Riesz measures, as will be shown in the subsequent section.

\subsection{Functional setting: the \acrlong{ba} and \acrshort{ipt} maps}

The \acrshort{dmft} map is the composition of the \acrshort{ipt}  map and the \acrfull{ba} map, which we will study separately.
Before focusing on the paramagnetic single-site translation-invariant case, let us get back to the general case presented in Section \ref{subsec:DMFTEquations}.
First, we formalize in our setting the definition of the \acrshort{ba} map $\SCmap$.
It has been proved by Lindsey, Lin and Schneider \cite{lindsey_quantum_2019} that the \acrshort{ba} map is well defined when the Nevanlinna-Riesz measure of each self-energy is a finite sum of Dirac measures, which means in particular that the self-energy is a rational function. The following proposition extends this result to the case of finite $\SelfAdjointOperator_n^+(\ComplexNumbers)$-valued measures, by using a different approach. In the following, we will denote by $\FragmentSize:=|\HubbardVertices_p|$ the size of the $p$-th fragment, that is the cardinality of the subgraph $\HubbardVertices_p\subset\HubbardVertices$, and identify $\OneParticleSpace[p]$ with $\ComplexNumbers^{2\FragmentSize}$ for convenience. Recall that the cardinality of the graph $\HubbardVertices$ is denoted by $L=|\HubbardVertices|$. The spaces to which  the self-energies $\SelfEnergy_p$ and the hybridization functions $\Hybridization_p$ belong are respectively given by
\begin{equation}
\label{eq:GeneralSelfEnergySpace}
	\SelfEnergySpace_p = \left\{z\in\UpperHalfPlane\mapsto C+\PickFunction{\mu} \ ; \ C \in\HermitianMatricesFragment, \mu \in \FinitePositiveMatrixMeasures{2\FragmentSize} \right\},
\end{equation}
where $\FinitePositiveMatrixMeasures{n}$ is the set of finite $\SelfAdjointOperator_n^+(\ComplexNumbers)$-valued Borel measures on $\RealNumbers$, and 
\begin{equation*}
	\HybridizationSpace_p =\left\{z\in\UpperHalfPlane\mapsto \PickFunction{\nu}\ ; \ \nu\in\FinitePositiveMatrixMeasures{2\FragmentSize}, \ \nu(\RealNumbers)=\NIHamiltonianSchurBlock^\dagger\NIHamiltonianSchurBlock \right\},
\end{equation*}
where $W_p\in\ComplexNumbers^{2(L-\FragmentSize)\times 2\FragmentSize}$ is defined in \eqref{eq:G_DMFT_p}-\eqref{eq:BlockStructure}.
These definitions are motivated by the consequences of Proposition \ref{prop:NonExistenceFiniteBathIPTDMFT} and the statements of Propositions \ref{prop:WellPosednessSelfEnergyToHybridization} and \ref{prop:DefinitionIPTmap}.

\begin{proposition}[\acrlong{ba} map: $\SelfEnergy\mapsto\Hybridization$]
\label{prop:WellPosednessSelfEnergyToHybridization}
For $1\leq p \leq \PartitionSize$, let $\SelfEnergy_p\in\SelfEnergySpace_p$, and let $\mu_p$ be the Nevanlinna-Riesz measure associated to $\SelfEnergy_p$. For $1\leq p\leq \PartitionSize$, the  hybridizations functions $\Hybridization_p$ given by 
\begin{equation}
\label{eq:HybridizationFormulaSCmap}
\Hybridization_p(z)=\NIHamiltonianSchurBlock \left( z - \NIHamiltonianSchurCentralBlock - \bigoplus_{q\neq p} \SelfEnergy_{q}(z) \right)^{-1} \NIHamiltonianSchurBlock^\dagger
\end{equation}
for $z\in\UpperHalfPlane$ are well-defined. With this definition, $-\Hybridization_p$ is a Pick matrix and there exists  a finite measure $\nu_p\in\FinitePositiveMatrixMeasures{2\FragmentSize}$ such that 
\begin{equation}
\label{eq:NevanlinnaMeasureHybridizationSCmap}
\forall z \in \UpperHalfPlane, \quad 
	\Hybridization_p(z) = \PickFunction{\nu_p} \quad \text{ and } \quad \nu_p(\RealNumbers) = \NIHamiltonianSchurBlock\NIHamiltonianSchurBlock^\dagger,
\end{equation}
namely $\Hybridization_p \in \HybridizationSpace_p$.
\end{proposition}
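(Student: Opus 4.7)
I would decompose the proof into three steps: well-definedness together with the Pick property, an asymptotic expansion along the positive imaginary axis, and identification of the Nevanlinna--Riesz data. First, fix $z \in \UpperHalfPlane$. For each $q \ne p$, the hypothesis $\SelfEnergy_q = C_q + \PickFunction{\mu_q}$ with $C_q$ self-adjoint and $\mu_q \in \FinitePositiveMatrixMeasures{2L_q}$ directly gives $\Im \SelfEnergy_q(z) \le 0$, so the matrix $M_p(z) := zI - \NIHamiltonianSchurCentralBlock - \bigoplus_{q \ne p} \SelfEnergy_q(z)$ satisfies $\Im M_p(z) \ge \Im(z)\, I > 0$ and is in particular invertible. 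This makes $\Hybridization_p(z) = \NIHamiltonianSchurBlock M_p(z)^{-1} \NIHamiltonianSchurBlock^\dagger$ well-defined and analytic on $\UpperHalfPlane$. The matrix identity $\Im(A^{-1}) = -(A^{-1})^\dagger (\Im A)\, A^{-1}$ applied with $A = M_p(z)$ then gives $\Im(-\Hybridization_p(z)) = \NIHamiltonianSchurBlock\, (M_p(z)^{-1})^\dagger\, (\Im M_p(z))\, M_p(z)^{-1}\, \NIHamiltonianSchurBlock^\dagger \ge 0$, so $-\Hybridization_p$ is a Pick matrix.

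Second, by Theorem~\ref{thm:NevanlinnaRepresentation}, $-\Hybridization_p$ admits a Nevanlinna--Riesz representation with data $(a, b, \tilde\nu_p)$, and to identify these data I would probe $\Hybridization_p$ along the positive imaginary half-line. Finiteness of each $\mu_q$ and the bound $|1/(iy - \epsilon)| \le 1/y$ yield $\PickFunction{\mu_q}(iy) = O(1/y)$, hence $\SelfEnergy_q(iy) = C_q + O(1/y)$ as $y \to +\infty$. A Neumann expansion of
\[
M_p(iy)^{-1} = \frac{1}{iy}\left(I - \frac{1}{iy}\Bigl(\NIHamiltonianSchurCentralBlock + \bigoplus_{q \ne p} \SelfEnergy_q(iy)\Bigr)\right)^{-1}
\]
is then valid for $y$ large enough and delivers the key asymptotic $\Hybridization_p(iy) = \NIHamiltonianSchurBlock \NIHamiltonianSchurBlock^\dagger / (iy) + O(1/y^2)$.

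Third, I would apply Theorem~\ref{thm:MomentsAsymptoticExpansion} with $n = 0$ to the Pick matrix $f := -\Hybridization_p$: matching the general form $-f(iy) = m_{-2}(iy) + m_{-1} + m_0/(iy) + o(1/y)$ against the expansion above forces $m_{-2} = m_{-1} = 0$ and $m_0 = \NIHamiltonianSchurBlock \NIHamiltonianSchurBlock^\dagger$. The ``if and only if'' direction of that theorem then guarantees that $\tilde\nu_p$ is finite with $\tilde\nu_p(\RealNumbers) = \NIHamiltonianSchurBlock \NIHamiltonianSchurBlock^\dagger$, while $m_{-2} = 0$ together with $a = \lim_{y \to +\infty} (iy)^{-1} f(iy)$ gives $a = 0$. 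Since $\tilde\nu_p$ is finite, the integrand in the representation \eqref{eq:NevanlinnaRepresentation} splits additively and one may rewrite $-\Hybridization_p(z) = (b - c) + \int_\RealNumbers d\tilde\nu_p(\epsilon)/(\epsilon - z)$ with $c := \int_\RealNumbers \epsilon/(1+\epsilon^2)\, d\tilde\nu_p(\epsilon)$; letting $y \to +\infty$ and using $-\Hybridization_p(iy) \to 0$ then forces $b = c$, so that $\Hybridization_p(z) = \PickFunction{\tilde\nu_p}(z)$. Setting $\nu_p := \tilde\nu_p$ concludes. The main technical obstacle lies in Step~2, namely rigorously controlling the Neumann expansion of $M_p(iy)^{-1}$ together with the $O(1/y)$ decay of $\PickFunction{\mu_q}(iy)$; once this asymptotic is in place, Steps~1 and~3 are essentially mechanical applications of the matrix-valued Nevanlinna--Riesz theory recalled in Section~\ref{sec:PickFunctions}.
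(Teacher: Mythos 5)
Your proof is correct and follows essentially the same route as the paper's: establish invertibility and the Pick property via $\Im M_p(z)\ge\Im(z)\,I$, expand $\Hybridization_p(iy)$ along the imaginary axis to extract the leading $\NIHamiltonianSchurBlock\NIHamiltonianSchurBlock^\dagger/(iy)$ term, and invoke Theorem~\ref{thm:MomentsAsymptoticExpansion} to conclude the Nevanlinna--Riesz measure is finite with the stated mass and that the affine part vanishes. Your Step~3 (explicitly splitting the Nevanlinna integrand and identifying $b=c$) is a bit more detailed than the paper's terse ``$a=b=0$'', and your $O(1/y^2)$ remainder is sharper than the $o(1/y)$ actually needed, but neither changes the structure of the argument.
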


In the particular case of the single-site paramagnetic translation-invariant \acrshort{ipt}-\acrshort{dmft} framework, we have $\FragmentSize=1$ for all $1\leq p \leq \PartitionSize$ and the Nevanlinna-Riesz measures of the self-energies and the hybridization functions are finite positive Borel measures on $\RealNumbers$. The self-energy space \eqref{eq:GeneralSelfEnergySpace} takes the simpler form:
\begin{equation}
\label{eq:SelfEnergySpaceIPT}
	\SelfEnergySpace^{\acrshort{ipt}} = \left\{ \UpperHalfPlane \ni z\mapsto U^2\PickFunction{\mu} \in \ComplexNumbers \ ; \ \mu\in\PositiveFiniteMeasures\right\},
\end{equation}
and is therefore in one-to-one correspondence with the set $\PositiveFiniteMeasures$ of finite positive Borel measures on $\RealNumbers$. The hybridization space becomes
\begin{equation}
\label{eq:HybridizationSpace}
	\HybridizationSpace^{\acrshort{ipt}} = \left\{ \UpperHalfPlane \ni z \mapsto |\NIHamiltonianSchurBlock[]|^2  \PickFunction{\nu} \in \ComplexNumbers\ ; \ \nu\in\ProbabilityMeasures\right\},
\end{equation}
and is thus in one-to-one correspondence with the set $\ProbabilityMeasures$ of Borel probability measures on $\RealNumbers$.
These one-to-one correspondences allow us to focus on the measure spaces $\PositiveFiniteMeasures$ and $\ProbabilityMeasures$, and we will study the \acrshort{dmft} loop in terms of measures from now on. The \acrshort{ba} map can then be defined as a function $\SCmap$ between measure spaces as follows.

\begin{corollary}[\acrshort{ba} map in the \acrshort{ipt}-\acrshort{dmft} framework: $\SCmap: \SelfEnergy\mapsto\Hybridization$]
The \acrfull{ba} map in the paramagnetic single-site translation-invariant \acrshort{ipt}-\acrshort{dmft} framework  is defined as the function $\SCmap : \PositiveFiniteMeasures\to\ProbabilityMeasures$ such that 
\begin{equation*}
\SCmap(\mu) = \nu,
\end{equation*}
with
\begin{equation}
\label{eq:BathUpdateIPT}
	\HybridizationMass \PickFunction{\nu} = \NIHamiltonianSchurBlock[] \left( z - \NIHamiltonian_\perp - U^2\PickFunction{\mu} \right)^{-1} \NIHamiltonianSchurBlock[]^\dagger.
\end{equation}
\end{corollary}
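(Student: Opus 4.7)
\emph{Proof proposal.} The corollary is a direct specialization of Proposition \ref{prop:WellPosednessSelfEnergyToHybridization} to the paramagnetic single-site translation-invariant setting, followed by a rescaling of the Nevanlinna-Riesz measure of the hybridization function to turn it into a probability measure. The plan is as follows.

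Given $\mu \in \PositiveFiniteMeasures$, use the one-to-one correspondence \eqref{eq:SelfEnergySpaceIPT} between $\PositiveFiniteMeasures$ and $\SelfEnergySpace^{\acrshort{ipt}}$ to associate the self-energy $\SelfEnergy(z) := U^2 \PickFunction{\mu}$. By translation invariance, set $\SelfEnergy_q := \SelfEnergy$ for every $1 \le q \le P$, so that $\bigoplus_{q \neq p} \SelfEnergy_q(z) = \SelfEnergy(z)\, I_{P-1}$ for any fixed $p$. With this choice, the general Bath Update formula \eqref{eq:HybridizationFormulaSCmap} collapses to
\begin{equation*}
\Hybridization(z) = \NIHamiltonianSchurBlock[]\,\bigl(z - \NIHamiltonian_\perp - \SelfEnergy(z)\, I_{P-1}\bigr)^{-1}\, \NIHamiltonianSchurBlock[]^\dagger,
\end{equation*}
which matches the right-hand side of \eqref{eq:BathUpdateIPT} (the spin degrees of freedom having been factored out as in Appendix \ref{appendix:SpinIndependenceIPTDMFT}).

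Invoking Proposition \ref{prop:WellPosednessSelfEnergyToHybridization} with $\FragmentSize = 1$ then shows that $\Hybridization$ is well defined on $\UpperHalfPlane$, that $-\Hybridization$ is a scalar Pick function, and that there exists a finite positive Borel measure $\widetilde{\nu} \in \PositiveFiniteMeasures$ satisfying
\begin{equation*}
\Hybridization(z) = \PickFunction{\widetilde\nu} \qquad \text{and} \qquad \widetilde\nu(\RealNumbers) = \NIHamiltonianSchurBlock[]\NIHamiltonianSchurBlock[]^\dagger = |\NIHamiltonianSchurBlock[]|^2 .
\end{equation*}
In the non-degenerate case $|\NIHamiltonianSchurBlock[]|^2 > 0$, I define $\nu := |\NIHamiltonianSchurBlock[]|^{-2}\, \widetilde\nu$. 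Then $\nu \in \ProbabilityMeasures$ and, by linearity of $\mu \mapsto \PickFunction{\mu}$, $|\NIHamiltonianSchurBlock[]|^2\, \PickFunction{\nu} = \PickFunction{\widetilde\nu} = \Hybridization(z)$, so that setting $\SCmap(\mu) := \nu$ produces the desired map $\PositiveFiniteMeasures \to \ProbabilityMeasures$ satisfying \eqref{eq:BathUpdateIPT}. The degenerate case $\NIHamiltonianSchurBlock[] = 0$ corresponds to the disconnected/atomic limit of Proposition \ref{prop:DMFTTrivialLimitsExactness}, for which $\Hybridization \equiv 0$ and the statement is to be understood by convention (any probability measure satisfies the normalized identity).

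The only point in the plan that is not purely bookkeeping is the mass identity $\widetilde\nu(\RealNumbers) = |\NIHamiltonianSchurBlock[]|^2$, which is already part of Proposition \ref{prop:WellPosednessSelfEnergyToHybridization}. Should one wish to check it directly, one can use Theorem \ref{thm:MomentsAsymptoticExpansion}: since $\mu$ is finite, $\SelfEnergy(iy) = U^2 \PickFunction{\mu}(iy) \to 0$ as $y \to +\infty$, hence expanding the resolvent yields
\begin{equation*}
-\Hybridization(iy) = -\NIHamiltonianSchurBlock[]\bigl(iy - \NIHamiltonian_\perp - \SelfEnergy(iy)\,I_{P-1}\bigr)^{-1}\NIHamiltonianSchurBlock[]^\dagger = -\frac{|\NIHamiltonianSchurBlock[]|^2}{iy} + O\!\left(\frac{1}{y^2}\right),
\end{equation*}
and identifying the leading coefficient with the zeroth moment of the Nevanlinna-Riesz measure of $-\Hybridization$ gives $\widetilde\nu(\RealNumbers) = |\NIHamiltonianSchurBlock[]|^2$.
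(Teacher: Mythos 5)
Your proof proposal is correct and follows the same approach as the paper, which treats the Corollary as an immediate specialization of Proposition~\ref{prop:WellPosednessSelfEnergyToHybridization} to the case $\FragmentSize=1$ (with spin factored out as in Appendix~\ref{appendix:SpinIndependenceIPTDMFT}) followed by the normalization $\nu := |\NIHamiltonianSchurBlock[]|^{-2}\widetilde\nu$, exactly as you write. Your optional direct verification of the mass identity via Theorem~\ref{thm:MomentsAsymptoticExpansion} is sound, and your remark that the degenerate case $\NIHamiltonianSchurBlock[]=0$ is the excluded trivial limit of Proposition~\ref{prop:DMFTTrivialLimitsExactness} is consistent with how the paper handles it elsewhere (e.g. in the proof of Proposition~\ref{prop:NonExistenceFiniteBathIPTDMFT}).
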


In the \acrshort{dmft} loop, the impurity solver is the focus of the second stage. Within our model, we define the \acrshort{ipt} map $\IPTmap$, which transforms a given hybridization function $\Hybridization$ into a self-energy $\SelfEnergy$. As well as the \acrlong{ba} map, this mapping operates across measure spaces and maps Borel probability measures to finite positive Borel measures.

\begin{proposition}[Definition of the \acrshort{ipt} map ]
\label{prop:DefinitionIPTmap}
	Let $\nu \in \ProbabilityMeasures$ and $\Hybridization \in \HybridizationSpace^{\acrshort{ipt}}$ the hybridization function associated with $\nu$: for all $z \in \UpperHalfPlane$,
	\begin{equation*}
		\Hybridization(z)= |\NIHamiltonianSchurBlock[]|^2 \int_\RealNumbers \frac{d\nu(\eps)}{z-\eps}.
	\end{equation*}
	There exists $\xi \in \ProbabilityMeasures$, such that for all $z\in\UpperHalfPlane$,
	\begin{equation}
	\label{eq:IPTintermediate}
		\int_\RealNumbers \frac{d\xi(\eps)}{z-\eps} = \frac{1}{z-\Hybridization(z)}.
	\end{equation}
 Then define
 \begin{equation}
     \label{eq:IPTintermediate2}
     \tilde{\xi}(d\eps) := \dfrac{\xi(d\eps)}{1+e^{-\beta\eps}},
 \end{equation}
 \begin{equation}
     \label{eq:IPTintermediate3}
     \tilde{\mu} := \tilde{\xi}*\tilde{\xi}*\tilde{\xi},
 \end{equation}
 where $*$ is the  convolution product, and
 \begin{equation}
 \label{eq:IPTintermediate4}
     \mu(d\eps) := (1+e^{-\beta\eps})\tilde{\mu}(d\eps).
    \end{equation}
    Finally, define the self-energy associated to the measure $\mu$ by
	\begin{equation}
 \label{eq:SelfEnergyIPT}
		\SelfEnergy(z) = U^2\int_\RealNumbers \frac{d\mu(\eps)}{z-\eps}.
	\end{equation}
	The measure $\mu$ is a positive finite measure: $\mu \in \PositiveFiniteMeasures$, hence $\SelfEnergy \in \SelfEnergySpace^{\acrshort{ipt}}$, where $\SelfEnergySpace^{\acrshort{ipt}}$ is defined in \eqref{eq:SelfEnergySpaceIPT}.
	The IPT map $\IPTmap : \ProbabilityMeasures \to \PositiveFiniteMeasures$ is defined by 
\begin{equation*}
\IPTmap(\nu) = \mu.
\end{equation*}
Moreover, the map $\HybridizationSpace^{\acrshort{ipt}}\ni \Hybridization \mapsto \SelfEnergy\in\SelfEnergySpace^{\acrshort{ipt}}$ defined by \eqref{eq:IPTintermediate}-\eqref{eq:SelfEnergyIPT} is continuous with respect to the weak topology of measures, and coincides with the IPT solver defined in Proposition \ref{prop:IPTFiniteDimension} on finite-dimensional bath hybridization functions, hence it is its unique continuous extension to the set $\HybridizationSpace[]^{\acrshort{ipt}} $.
\end{proposition}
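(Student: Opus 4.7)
The plan proceeds in four steps, of which the last is the substantive one.

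\textbf{Step 1 (Existence of $\xi \in \ProbabilityMeasures$).} Since $-\Hybridization$ is a Pick function, $z - \Hybridization(z) \in \UpperHalfPlane$ for all $z \in \UpperHalfPlane$, so $g(z) := -1/(z-\Hybridization(z))$ is well-defined and, being the composition of $-\Hybridization$ with the M\"obius Pick map $w \mapsto -1/w$, is itself a Pick function. Nevanlinna-Riesz (Theorem~\ref{thm:NevanlinnaRepresentation}) yields $g(z) = az + b + \int_\RealNumbers\left(\frac{1}{\eps-z} - \frac{\eps}{1+\eps^2}\right)d\xi(\eps)$. Since $\nu$ has mass $1$, a direct expansion gives $\Hybridization(iy) = |W|^2/(iy) + o(1/y)$, whence $-g(iy) = 1/(iy) + |W|^2/(iy)^3 + o(1/y^3)$. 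Applied with $n=1$, Theorem~\ref{thm:MomentsAsymptoticExpansion} then forces $a=0$, the constant part vanishes, and $m_0 = \xi(\RealNumbers) = 1$, so $\xi \in \ProbabilityMeasures$ and \eqref{eq:IPTintermediate} holds; the same expansion gives $m_2(\xi) = |W|^2 < \infty$, a bound that will provide tightness in Step 4.

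\textbf{Step 2 (Finiteness of $\mu$).} Since $(1+e^{-\beta\eps})^{-1} \leq 1$, the measure $\tilde\xi$ has mass at most $1$, and hence so does $\tilde\mu = \tilde\xi*\tilde\xi*\tilde\xi$. Writing $\mu(\RealNumbers) = \tilde\mu(\RealNumbers) + \int e^{-\beta\eps}d\tilde\mu$, Fubini on the triple convolution yields
\[
\int e^{-\beta\eps}d\tilde\mu = \left(\int e^{-\beta\eps}d\tilde\xi\right)^3 = \left(\int \frac{d\xi(\eps)}{1+e^{\beta\eps}}\right)^3 \leq 1,
\]
so $\mu(\RealNumbers) \leq 2$, i.e. $\mu \in \PositiveFiniteMeasures$, and \eqref{eq:SelfEnergyIPT} defines $\SelfEnergy \in \SelfEnergySpace^{\acrshort{ipt}}$.

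\textbf{Step 3 (Agreement with the solver of Proposition~\ref{prop:IPTFiniteDimension}).} Evaluating the Cauchy-Stieltjes transform of $\mu$ at Matsubara frequencies via the convolution structure gives
\[
\SelfEnergy(i\omega_n) = U^2\int\int\int \frac{1+e^{-\beta E}}{i\omega_n - E}\,d\tilde\xi(\eps_1)d\tilde\xi(\eps_2)d\tilde\xi(\eps_3), \qquad E := \eps_1+\eps_2+\eps_3.
\]
Independently, I reduce the Matsubara formula \eqref{eq:IPT_2} to the same expression using the spectral representation $1/(i\omega_{n'}-\Hybridization(i\omega_{n'})) = \int d\xi/(i\omega_{n'}-\eps)$, the fermionic Matsubara sum $\beta^{-1}\sum_{n'}e^{-i\omega_{n'}\tau}/(i\omega_{n'}-\eps) = -e^{-\tau\eps}/(1+e^{-\beta\eps})$ valid for $\tau \in (0,\beta)$, and the identity $\int_0^\beta e^{(i\omega_n-E)\tau}d\tau = (1+e^{-\beta E})/(E-i\omega_n)$. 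Since $-\SelfEnergy$ defined by \eqref{eq:SelfEnergyIPT} is visibly a Pick function ($\mu \geq 0$), Theorem~\ref{thm:MatsubaraFourierAnalyticExtension} identifies it with $\SelfEnergy_{\rm imp}^{\acrshort{ipt}}$ on all of $\UpperHalfPlane$ whenever $\Hybridization$ is of the finite-bath form \eqref{eq:finite_bath_Delta}.

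\textbf{Step 4 (Weak continuity and uniqueness).} The main obstacle is that the weight $1+e^{-\beta\eps}$ is unbounded, so weak convergence does not transfer directly from $\tilde\mu_n$ to $\mu_n$. The plan is to prove pointwise convergence of Cauchy-Stieltjes transforms via a Fermi-weight rewriting and then upgrade via mass conservation. If $\nu_n \rightharpoonup \nu$ in $\ProbabilityMeasures$, then $\Hybridization_n \to \Hybridization$ pointwise on $\UpperHalfPlane$ (as $(z-\eps)^{-1} \in C_b$), hence the Stieltjes transforms of $\xi_n$ converge pointwise. Tightness of $(\xi_n)$ follows from the uniform bound $m_2(\xi_n) = |W|^2$ established in Step 1, so Prokhorov together with Stieltjes inversion give $\xi_n \rightharpoonup \xi$; multiplication by the $C_b$ weight $(1+e^{-\beta\cdot})^{-1}$ transfers this to $\tilde\xi_n \rightharpoonup \tilde\xi$, and weak continuity of convolution on uniformly bounded-mass positive measures yields $\tilde\mu_n \rightharpoonup \tilde\mu$. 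To handle $\mu_n$, I use the identity $e^{-\beta\eps}d\tilde\xi_n = f(\eps)d\xi_n$ with the bounded Fermi factor $f(\eps) = (1+e^{\beta\eps})^{-1}$ to decompose
\[
\int\frac{d\mu_n(\eps)}{z-\eps} = \int\frac{d\tilde\mu_n(\eps)}{z-\eps} + \int\int\int \frac{f(\eps_1)f(\eps_2)f(\eps_3)}{z-(\eps_1+\eps_2+\eps_3)}d\xi_n(\eps_1)d\xi_n(\eps_2)d\xi_n(\eps_3).
\]
Both integrands are bounded continuous on $\RealNumbers$ and $\RealNumbers^3$ respectively, so each term passes to the limit by weak convergence of $\tilde\mu_n$ and $\xi_n^{\otimes 3}$. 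The same Fubini computation as in Step 2 gives $\mu_n(\RealNumbers) \to \mu(\RealNumbers)$; vague convergence together with mass convergence then yields $\mu_n \rightharpoonup \mu$ weakly. Finally, since finite convex combinations of Dirac measures are weakly dense in $\ProbabilityMeasures$ and correspond bijectively to finite-bath hybridization functions in $\HybridizationSpace^{\acrshort{ipt}}$, the agreement in Step 3 together with weak continuity forces any continuous extension to coincide with ours.
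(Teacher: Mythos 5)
Your proof is correct and follows the same broad architecture as the paper's (Nevanlinna--Riesz representation and asymptotic expansion for $\xi$, Fermi-ratio bound on the mass of $\mu$, convolution identity, density of discrete measures for uniqueness of the extension). Two technical points diverge.

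In Step~3, the paper verifies agreement by explicitly computing the triple convolution for a discrete $\xi$ and matching term by term against the formula \eqref{eq:IPT_finite_bath}; you instead check agreement at the Matsubara points and invoke interpolation uniqueness. The reasoning is sound, but the theorem you cite, Theorem~\ref{thm:MatsubaraFourierAnalyticExtension}, asserts uniqueness only for data arising as Matsubara Fourier coefficients of an actual Green's function. For the abstract data $(\SelfEnergy_{{\rm imp},n})_{n\in\Integers}$ the right reference is the uniqueness clause already contained in Proposition~\ref{prop:IPTFiniteDimension}, which rests on Theorem~\ref{thm:UniquenessNevanlinnaPickInterpolation} (the candidate $\SelfEnergy_{\rm imp}^{\rm IPT}$ is rational, hence has discrete Nevanlinna--Riesz measure with finite support).

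In Step~4, your concern that the weight $1+e^{-\beta\eps}$ is unbounded is actually moot: in the combined triple-integral representation
\[
\int_\RealNumbers \frac{d\mu^n(\eps)}{z-\eps} = \int_{\RealNumbers^3}\psi\, d\xi^n(\eps_1)\,d\xi^n(\eps_2)\,d\xi^n(\eps_3), \qquad \psi = \frac{1}{z-(\eps_1+\eps_2+\eps_3)}\cdot\frac{1+e^{-\beta(\eps_1+\eps_2+\eps_3)}}{\prod_{i=1}^3(1+e^{-\beta\eps_i})},
\]
the Fermi ratio is bounded by $1$, so $\psi$ is bounded and smooth with $\|\partial_i\psi\|_\infty \leq \kappa_{z,\beta}$. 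The paper exploits exactly this and passes to the limit by a telescoping/Taylor argument against the $W_2$-optimal coupling between $\xi^n$ and $\xi$, yielding the quantitative bound $\frac{3\kappa_{z,\beta}}{2}W_2(\xi^n,\xi)^2$. Your two-term split is therefore unnecessary, though correct. Your route --- Prokhorov plus uniform tightness from the moment identity $m_2(\xi^n)=|W|^2$, tensor-product weak convergence of $(\xi^n)^{\otimes 3}$, then explicit mass convergence of $\mu^n(\RealNumbers)$ to upgrade vague to weak --- does reach the stated weak continuity. It has the minor advantage of spelling out the mass-convergence step that the paper leaves implicit; what the paper's route buys is a quantitative $W_2$-rate and avoiding the (standard but nontrivial) tensor-product weak convergence lemma.
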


Now that we have defined the maps $\IPTmap$ and $\SCmap$, we define the paramagnetic translation-invariant single-site \acrshort{ipt}-\acrshort{dmft} map  as
$$
\DMFTmap := \SCmap\circ\IPTmap : \ProbabilityMeasures \to \ProbabilityMeasures.
$$
For the sake of brevity, we will sometimes refer to $\DMFTmap$ as the \acrshort{ipt}-\acrshort{dmft} map rather than restating all the assumptions: paramagnetic framework, translation-invariant and single-site. In the same way, we refer to fixed points of $\DMFTmap$ as \acrshort{ipt}-\acrshort{dmft} solutions. Such fixed points (more precisely the associated hybridization functions and self-energies) are indeed solutions of the single-site paramagnetic translation-invariant \acrshort{ipt}-\acrshort{dmft} equations \eqref{eq:IPT-DMFT_eq1}-\eqref{eq:IPT-DMFT_eq2}.

\subsection{Existence and properties of \acrshort{ipt}-\acrshort{dmft} solutions}

The main result of this paper is the existence of a solution to the \acrshort{dmft} equations \eqref{eq:IPT-DMFT_eq1}-\eqref{eq:IPT-DMFT_eq2} in the paramagnetic translation-invariant single-site  framework, using the \acrshort{ipt} impurity solver.

\begin{theorem}[Existence of a fixed point]
\label{thm:Existence}
The IPT-DMFT map $\DMFTmap$ has a fixed point $\nu \in \ProbabilityMeasures$.
\end{theorem}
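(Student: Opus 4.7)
The plan is to apply Schauder-Singbal's fixed-point theorem to $\DMFTmap:\ProbabilityMeasures\to\ProbabilityMeasures$. Endowed with the weak topology of measures, $\FiniteMeasures$ is a Hausdorff locally convex topological vector space and $\ProbabilityMeasures$ is one of its convex subsets. It therefore suffices to exhibit a nonempty, convex, weakly compact, invariant subset $K\subset\ProbabilityMeasures$ on which $\DMFTmap$ is weakly continuous.

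\textbf{Uniform second-moment bound.} The key estimate is that $m_2(\DMFTmap(\nu))$ is bounded by a constant independent of $\nu\in\ProbabilityMeasures$. Given $\nu\in\ProbabilityMeasures$, set $\mu:=\IPTmap(\nu)\in\PositiveFiniteMeasures$ and $\nu':=\SCmap(\mu)\in\ProbabilityMeasures$. Theorem~\ref{thm:MomentsAsymptoticExpansion} allows one to read off the moments of the Nevanlinna--Riesz measure $|W|^2\nu'$ of $-\Hybridization$ from the expansion of $\Hybridization(iy)$ at $y\to+\infty$. Using $\SelfEnergy(iy)=U^2\mu(\RealNumbers)/(iy)+o(1/y)$ (itself provided by Theorem~\ref{thm:MomentsAsymptoticExpansion} applied to $-\SelfEnergy$) and expanding the resolvent $(iy-\NIHamiltonian_\perp-\SelfEnergy(iy)\,I)^{-1}$ up to order $1/y^3$ gives
\begin{equation*}
    m_2(\nu') \;=\; \frac{W(\NIHamiltonian_\perp)^2 W^\dagger}{|W|^2} \;+\; U^2\,\mu(\RealNumbers).
\end{equation*}
To bound $\mu(\RealNumbers)$ uniformly, one exploits the explicit \acrshort{ipt} construction of Proposition~\ref{prop:DefinitionIPTmap}. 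From $\tilde\mu=\tilde\xi*\tilde\xi*\tilde\xi$ and $\mu(d\eps)=(1+e^{-\beta\eps})\tilde\mu(d\eps)$, the convolution theorem together with Fubini yields
\begin{equation*}
    \mu(\RealNumbers) \;=\; \tilde\xi(\RealNumbers)^3 \;+\; \Bigl(\int_\RealNumbers e^{-\beta\eps}\,d\tilde\xi(\eps)\Bigr)^3.
\end{equation*}
Since $\xi\in\ProbabilityMeasures$ and both Fermi-Dirac weights $1/(1+e^{-\beta\eps})$ and $e^{-\beta\eps}/(1+e^{-\beta\eps})$ lie in $(0,1)$, each term is at most $1$, so $\mu(\RealNumbers)\leq 2$. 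Hence $m_2(\nu')\leq R$ for every $\nu\in\ProbabilityMeasures$, where $R:=|W|^{-2}\,W(\NIHamiltonian_\perp)^2 W^\dagger + 2U^2$.

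\textbf{The invariant compact set.} Define $K:=\{\nu\in\ProbabilityMeasures\,:\,m_2(\nu)\leq R\}$. It is convex, contains $\delta_0$, and is weakly closed by lower semicontinuity of the second-moment functional. Markov's inequality gives $\nu(\{|\eps|>M\})\leq R/M^2$ uniformly on $K$, so $K$ is tight, and hence weakly compact by Prokhorov's theorem. The previous step shows $\DMFTmap(\ProbabilityMeasures)\subset K$, so in particular $\DMFTmap(K)\subset K$.

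\textbf{Continuity and conclusion.} Weak continuity of $\IPTmap:\ProbabilityMeasures\to\PositiveFiniteMeasures$ is provided by Proposition~\ref{prop:DefinitionIPTmap}. For $\SCmap:\PositiveFiniteMeasures\to\ProbabilityMeasures$, if $\mu_n\to\mu$ weakly with uniformly bounded total mass then $\SelfEnergy_n(z)\to\SelfEnergy(z)$ pointwise on $\UpperHalfPlane$ by dominated convergence (since $\eps\mapsto(z-\eps)^{-1}$ is bounded continuous on $\RealNumbers$ for each fixed $z\in\UpperHalfPlane$); continuity of the matrix resolvent transfers this to pointwise convergence $\Hybridization_n(z)\to\Hybridization(z)$. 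Since the associated Nevanlinna--Riesz measures all have the common mass $|W|^2$ (Proposition~\ref{prop:WellPosednessSelfEnergyToHybridization}), the Stieltjes--Perron inversion formula promotes pointwise convergence of these negatives of Pick functions to weak convergence of the measures $\SCmap(\mu_n)\to\SCmap(\mu)$. Because $\IPTmap(K)$ has uniformly bounded mass (at most $2$), $\DMFTmap=\SCmap\circ\IPTmap$ is weakly continuous on $K$. Schauder-Singbal's theorem then produces a fixed point $\nu\in K\subset\ProbabilityMeasures$. The most delicate piece of the argument is transferring pointwise convergence of Pick functions to weak convergence of their Nevanlinna--Riesz measures; by contrast, the moment bound is remarkably clean because the Fermi-Dirac reweighting in the \acrshort{ipt} construction saturates each of the two contributions to $\mu(\RealNumbers)$ at exactly $1$ under the sole assumption $\xi\in\ProbabilityMeasures$.
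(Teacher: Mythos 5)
Your proof is correct and follows essentially the same route as the paper: apply Schauder--Singbal, establish weak continuity of $\DMFTmap$, and obtain relative compactness from a uniform second-moment (hence tightness) estimate rooted in the Fermi--Dirac structure of the \acrshort{ipt} convolution. The only differences are cosmetic — you restrict to the compact sublevel set $K=\{\nu : m_2(\nu)\leq R\}$ rather than working on all of $\ProbabilityMeasures$ and showing $\DMFTmap(\ProbabilityMeasures)$ is relatively compact, and you deduce continuity of $\SCmap$ via pointwise convergence of Stieltjes transforms and Stieltjes--Perron/Helly rather than the paper's argument via density of the resolvent algebra in $\ContinuousGoingToZeroFunctions$; incidentally, your decomposition $\mu(\RealNumbers)=A^3+B^3$ with $A=\int(1+e^{-\beta\eps})^{-1}d\xi$ and $B=\int e^{-\beta\eps}(1+e^{-\beta\eps})^{-1}d\xi$ satisfies $A+B=1$, so it in fact recovers the paper's sharper bound $\mu(\RealNumbers)\leq 1$, not just $\leq 2$.
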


\medskip

Moreover, \acrshort{ipt}-\acrshort{dmft} solutions have finite moments of all orders.

\medskip

\begin{proposition}
\label{prop:RegularizationDMFTLoop}
Let $\nu^0 \in \ProbabilityMeasures$, $\nu=\DMFTmap(\nu^0)$, and $k\in 2\Integers$. If $\nu^0\in\ProbabilityMeasures$ has finite $k$-th moment, then  $\nu$ has finite $(k+4)$-th moment. In particular, any fixed point of the IPT-DMFT map has finite moments of all orders.
\end{proposition}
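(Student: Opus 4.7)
The plan is to use Theorem~\ref{thm:MomentsAsymptoticExpansion}, which equates finite $2n$-th moments of the Nevanlinna-Riesz measure of a Pick function with an asymptotic expansion of $f(iy)$ at $+i\infty$ up to order $1/y^{2n+1}$. The strategy is to track this order along the elementary steps of $\DMFTmap = \SCmap \circ \IPTmap$, using the key fact that each reciprocal-type operation $f \mapsto 1/(iy - f)$ gains two orders in the expansion. Since the DMFT loop contains exactly two such operations (one in $\IPTmap$, see \eqref{eq:IPTintermediate}, and one in $\SCmap$, see \eqref{eq:BathUpdateIPT}), the total gain on the moment order is $+4$, matching the statement.

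More concretely, assume $\nu^0$ has finite $k$-th moment with $k$ even, so that the associated hybridization $\Hybridization^0(iy) = |W|^2 \int d\nu^0(\eps)/(iy-\eps)$ admits an asymptotic expansion to order $1/y^{k+1}$. Writing $\Hybridization^0(iy) = \sum_{j=0}^k c_j/(iy)^{j+1} + o(1/y^{k+1})$ with $c_j = |W|^2 m_j(\nu^0)$ and inserting into the identity $G_\xi(iy)(iy - \Hybridization^0(iy)) = 1$ yields, by matching powers of $1/(iy)$, the recursion $b_0 = 1$, $b_1 = 0$, $b_n = \sum_{j=0}^{n-2} c_j b_{n-2-j}$ for $n \geq 2$, where $(b_n)$ are the expansion coefficients of $G_\xi$. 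Hence $b_n$ is finite for all $n \leq k+2$, and Theorem~\ref{thm:MomentsAsymptoticExpansion} applied to $-G_\xi$ shows that $\xi$ has finite moments up to order $k+2$.

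The next task is to propagate these bounds through the tempering-convolution-detempering construction \eqref{eq:IPTintermediate2}--\eqref{eq:IPTintermediate4}. The key observation is that the elementary inequalities $1/(1+e^{-\beta\eps}) \leq 1$ and $e^{-\beta\eps}/(1+e^{-\beta\eps}) \leq 1$ on $\RealNumbers$ yield, for all $m \leq k+2$, the bounds $\int |\eps|^m d\tilde\xi \leq \int |\eps|^m d\xi$ and $\int |\eps|^m e^{-\beta\eps} d\tilde\xi \leq \int |\eps|^m d\xi$, both finite. Applying these to each factor of $\tilde\mu = \tilde\xi * \tilde\xi * \tilde\xi$ and using Fubini shows that $\int |\eps|^m d\tilde\mu$ and $\int |\eps|^m e^{-\beta\eps} d\tilde\mu$ are also finite for $m \leq k+2$, so $\mu(d\eps) = (1+e^{-\beta\eps})\tilde\mu(d\eps)$ has finite moments up to order $k+2$ as well. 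Consequently, the self-energy $\Sigma(iy) = U^2 \int d\mu(\eps)/(iy-\eps)$ admits an expansion to order $1/y^{k+3}$. Decomposing $(iy - \NIHamiltonian_\perp - \Sigma(iy))^{-1}$ on the eigenbasis $(\eps_l, \phi_l)$ of $\NIHamiltonian_\perp$ then yields $\Hybridization_1(iy) = \sum_l \alpha_l/(iy - \eps_l - \Sigma(iy))$ with $\alpha_l = |\langle \phi_l, W\rangle|^2 \geq 0$, and the same reciprocal-type recursion applied to each term (with $\Sigma$ shifted by the constant $\eps_l$) gives an expansion of $\Hybridization_1$ to order $1/y^{k+5}$. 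A final application of Theorem~\ref{thm:MomentsAsymptoticExpansion} delivers finite moments of $\nu$ up to order $k+4$.

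The ``in particular'' claim follows by iteration: any fixed point $\nu$ is a probability measure, hence $m_0(\nu) < \infty$, and applying the main claim successively with $k = 0, 4, 8, \ldots$ yields finite moments of $\nu$ of all orders. The main technical obstacle in this program is the rigorous control of the $o(\cdot)$ remainders at the two reciprocal operations: one must verify that the $o(1/y^{k+1})$ residual in the expansion of $\Hybridization^0$ propagates to a genuine $o(1/y^{k+3})$ residual for $G_\xi$, and similarly at the $\Sigma \mapsto \Hybridization_1$ step, in precisely the form needed to re-apply Theorem~\ref{thm:MomentsAsymptoticExpansion}.
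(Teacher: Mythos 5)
Your argument is correct and follows essentially the same route as the paper's: track the moment order through the DMFT loop using Theorem~\ref{thm:MomentsAsymptoticExpansion}, gaining two orders at each of the two reciprocal steps and handling the triple convolution with a multinomial bound. The $o(\cdot)$-remainder propagation you flag at the end as a potential obstacle is not one: since $\Hybridization^0(iy)=O(1/y)$, only finitely many terms of the Neumann series for $\left(1-\Hybridization^0(iy)/(iy)\right)^{-1}$ contribute at any fixed order and every remainder-bearing term is dominated by $R(y)/(iy)=o(1/y^{k+2})$, so the remainder for $G_\xi$ is $o(1/y^{k+3})$ as needed, and the same elementary control applies at the $\SCmap$ step, exactly as the paper's displayed expansions make explicit.
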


\section{Proofs}

In this section, we give the proofs of the results stated in Section \ref{sec:MathematicalFramework} and Section \ref{sec:MainResults}. Among other things, we will make use of the results stated in Section \ref{sec:PickFunctions} about Pick functions and of some results from measure theory, which will be recalled when needed. As we will discuss continuity of functions defined on measure spaces, we must specify the topology we are considering. Recall that a sequence $(\mu_n)_{n \in \Integers}$ of finite Borel measures on $\RealNumbers$ is said to converge weakly to $\mu$ if for all bounded continuous function $f\in\BoundedContinuousFunctions$, \begin{equation} \label{eq:weak_CV}
  \int_\RealNumbers f \, d\mu_n \to \int_\RealNumbers f \, d\mu.   
\end{equation}
It  converges vaguely to $\mu$ if \eqref{eq:weak_CV} holds for all $f \in \ContinuousGoingToZeroFunctions$, the space of continuous functions from $\RealNumbers$ to~$\RealNumbers$ vanishing at infinity. Weak convergence clearly implies  vague convergence, and the converse is also true if all the $\mu_n$'s are probability measures, since $\RealNumbers$ is locally compact \cite{villani_optimal_2009}. We will also make use of the notions of Wasserstein distance and optimal transportation on $\RealNumbers$. The Wasserstein 2-distance between two Borel probability measures $\mu$ and $\nu$ on $\RealNumbers$ is defined by
\begin{equation}
\label{eq:WassersteinDistance}
	W_2(\mu,\nu) := \left(\underset{\pi\in\Pi(\mu,\nu)}{\inf} \int_{\RealNumbers^2}|x-y|^2d\pi(x,y)\right)^{\frac{1}{2}},
\end{equation}
where $\Pi(\mu,\nu)$ is the set of all couplings of $\mu$ and $\nu$, i.e. of Borel probability measures on $\RealNumbers \times \RealNumbers$ whose marginals with respect to the first and second variables are respectively $\mu$ and $\nu$. The infimum in the definition \eqref{eq:WassersteinDistance} is actually a minimum, and there exists in fact a unique $\pi_{\mu,\nu} \in\Pi(\mu,\nu)$ such that $W_2(\mu,\nu)^2=\int |x-y|^2d\pi_{\mu,\nu}(x,y)$ \cite{santambrogio_optimal_2015}.

\subsection{Proofs of the results in Section \ref{sec:MathematicalFramework}}

Most of the results presented in Section \ref{sec:MathematicalFramework} are known in other settings similar to ours. However, the proofs of Propositions \ref{prop:NonInteractingGreensFunction} and \ref{thm:SelfEnergyImpurityPattern} found in the literature are limited to specific states (such as ground states or Gibbs states) and do not emphasize the importance of the notion of \emph{equilibrium} state. Our proofs overcome this artificial distinction. Additionally, Proposition \ref{prop:DMFTTrivialLimitsExactness} is often regarded as obvious in the physics literature, but it is typically proven only for translation-invariant settings. Our proof allows for the computation of the Green's function of a strictly interacting Hubbard model and facilitates understanding of the DMFT equations, which we hope will aid the reader in grasping the machinery introduced in this section.
Furthermore, Theorem \ref{thm:MatsubaraFourierAnalyticExtension} has long been considered proven in \cite{baym_determination_1961} within the physics community. However, the proof given in \cite{baym_determination_1961} does not utilize classical analytic continuation techniques known at the time. Our proof is entirely different and relies precisely on analytic continuation techniques. We hope it sheds light on certain aspects of \cite{baym_determination_1961} in this specific \emph{finite}-dimensional Hilbert space setting. In particular, our proof of uniqueness of the analytic continuation of the Matsubara's Fourier coefficients does not rely on the asymptotic behavior of the Green's function, but on the properties of Pick functions.

\subsubsection{Proof of Proposition \ref{prop:NonInteractingGreensFunction}}
\label{sec:ProofNIGreensFunction}

Our proof is based on the time evolution of annihilation/creation propagators for ideal Fermi gases. More precisely, one has, as detailed in \cite[p.46]{bratteli_operator_1997},
\begin{equation}\label{eq:TimeEvolutionIdealFermiGaz}
\HeisenbergPicture{\AnnihilationOperator[\OneParticleState]}(t)=\AnnihilationOperator[e^{it\NIHamiltonian}\OneParticleState].
\end{equation}

For all $z \in \UpperHalfPlane$, we have to show that $(z-\NIHamiltonian)\GreensFunction(z)=(z-\NIHamiltonian)(\GreensFunction_{+}(z)+\GreensFunction_{-}(\bar{z})^\dagger)=\Identity$. Integrating by parts, one has

\begin{equation}
(z-\NIHamiltonian)\GreensFunction_{+}(z)=i\TimeGreensFunction(0^+)+\int_{\RealNumbers_+^*}e^{izt}\left(i\frac{d}{dt}\TimeGreensFunction(t)-\NIHamiltonian \TimeGreensFunction(t)\right) dt, \nonumber
\end{equation}

and for $t > 0$, $i\TimeGreensFunction(t)$ represents the sesquilinear form defined by

\begin{equation*}
\forall \OneParticleState,\OneParticleState' \in \OneParticleSpace, \quad\HermitianProduct{\OneParticleState}{(i\TimeGreensFunction)(t)\OneParticleState'}=\EquState(\HeisenbergPicture{\AnnihilationOperator[\OneParticleState]}(t)\CreationOperator[\OneParticleState']),
\end{equation*}

so that

\begin{equation}
\HermitianProduct{\OneParticleState}{(i\frac{d}{dt}\TimeGreensFunction)(t)\OneParticleState'}=\EquState(\frac{d}{dt}(\HeisenbergPicture{\AnnihilationOperator[\OneParticleState]})(t)\CreationOperator[\OneParticleState'])=-i\EquState(\HeisenbergPicture{\AnnihilationOperator[\NIHamiltonian\OneParticleState]}(t)\CreationOperator[\OneParticleState'])=\HermitianProduct{\NIHamiltonian\OneParticleState}{\TimeGreensFunction(t)\OneParticleState'}= \HermitianProduct{\OneParticleState}{\NIHamiltonian\TimeGreensFunction(t)\OneParticleState'}. \nonumber
\end{equation}

Similarly, another integration by parts leads to

\begin{equation}
(z-\NIHamiltonian)\GreensFunction_{-}(\bar{z})^\dagger=i\TimeGreensFunction(0^-)^\dagger + \int_{\RealNumbers_{-}}e^{-izt}(i\frac{d}{dt}\TimeGreensFunction(t)-\TimeGreensFunction(t) \NIHamiltonian)^\dagger dt. \nonumber
\end{equation}

For $t<0$, $i\TimeGreensFunction(t)$ represents the sesquilinear form defined by

\begin{equation*}
\forall \OneParticleState,\OneParticleState' \in \OneParticleSpace, \HermitianProduct{\OneParticleState}{(i\TimeGreensFunction)(t)\OneParticleState'}=-\EquState(\CreationOperator[\OneParticleState']\HeisenbergPicture{\AnnihilationOperator[\OneParticleState]}(t)).
\end{equation*}

Note that, because $\EquState$ is an equilibrium state and by the cyclic property of the trace, one has

\begin{equation}
\label{eq:TimeTranslationInvariantPropagators}
\HermitianProduct{\OneParticleState}{(i\TimeGreensFunction)(t)\OneParticleState'}=-\EquState(\HeisenbergPicture{\CreationOperator[\OneParticleState']}(-t)\AnnihilationOperator[\OneParticleState])
\end{equation}

(equilibrium propagators are time-translation-invariant), so that 

\begin{equation*}
\HermitianProduct{\OneParticleState}{(i\frac{d}{dt}\TimeGreensFunction)(t)\OneParticleState'}= i \EquState(\HeisenbergPicture{\CreationOperator[\NIHamiltonian\OneParticleState']}(-t)\AnnihilationOperator[\OneParticleState])=\HermitianProduct{\OneParticleState}{\TimeGreensFunction(t)\NIHamiltonian\OneParticleState'}.
\end{equation*}

Finally, one has for $ \OneParticleState,\OneParticleState' \in \OneParticleSpace$,

\begin{equation*}
\HermitianProduct{\OneParticleState}{(z-\NIHamiltonian)\GreensFunction(z)\OneParticleState'}=\HermitianProduct{\OneParticleState}{i\TimeGreensFunction(0^+) \OneParticleState'}+ \HermitianProduct{\OneParticleState}{i\TimeGreensFunction(0^{-})^\dagger \OneParticleState'}=\EquState(\AnnihilationOperator[\OneParticleState]\CreationOperator[\OneParticleState'] +\CreationOperator[\OneParticleState']\AnnihilationOperator[\OneParticleState]) = \HermitianProduct{\OneParticleState}{\OneParticleState'},
\end{equation*}

because $\EquState$ is a state and the annihilation/creation operators satisfy the \acrshort{car}.

\subsubsection{Proof of Theorem \ref{thm:SelfEnergyImpurityPattern}}
For the simplicity of the proof, note first that an equivalent definition of an impurity problem is that there exists an \emph{impurity space} $\OneParticleSpace[\mathrm{imp}]\subset \OneParticleSpace$ such that the interacting part $\InteractingHamiltonian$ of the Hamiltonian $\Hamiltonian$ as introduced in \eqref{eq:HamiltonianNIAndIDecomposition} belongs to the following subalgebra
\begin{equation}
    \InteractingHamiltonian \in \mathcal{A}\{\CreationOperator[\OneParticleState]\AnnihilationOperator[\OneParticleState'], \OneParticleState,\OneParticleState' \in \OneParticleSpace[\mathrm{imp}]\}.
\end{equation}
In other words, the interacting part of the Hamiltonian is an element of the \acrfull{gicar} algebra generated by  $\OneParticleSpace[\mathrm{imp}]$ (see \cite{sato_gicar_2024} for a concise introduction to \acrshort{gicar} algebras).

From that, the proof is a generalization of \cite{lin_sparsity_2020}. To prove the sparsity pattern of $\SelfEnergy$, we have to prove that for all $z \in \UpperHalfPlane$, for all $\OneParticleState \in \OneParticleSpace, \OneParticleState' \in \OneParticleSpace[\mathrm{imp}]^\perp$,
\begin{equation}\label{eq:GSigmaSelfEnergyImpurityPatternEquivalence}
\HermitianProduct{\OneParticleState}{\GreensFunction(z)\SelfEnergy(z)\OneParticleState'}=0 \text{ and } \HermitianProduct{\OneParticleState'}{\SelfEnergy(z)\GreensFunction(z)\OneParticleState}=0.
\end{equation}
The first equality is equivalent to
\begin{equation*}
\HermitianProduct{\OneParticleState}{\GreensFunction(z)(z-\NIHamiltonian)\OneParticleState'}=\HermitianProduct{\OneParticleState}{\OneParticleState'},
\end{equation*}
and, similarly as in the previous proof, we have
\begin{align}
\HermitianProduct{\OneParticleState}{\GreensFunction(z)(z-\NIHamiltonian)\OneParticleState'}=\HermitianProduct{\OneParticleState}{\OneParticleState'}&+\int_{\RealNumbers_+}e^{izt}\left(\frac{d}{dt}\HermitianProduct{\OneParticleState}{i\TimeGreensFunction(t)\OneParticleState'} - \HermitianProduct{\OneParticleState}{\TimeGreensFunction(t)\NIHamiltonian\OneParticleState'} \right)dt \nonumber \\
&+\int_{\RealNumbers_-^*} e^{-izt}\left(\frac{d}{dt}\HermitianProduct{i\TimeGreensFunction(t)\OneParticleState}{\OneParticleState'}-\HermitianProduct{\TimeGreensFunction(t)\NIHamiltonian\OneParticleState}{\OneParticleState'}\right)dt. \nonumber
\end{align}
Now, for all $t \in \RealNumbers_+$, we have using the cyclicity of the trace, similarly as in \eqref{eq:TimeTranslationInvariantPropagators},
\begin{equation*}
\frac{d}{dt}\HermitianProduct{\OneParticleState}{i\TimeGreensFunction(t)\OneParticleState'}=-i \EquState\left(\AnnihilationOperator[\OneParticleState]\HeisenbergPicture{\Commutator{\Hamiltonian}{\CreationOperator[\OneParticleState']}}(-t)\right).
\end{equation*}
To compute the commutator, note first that for all $\OneParticleState_1,\OneParticleState_2 \in \OneParticleSpace[\mathrm{imp}]$, we have using the \acrshort{car}
\begin{equation*}
\Commutator{\CreationOperator[\OneParticleState_1]\AnnihilationOperator[\OneParticleState_2]}{\CreationOperator[\OneParticleState']}=\HermitianProduct{\OneParticleState_2}{\OneParticleState'}\CreationOperator[\OneParticleState_1]=0
\end{equation*}
since $\OneParticleState' \in \OneParticleSpace[\mathrm{imp}]^\perp$, so that $\CreationOperator[\OneParticleState']$ commutes with the generators of the algebra to which $\InteractingHamiltonian$ belongs, hence with $\InteractingHamiltonian$. Moreover, we have using \eqref{eq:TimeEvolutionIdealFermiGaz},
\begin{equation*}
\Commutator{\SecondQuantization{\NIHamiltonian}}{\CreationOperator[\OneParticleState']}=-i\frac{d}{dt}\left(t\mapsto \HeisenbergPicture{\CreationOperator[\OneParticleState']}_{\NIHamiltonian}\right)(0)=\CreationOperator[\NIHamiltonian \OneParticleState'],
\end{equation*}
where $\HeisenbergPicture{\cdot}_{\NIHamiltonian}$ denotes the Heisenberg picture associated to the non-interacting Hamiltonian $\SecondQuantization{\NIHamiltonian}$, so that we end up with
\begin{equation*}
\frac{d}{dt}\HermitianProduct{\OneParticleState}{i\TimeGreensFunction(t)\OneParticleState'}=-i\EquState\left(\AnnihilationOperator[\OneParticleState] \HeisenbergPicture{\CreationOperator[\NIHamiltonian\OneParticleState']}(-t)\right)=\HermitianProduct{\OneParticleState}{\TimeGreensFunction(t)\NIHamiltonian\OneParticleState'}.
\end{equation*}

One shows, using the same techniques, that for all $t \in \RealNumbers_-^*$,
\begin{equation*}
\frac{d}{dt}\HermitianProduct{i\TimeGreensFunction(t)\OneParticleState}{\OneParticleState'}=\HermitianProduct{\TimeGreensFunction(t)\NIHamiltonian\OneParticleState}{\OneParticleState'}
\end{equation*}
hence the first equality of \eqref{eq:GSigmaSelfEnergyImpurityPatternEquivalence} is proven. The second equality can be proved similarly.

\subsubsection{Proof of Proposition \ref{prop:DMFTTrivialLimitsExactness}}

We stick to the case in Remark \ref{rmk:AIMDMFTStates}, where the \acrshort{aim} states are Gibbs states at inverse temperature $\StatisticalTemperature$ and chemical potential $\ChemicalPotential$. Now for the first case, if $\OnSiteRepulsion[]=0$, the \acrshort{aim} are non-interacting, hence the Green's functions are the non-interacting Green's functions, the self-energies $\SelfEnergy_{\mathrm{imp},p}$ are identically zero, and the hybridization functions are given by, for all $z \in \UpperHalfPlane$,
\begin{equation*}
\Hybridization_p(z)=W_p\left(z-\NIHamiltonian_{\overline{p}} \right)^{-1} W_p^\dagger.
\end{equation*}

Now if the partition $\DMFTPartition$ is such that $\HubbardGraph=\bigoplus_{p=1}^\PartitionSize \HubbardGraph[p]$, or if $\HoppingMatrix[]=0$, we have
\begin{equation*}
\NIHamiltonian_{\mathrm{H},p,\overline{p}}=0,
\end{equation*}
so that the hybridization functions $\Hybridization_p$ are identically zero. This is equivalent to zero-dimensional baths, and all the \acrshort{aim}s reduce to Hubbard models defined by $(\HubbardGraph[p],\HoppingMatrix[p],\OnSiteRepulsion[p])$. Hence the self-energies $\SelfEnergy_p$ are the self-energies associated to the corresponding Hubbard models.

\subsubsection{Proof of Theorem \ref{thm:MatsubaraFourierAnalyticExtension}}

We start by proving the equality. On the one hand, the Källén-Lehmann representation \eqref{eq:KL_C+} of $\TimeGreensFunction$ associated to the grand canonical Hamiltonian $\Hamiltonian'=\Hamiltonian-\ChemicalPotential \TotalNumberOperator$ reads for $\OneParticleState,\OneParticleState' \in \OneParticleSpace$ and  $z \in \UpperHalfPlane$,
\begin{equation*}
 \HermitianProduct{\OneParticleState}{\GreensFunction(z)\OneParticleState'}=\sum_{\FockSpaceState,\FockSpaceState' \in \FockSpaceBasis}(\rho_\FockSpaceState+\rho_{\FockSpaceState'})\HermitianProduct{\FockSpaceState}{\AnnihilationOperator[\OneParticleState]\FockSpaceState'}\HermitianProduct{\FockSpaceState'}{\CreationOperator[\OneParticleState']\FockSpaceState}\frac{1}{z+\ChemicalPotential +(E_\FockSpaceState-E_{\FockSpaceState'})}.
\end{equation*}
On the other hand, the Källén-Lehmann representation of $\MatsubarasGreensFunction$ reads for $\OneParticleState,\OneParticleState' \in \OneParticleSpace$ and $ n \in \Integers$,
\begin{equation*}
 \HermitianProduct{\OneParticleState}{\MatsubaraFourierSeries_n\OneParticleState'}=\sum_{\FockSpaceState,\FockSpaceState' \in \FockSpaceBasis} \left(\rho_\FockSpaceState + \rho_\FockSpaceState e^{\StatisticalTemperature(E_\FockSpaceState-E_{\FockSpaceState'}+\ChemicalPotential)} \right)\HermitianProduct{\FockSpaceState}{\AnnihilationOperator[\OneParticleState]\FockSpaceState'}\HermitianProduct{\FockSpaceState'}{\CreationOperator[\OneParticleState']\FockSpaceState}\frac{1}{i\MatsubaraFrequency_n + \ChemicalPotential +(E_\FockSpaceState-E_{\FockSpaceState'})}.
\end{equation*}
Note now that whenever $\HermitianProduct{\FockSpaceState}{\AnnihilationOperator[\OneParticleState]\FockSpaceState'}\neq 0$, we have $N_{\FockSpaceState'}=N_{\FockSpaceState}+1$ and then
\begin{equation*}
\rho_\FockSpaceState e^{\StatisticalTemperature(E_\FockSpaceState-E_{\FockSpaceState'}+\ChemicalPotential)} = e^{-\StatisticalTemperature(E_{\FockSpaceState'}-\ChemicalPotential(N_\FockSpaceState+1))}=\rho_{\FockSpaceState'},
\end{equation*}
yielding
\begin{equation*}
\forall n \in \Integers,\ \GreensFunction(i\MatsubaraFrequency_n)=\MatsubaraFourierSeries_{n}.
\end{equation*}
To prove uniqueness, we use the fact that $-G$ is a Pick function (see Proposition \ref{prop:PickFunctionsDMFT}), and its Nevanlinna-Riesz measure is a weighted sum of finitely many Dirac measures. It follows that the \acrlong{acp} defined by $\left(i\MatsubaraFrequency_n,-\MatsubaraFourierSeries_n\right)_{n \in \Integers}$ has no other solution thanks to Theorem~\ref{thm:UniquenessNevanlinnaPickInterpolation}, which concludes.

\subsubsection{Proof of Proposition \ref{prop:IPTFiniteDimension}}

Proposition \ref{prop:IPTFiniteDimension} can actually be seen as a corollary of Proposition \ref{prop:DefinitionIPTmap}, but we give at this stage a pedestrian proof, which enlights the way the hybridization function "encapsulates" the  energy of the bath orbitals and their coupling to the impurity. We have for all $z \in \UpperHalfPlane$, 
\begin{equation*}
(z-\Hybridization(z))^{-1}=\left(z-\NIHamiltonian_{\mathrm{AIM}}\right)^{-1}_{1,1}, \quad \text{ where } \NIHamiltonian_\mathrm{AIM}=\left(\begin{matrix}
    0& \sqrt{a_1} & \sqrt{a_2} &\cdots & \sqrt{a_K} \\
    \sqrt{a_1}&\eps_1 & 0 &\cdots & 0 \\
    \sqrt{a_2}& 0 & \eps_2 & \ddots &\vdots \\
    \vdots & \vdots & \ddots& \ddots & 0 \\
    \sqrt{a}_K & 0 & \ldots & 0 & \eps_K
\end{matrix}\right),
\end{equation*}
which holds true in particular for $z=i\MatsubaraFrequency_n$. Note that $\NIHamiltonian_{\mathrm{AIM}}$ is self-adjoint and that for all $z \in \UpperHalfPlane$, using functional calculus,
\begin{equation*}
    \int_{0}^\StatisticalTemperature e^{i\MatsubaraFrequency_n\tau} \frac{-e^{-\tau\NIHamiltonian_{\mathrm{AIM}}}}{1+e^{-\StatisticalTemperature\NIHamiltonian_{\mathrm{AIM}}}}d\tau = \left( i\MatsubaraFrequency_n - \NIHamiltonian_{\mathrm{AIM}}\right)^{-1},
\end{equation*}
so that we can perform explicitly the Fourier summation :
\begin{equation}
\label{eq:SchurIPT}
    \frac{1}{\StatisticalTemperature}\sum_{n' \in \RelativeIntegers}e^{-i\MatsubaraFrequency_n\tau}(i\MatsubaraFrequency_{n'} - \Hybridization(i\MatsubaraFrequency_{n'}))^{-1}=\left(\frac{-e^{-\tau\NIHamiltonian_{\mathrm{AIM}}}}{1+e^{-\StatisticalTemperature\NIHamiltonian_\mathrm{AIM}}}\right)_{1,1}=-\sum_{k=1}^{K+1}\frac{\vert P_{1,k}\vert^2 e^{-\tau \eps'_k}}{1+e^{-\StatisticalTemperature\eps'_k}},
\end{equation}

where $P \in \ComplexNumbers^{(K+1)\times(K+1)}$ is a unitary matrix such that $\NIHamiltonian_{\mathrm{AIM}}=P\diag(\eps'_1,\cdots,\eps'_{K+1}) P^\dagger$. The right-hand side of \eqref{eq:SchurIPT} is a continuous function on $[0,\StatisticalTemperature)$ and the following integral is well-defined and reads
\begin{align*}
\int_{0}^\StatisticalTemperature e^{i\MatsubaraFrequency_n\tau}\left( \frac{1}{\StatisticalTemperature}\sum_{n'\in \RelativeIntegers}\left(i\MatsubaraFrequency_{n'} - \Hybridization(i\MatsubaraFrequency_{n'})\right)^{-1}e^{-i\MatsubaraFrequency_{n'}\tau}\right)^{3} d\tau =&\\
\sum_{k_1,k_2,k_3=1}^{K+1}\frac{1+e^{-\StatisticalTemperature(\eps'_{k_1}+\eps'_{k_2}+\eps_{k_3})}}{(1+e^{-\StatisticalTemperature\eps'_{k_1}})(1+e^{-\StatisticalTemperature\eps'_{k_2}})(1+e^{-\StatisticalTemperature\eps'_{k_3}})}& \frac{\vert P_{1,k_1}\vert^2 \vert P_{1,k_2}\vert^2 \vert P_{1,k_3}\vert^2 }{i\MatsubaraFrequency_{n}-(\eps'_{k_1}+\eps'_{k_2}+\eps'_{k_3})}.
\end{align*}
Let us now compute the spectrum of $\NIHamiltonian_{\mathrm{AIM}}$: a simple calculation shows that its characteristic polynomial $\chi_{\NIHamiltonian_\mathrm{AIM}}$ reads
\begin{equation*}
    \chi_{\NIHamiltonian_{\mathrm{AIM}}}(\eps)=\left(\prod_{k=1}^K(\eps-\eps_k)\right)\eps - \sum_{k=1}^K a_k \prod_{l=1,l\neq k}^K (\eps-\eps_l).
\end{equation*}
By assumption on the $a_k$'s and $\eps_k$'s, we have $\chi_{\NIHamiltonian_{\mathrm{AIM}}}(\eps_k)\neq 0$, so that 
\begin{equation*}
    \chi_{\NIHamiltonian_{\mathrm{AIM}}}(\eps)=0 \iff \eps-\Hybridization(\eps)=0.
\end{equation*}
This  in fact straightforwardly follows from the Schur complement approach.
Moreover, one can compute explicitly $\vert P_{1,k}\vert^2$: by definition, we have for all $k \in \IntSubSet{1}{K+1}$ and $l \in \IntSubSet{1}{K}$,
\begin{equation*}
    \sqrt{a_l}P_{1,k} + \eps_l P_{l+1,k}=\eps'_{k} P_{l+1,k} \implies \frac{a_l}{(\eps'_k-\eps_l)^2}\vert P_{1,k} \vert^2 = \vert P_{l+1,k} \vert^2,
\end{equation*}
which gives after summation on $l$ and using the fact that $P P^\dagger =1$,
\begin{equation*}
    \vert P_{1,k} \vert^2 =(1-\Hybridization'(\eps'_k))^{-1}.
\end{equation*}

This shows that $\SelfEnergy^{\mathrm{\acrshort{ipt}}}(i\MatsubaraFrequency_n)=\SelfEnergy_{\mathrm{imp},n}$, hence $\SelfEnergy^{\mathrm{\acrshort{ipt}}}$ is a solution to the \acrfull{acp} defined in \eqref{eq:IPTAnalyticalContinuationProblem}. Then, Theorem \ref{thm:UniquenessNevanlinnaPickInterpolation} ensures that there is no other solution, which concludes the proof.

\subsection{Proof of Proposition \ref{prop:PickFunctionsDMFT} ($-\GreensFunction, -\SelfEnergy, -\Hybridization$ are Pick matrices)}

The fact that the Green's function $\GreensFunction$ is a Pick matrix readily follows from the KL representation~\eqref{eq:KL_C+} and inequality \eqref{eq:G_Pick}.

\medskip

Combining \eqref{eq:G0} and \eqref{eq:self-energy}, the self-energy can be written as
$$
\forall z \in \UpperHalfPlane, \quad 
\Sigma(z)= z-H^0 - G(z)^{-1}.
$$
Recall that we know from \eqref{eq:G_Pick} that $G(z)$ is invertible for all $z \in \UpperHalfPlane$. Since $-G$ is Pick, $G^{-1}$ is Pick. This readily implies that $\Sigma$ is analytic. Let us now prove that $-\Sigma$ is Pick. First, we infer from the KL representation \eqref{eq:KL_C+} that for all $k \in \Integers$, there exists $m_0,\dots,m_{2k}\in\ComplexNumbers^{n\times n}$ such that it holds
$$
G(iy) =\frac{m_0}{iy} + \frac{m_1}{(iy)^2} +\frac{m_2}{(iy)^3}+\dots+\frac{m_{2k}}{(iy)^{2k+1}}+o_{y \to +\infty} \left(\frac{1}{y^{2k+1}}\right).
$$
Using  the anti-commutation relation $\AnnihilationOperator[\phi]\CreationOperator[\phi']+\CreationOperator[\phi']\AnnihilationOperator[\phi]=\HermitianProduct{\phi}{\phi'}$ and the normalization condition $\sum_{\psi \in \FockSpaceBasis} \rho_\psi = 1$, we obtain that
$m_0=I_n$. 
As a consequence, we have
$$
G(iy)^{-1} = (iy) I_n -m_1 + \frac{1}{iy}(m_1^2-m_2) + o_{y \to +\infty}\left( \frac 1 y \right).
$$
 In view of Theorem~\ref{thm:MomentsAsymptoticExpansion},  the Pick matrix $G^{-1}$ has a Nevanlinna-Riesz representation of the form
$$
G(z)^{-1} = z - \Sigma_\infty +\int_{\RealNumbers} \frac{d\mu(\epsilon)}{\epsilon-z},
$$
with $\Sigma_\infty \in \SelfAdjointOperator_n(\ComplexNumbers)$  and  $\mu$ a finite Borel $\SelfAdjointOperator_n^+(\ComplexNumbers)$-valued measure on $\RealNumbers$. We thus obtain that
$$
\forall z \in \UpperHalfPlane, \quad \SelfEnergy(z) = 
\SelfEnergy_\infty +\int_{\RealNumbers} \frac{d\mu(\epsilon)}{z-\epsilon},
$$
from which we deduce that $-\SelfEnergy$ is Pick. As a matter of fact, $\SelfEnergy$ is a matrix-valued rational function, that is $\mu$ is a weighted sum of finitely many Dirac measures.

  \medskip
  
  Let $\Hybridization$ be a hybridization function of an \acrshort{aim} defined as \eqref{def:HybridizationAIM}. Since $H^0_{\rm bath}$ is self-adjoint, its spectrum is real and \eqref{def:HybridizationAIM} thus defines an analytic function on $\UpperHalfPlane$. In addition, for all $z\in\UpperHalfPlane$, $\Im(z-H^0_{\rm bath})=\Im(z)>0$ hence  $\Im((z-H^0_{\rm bath})^{-1}) <0$. 
  Since the congruence preserves the sign of the imaginary part we have $\Im(\Hybridization(z))\leq 0$. This shows that $-\Hybridization$ is a Pick matrix.

\subsection{Proof of Proposition \ref{prop:NonExistenceFiniteBathIPTDMFT} (no finite-dimensional bath solution)}

In this section, we prove the statement of Proposition \ref{prop:NonExistenceFiniteBathIPTDMFT}, which states that there is no solution to the \acrshort{ipt}-\acrshort{dmft} equations considering only hybridization functions with a finite-dimensional bath.

\begin{lemma}
\label{lem:IncreaseNumberOfPoles}
	Let $f$ and $g$ be rational matrix-valued functions of size $n\geq 1$ given by 
	\begin{equation*}
		f(z)=\sum_{k=1}^K \frac{A_k}{z-\eps_k} \quad\text{and} \quad g(z) = (z-C-f(z))^{-1},
	\end{equation*}
	where $A_1,\dots,A_K\in\SelfAdjointOperator_n^+(\ComplexNumbers)\setminus\{0\}$ are positive semi-definite matrices, $\eps_1<\dots<\eps_K$ are real numbers and $C\in\SelfAdjointOperator_n(\ComplexNumbers)$. Assume that the matrices $A_1,\dots,A_K$ and $C$ commute. Then $g$ writes 
	\begin{equation}
	\label{eq:Expression_g}
		g(z) = \sum_{k=1}^{K'} \frac{A_k'}{z-\eps_k'},
	\end{equation}
	where $K'\geq K+1$,  $A_k'\in\SelfAdjointOperator_n^+(\ComplexNumbers)\setminus\{0\}$, and $\eps_k'\in\RealNumbers$, with $\eps_1'<\dots<\eps_{K'}'$.
\end{lemma}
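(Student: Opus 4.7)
The plan is to use the commutativity hypothesis to simultaneously diagonalize $A_1, \ldots, A_K, C$, reducing the matrix-valued identity for $g$ to a family of $n$ scalar identities, each amenable to the interlacing theorem for rational Pick functions. By the simultaneous diagonalization theorem for commuting Hermitian matrices, there exist an orthonormal basis $(v_j)_{j=1}^n$ of $\ComplexNumbers^n$ and scalars $\alpha_{k,j}\geq 0$ (nonnegative by $A_k \succeq 0$) and $\gamma_j \in \RealNumbers$ with $A_k v_j = \alpha_{k,j} v_j$ and $C v_j = \gamma_j v_j$. Write $P_j := v_j v_j^\dagger$ for the rank-one projector onto $\mathrm{span}(v_j)$ and $S_j := \{k : \alpha_{k,j}>0\}$ for the set of ``active'' poles on that line. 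Then $zI-C-f(z)$ acts on $\mathrm{span}(v_j)$ by multiplication by the scalar
\[
h_j(z) := z - \gamma_j - \sum_{k \in S_j} \frac{\alpha_{k,j}}{z-\eps_k},
\]
so that $g(z) = \sum_{j=1}^n P_j/h_j(z)$.

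The next step is to analyze each scalar $h_j$ by an interlacing argument. Its derivative $h_j'(z) = 1 + \sum_{k\in S_j}\alpha_{k,j}/(z-\eps_k)^2$ is strictly positive on the real line away from the poles $\{\eps_k\}_{k \in S_j}$, so $h_j$ is strictly increasing on each connected component; combined with $h_j(\eps_k^\pm) = \mp\infty$ for $k \in S_j$ and $h_j(\pm\infty)=\pm\infty$, the intermediate value theorem produces exactly $|S_j|+1$ simple real zeros $R_j$, interlacing $\{\eps_k\}_{k\in S_j}$. Partial-fractioning each $1/h_j$ and regrouping by pole position yields
\[
g(z) = \sum_{r \in R}\frac{A_r'}{z-r}, \qquad A_r' := \sum_{\substack{j \\ r \in R_j}} \frac{P_j}{h_j'(r)}, \qquad R := \bigcup_j R_j,
\]
with each $A_r'$ a nontrivial nonnegative combination of orthogonal projectors, hence a nonzero element of $\SelfAdjointOperator_n^+(\ComplexNumbers)$. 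Enumerating $R$ in increasing order as $\eps_1'<\cdots<\eps_{K'}'$ gives the announced partial fraction form.

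The main step, and the one I expect to require the most care, is the lower bound $K' = |R| \geq K+1$. Since each $A_k \neq 0$, one has $\bigcup_j S_j = \{1,\ldots,K\}$; the cleanest way to conclude is to exhibit an eigenspace $j_\star$ with $S_{j_\star} = \{1,\ldots,K\}$, so that $|R_{j_\star}| = K+1$ already and $K' \geq |R_{j_\star}| = K+1$ follows. This single-eigenspace strategy is automatic in the setting of the intended application, the Bath Update in Proposition~\ref{prop:NonExistenceFiniteBathIPTDMFT}, where the matrices $A_k$ act on the impurity space as positive multiples of the identity and therefore admit every common eigenvector as a positive eigenvector. The delicate point is that roots of $h_j$ coming from different eigenspaces may coincide, so the bound $|R|\geq K+1$ is driven by the maximal single-eigenspace interlacing count rather than by a naive pigeonhole on the union; this is precisely why the existence of a common positive eigenvector for all the $A_k$'s is the crucial ingredient in the proof.
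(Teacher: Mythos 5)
Your proposal follows the same route as the paper: simultaneous diagonalization of the commuting family $A_1,\dots,A_K,C$, reduction to the scalar rational Pick functions $h_j$, interlacing on each scalar component, and reassembly into a partial-fraction decomposition with PSD residues (and you correctly note $h_j'(r)>0$ at each simple zero, so each $A'_r$ is a nonnegative combination of orthogonal projectors). Where you differ is in your scrutiny of the final pole count, and here you are \emph{more} careful than the published proof. The paper asserts that each scalar $u_l(z)=z-c^l-\sum_k \lambda_k^l/(z-\eps_k)$ has exactly $K+1$ real zeros; but that interlacing argument tacitly assumes $\lambda_k^l>0$ for \emph{every} $k$. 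If some $\lambda_k^l=0$, then $u_l$ does not blow up at $\eps_k$, the adjacent interval need not contain a zero, and the count drops. You correctly isolate the ingredient needed to close the argument—an index $j_\star$ with $S_{j_\star}=\{1,\dots,K\}$, i.e., a common eigenvector on which every $A_k$ acts with strictly positive eigenvalue—and correctly note that it is available in the intended application of Proposition~\ref{prop:NonExistenceFiniteBathIPTDMFT}, where the $A_k$ are positive multiples of the identity.

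Your concern is not a technicality: the lemma is actually false under the bare PSD hypothesis. Take $n=2$, $K=3$, $\eps_1=-2$, $\eps_2=0$, $\eps_3=2$, $C=0$, $A_1=A_3=\diag(1,0)$, $A_2=\diag(0,6)$; these are PSD, nonzero, and commute with $C$, yet
\begin{equation*}
g(z)=\diag\left(\frac{z^2-4}{z(z^2-6)},\ \frac{z}{z^2-6}\right)
=\frac{\diag(2/3,0)}{z}+\frac{\diag(1/6,1/2)}{z-\sqrt{6}}+\frac{\diag(1/6,1/2)}{z+\sqrt{6}}
\end{equation*}
has only $K'=3<K+1=4$ poles. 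Both the lemma statement and the paper's own proof therefore implicitly require the strengthened hypothesis you isolate (each $A_k$ positive definite, or at least a common eigenvector that is a strictly positive eigenvector of every $A_k$). The one thing your write-up should change is to state that hypothesis explicitly in the lemma rather than defer it to the application.
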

\begin{proof}
	The fact that $\UpperHalfPlane\ni z\mapsto -g(z)$ is a Pick matrix follows from the fact that $-f$ is also a Pick matrix and that $\Im(M)>0$ implies that $M$ is invertible and that $\Im(M^{-1})<0$. Indeed, for all $z\in\UpperHalfPlane$, we have
	\begin{equation*}
		\Im \left( z - C - f(z)\right)=\Im(z)-\Im(f(z))\geq \Im(z) > 0.
	\end{equation*}
	Theorem \ref{thm:NevanlinnaRepresentation} gives a Nevanlinna-Riesz representation for $-g$, but since $g$ is a rational matrix-valued function, the Nevanlinna-Riesz measure of $-g$ is just a finite sum of Dirac measures. We thus have 
	\begin{equation*}
		g(z) = -\tilde{A}z-\tilde{B}+\sum_{k=1}^{K'} \frac{A_k'}{z-\eps_k'},
	\end{equation*}
	with the stated properties of $A_k'$ and $\eps_k'$, and $\tilde{A}\geq 0$, $\tilde{B}\in\SelfAdjointOperator_L(\ComplexNumbers)$.
	Moreover, since $g(iy)\underset{y\to +\infty}{\longrightarrow}0$ due to the definition of $g$, the affine part of $-g$ is zero. This ensures that $g$ is of the  form \eqref{eq:Expression_g}. It remains to show that the number of poles of $g$ is at least $K+1$. Because of the assumption that the matrices $A_1,\dots,A_K$ and $C$ commute, they can be codiagonalized in an orthonormal basis. Let $P$ be the unitary matrix, such that $PA_kP^\dagger=\diag(\lambda_k^1,\dots,\lambda_k^L)$ and $PCP^\dagger=\diag(c^1,\dots,c^L)$. We have
	\begin{equation*}
		g(z) = P^\dagger\diag\left(\frac{1}{z-c^1-\sum_{k=1}^K\frac{\lambda_k^1}{z-\eps_k}},\dots,\frac{1}{z-c^L-\sum_{k=1}^K\frac{\lambda_k^L}{z-\eps_k}}\right)P.
	\end{equation*}
	The set of poles of $g$ contains the union of the sets of zeros of the rational functions $u_l(z) = z-c^l-\sum_{k=1}^K\frac{\lambda_k^l}{z-\eps_k}$, for $1\leq l\leq L$. The zeros of $u_l$ are on the real axis, because $\Im(u_l(z))>0$ if $\Im(z)>0$ and $\Im(u_l(z))<0$ if $\Im(z)<0$. For $\eps\in\RealNumbers\setminus\{\eps_1,\dots,\eps_K\}$, we have $u_l'(\eps) = 1+\sum_{k=1}^K\frac{\lambda_k^l}{(\eps-\eps_k)^2}>0$ so that $u_l$ is increasing on $(-\infty,\eps_1)\cup(\eps_1,\eps2)\cup\dots\cup(\eps_K,+\infty)$. As $u_l(\eps)\underset{\eps\to -\infty}{\longrightarrow}-\infty$ and $u_l(\eps)\underset{\eps\to\eps_1,\eps<\eps_1}{\longrightarrow}+\infty$, $u_l$ has exactly 1 zero in $(-\infty,\eps_1)$ by the intermediate value theorem. The same argument shows that there is exactly one zero in each interval $(\eps_k,\eps_{k+1})$ and in the interval $(\eps_K,+\infty)$. So $u_l$ exactly $K+1$ zeros. Therefore, $g$ has more than $K+1$ poles, which concludes the proof of the lemma.
\end{proof}

Suppose $\Hybridization$ is a hybridization function associated to a bath of finite dimension and which is solution to the \acrshort{ipt}-\acrshort{dmft} equations. That is, there exist $K\geq 1$, $a_1,\dots,a_K>0$ and $\BathEnergy[1]<\dots<\BathEnergy[K]$ such that 
\begin{equation*}
	\Hybridization(z) = \sum_{k=1}^K \frac{a_k}{z-\BathEnergy}.
\end{equation*}
Let $\SelfEnergy$ be the self-energy given by the \acrshort{ipt} impurity solver, see Proposition \ref{prop:IPTFiniteDimension}.
We have
\begin{equation*}
	\SelfEnergy(z) =U^2\sum_{1\leq k_1,k_2,k_3\leq K'} \frac{a_{k_1,k_2,k_3}'}{z-\eps_{k_1}'-\eps_{k_2}'-\eps_{k_3}'} \text{ and }
\end{equation*}

\begin{equation*}
a_{k_1,k_2,k_3}'=\left(1+e^{-\StatisticalTemperature(\eps_{k_1}'+\eps_{k_2}'+\eps_{k_3}')}\right)\prod_{i=1}^3\frac{1-\Hybridization'(\eps_{k_i}')}{1+e^{-\StatisticalTemperature \eps_{k_i}'}} > 0,
\end{equation*}
where $\eps_1'<\dots<\eps_{K'}'$ are the poles of the rational function $(z-\Hybridization(z))^{-1}$. The number of poles is exactly $K+1$ and the latter are real numbers, see the proof of Lemma \ref{lem:IncreaseNumberOfPoles}, so that $K'=K+1$. Since $\OnSiteRepulsion[]>0$ by assumption, $\SelfEnergy$ has more than $K+1$ poles (we do not need a better estimation of the number of poles). So we can write
\begin{equation*}
	\SelfEnergy(z) = U^2\sum_{k=1}^{K''} \frac{a_k''}{z-\eps_k''},
\end{equation*}
with $K''\geq K+1$, $a_k''>0$ and $\eps_1<\dots<\eps_{K''}$. As $\Hybridization$ is assumed to be a solution to the \acrshort{ipt}-\acrshort{dmft} equations, it reads
\begin{equation*}
	\Hybridization(z) = W \left(z-\NIHamiltonian_\perp-\SelfEnergy(z)\right)^{-1}W^\dagger.
\end{equation*}
Applying Lemma \ref{lem:IncreaseNumberOfPoles} to the matrix-valued rational function $\left(z-\NIHamiltonian_\perp-\SelfEnergy(z)\right)^{-1}$, we know that this matrix-valued rational function has more than $K''+1$ poles, hence more than $K+2$ poles. Now, $W\neq 0$ since we have eliminated the limit cases described in Proposition \ref{prop:DMFTTrivialLimitsExactness}, hence $\Hybridization$ also has more than $K+2$ poles. As $K$ is by definition the number of poles of $\Hybridization$, it is impossible and $\Hybridization$ cannot be a solution to the \acrshort{ipt}-\acrshort{dmft} equations.

\subsection{Proof of Proposition \ref{prop:WellPosednessSelfEnergyToHybridization} (\acrlong{ba} map)}
    Let $\SelfEnergy_p\in\SelfEnergySpace_p$, for $1\leq p\leq\PartitionSize$, and let $C_p\in\HermitianMatricesFragment$ and $\mu_p\in\FinitePositiveMatrixMeasures{2\FragmentSize}$ be such that for all $z\in\UpperHalfPlane$, $\SelfEnergy_p(z) = C_p +\PickFunction{\mu_p}$.
	Since for all $1\leq p \leq \PartitionSize$, $-\SelfEnergy_p: \UpperHalfPlane \to \ComplexNumbers^{2\FragmentSize\times 2\FragmentSize}$ is a Pick matrix, 
 $$
 - \bigoplus_{q \neq p} \SelfEnergy_{q} : \UpperHalfPlane \to \ComplexNumbers^{2(L-\FragmentSize) \times 2(L-\FragmentSize)}
 $$
 is a Pick matrix. As $\NIHamiltonianSchurCentralBlock$ is Hermitian, we have for all $z\in\UpperHalfPlane$,
	\begin{equation*}
		\Im \left( z - \NIHamiltonianSchurCentralBlock - \bigoplus_{q\neq p} \SelfEnergy_{q}(z)\right)=\Im(z)-\bigoplus_{q \neq p} \Im(\SelfEnergy_{q}(z)) \geq \Im(z) > 0,
	\end{equation*}
	where  $M_1\geq M_2$ means that $M_1-M_2$ is a positive semidefinite matrix. Moreover, if $\Im(M) > 0$, then $M$ is  invertible. Thus $z-\NIHamiltonianSchurCentralBlock-\bigoplus_{q\neq j}  \SelfEnergy_{q}(z)$ is invertible and $\Hybridization_p(z)$ is well defined.
	As $\Im(M)>0$ if and only if $\Im(M^{-1})<0$, and as the congruence preserves the sign of the imaginary part, we have $\Im(\Hybridization_p(z))\leq 0$ for all $z\in\UpperHalfPlane$, so that $-\Hybridization_p$ is a Pick matrix. To show formula \eqref{eq:NevanlinnaMeasureHybridizationSCmap}, we will make use of the results on Pick functions stated in Section \ref{sec:PickFunctions}. 
	As $\mu_q$, the Nevanlinna-Riesz measure of $\SelfEnergy_q$, is finite for $1\leq q \leq \PartitionSize$, we have for $x\in\ComplexNumbers^{2\FragmentSize[q]}$, that the positive Borel measure $\mu_q^x$ defined as the Nevanlinna-Riesz measure of the Pick function $z\mapsto -\HermitianProduct{x}{\SelfEnergy_q(z)x}$, is also a finite positive measure.
	Then, for $x\in\ComplexNumbers^{2\FragmentSize[q]}$ and $y\geq 1$,
	\begin{equation*}
		\left|\left\langle x,\int_\RealNumbers \frac{d\mu_q(\eps)}{iy-\eps}x\right\rangle\right| = \left|\int_\RealNumbers \frac{d\mu_q^x(\eps)}{iy-\eps}\right|  \leq \int_\RealNumbers \frac{d\mu_q^x(\eps)}{|iy-\eps|} \leq \int_\RealNumbers d\mu_q^x < \infty.
	\end{equation*}
	This coarse upper bound is enough to ensure that 
	\begin{eqnarray}
		iy\Hybridization_p(iy) &=& iy\NIHamiltonianSchurBlock\left(iy-\NIHamiltonianSchurCentralBlock-\bigoplus_{q\neq p} \SelfEnergy_q(iy)\right)^{-1} \NIHamiltonianSchurBlock^\dagger \nonumber \\
		&=& \NIHamiltonianSchurBlock\left(1-\frac{1}{iy}\left(\NIHamiltonianSchurCentralBlock +\bigoplus_{q\neq p} \left(C_q+\int_\RealNumbers\frac{d\mu_q(\eps)}{iy-\eps}\right)\right)\right)^{-1} \NIHamiltonianSchurBlock^\dagger \nonumber \\
		 & \underset{y\to +\infty}{\longrightarrow} & \NIHamiltonianSchurBlock \NIHamiltonianSchurBlock^\dagger . \nonumber
	\end{eqnarray}
		
	This gives the expansion $\displaystyle\Hybridization_p(iy) = \frac{\NIHamiltonianSchurBlock \NIHamiltonianSchurBlock^\dagger}{iy} + o\left(\frac{1}{y}\right)$, as $y$ goes to $+\infty$. By Theorem \ref{thm:MomentsAsymptoticExpansion}, it follows that the Nevanlinna-Riesz measure of $-\Hybridization_p$, denoted by $\nu_p$, is finite, and its mass is precisely the quantity $\NIHamiltonianSchurBlock^\dagger \NIHamiltonianSchurBlock$. Thus the Nevanlinna-Riesz representation of $-\Hybridization_p$ reads
	\begin{equation*}
		-\Hybridization_p(z) = az+b-\PickFunction{\nu_p},
	\end{equation*}
	for some $a\in\PositiveMatricesFragment$ and $b\in\HermitianMatricesFragment$. Now, because of the aforementioned expansion, we must have $a=b=0$, which concludes the proof.

\subsection{Proof of Proposition \ref{prop:DefinitionIPTmap} (\acrshort{ipt} map)}
\label{sec:ProofIPTmap}

	Let  $\nu\in\ProbabilityMeasures$, and define the associated hybridization function $\Hybridization(z) = \HybridizationMass\PickFunction{\nu}$. We first establish the following lemma, which we will use several times in our analysis.
	\begin{lemma}
\label{lem:RegularizationMoments}
	Let $c \in \RealNumbers$, $\mu_0\in\PositiveFiniteMeasures$. For $z\in\UpperHalfPlane$, we set
 $$
 f(z) = \PickFunction{\mu_0} \quad \mbox{and} \quad g(z) = \displaystyle \frac{1}{z-c-f(z)}.
 $$
 Then $-g$ is a Pick function and its Nevanlinna-Riesz representation reads $g(z)=\PickFunction{\mu}$ with $\mu\in\ProbabilityMeasures$. In particular, $\mu$ has finite moments of order less than or equal to 2, given by
 $$
 m_0(\mu)=1 \mbox{ ($\mu$ is a probability measure)}, \quad m_1(\mu)=c \quad \mbox{and} \quad m_2(\mu)=\mu_0(\RealNumbers)+c^2.
 $$
\end{lemma}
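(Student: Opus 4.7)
The plan has three steps: first, verify that $-g$ is a Pick function; second, expand $g(iy)$ asymptotically at infinity up to order $1/y^3$; third, invoke Theorem~\ref{thm:MomentsAsymptoticExpansion} to identify $g$ with $\PickFunction{\mu}$ for some $\mu \in \ProbabilityMeasures$ and read off its first three moments.

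For the first step, the hypothesis $\mu_0 \in \PositiveFiniteMeasures$ ensures that $-f = -\PickFunction{\mu_0}$ is a Pick function, so for any $z \in \UpperHalfPlane$,
\begin{equation*}
\Im(z-c-f(z)) = \Im(z) + \Im(-f(z)) \geq \Im(z) > 0,
\end{equation*}
making $z-c-f(z)$ invertible with an inverse of strictly negative imaginary part. Hence $g$ is analytic on $\UpperHalfPlane$ with $\Im(g) < 0$, so $-g$ is a Pick function.

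For the second step, since $|iy/(iy-\eps)| \leq 1$ for $y \geq 1$ and $\mu_0(\RealNumbers) < \infty$, dominated convergence yields $iy\,f(iy) \to \mu_0(\RealNumbers)$ as $y \to +\infty$, i.e.\ $f(iy) = \mu_0(\RealNumbers)/(iy) + o(1/y)$. Writing
\begin{equation*}
g(iy) = \frac{1}{iy}\cdot\frac{1}{1-(c+f(iy))/(iy)}
\end{equation*}
and expanding the geometric series with $(c+f(iy))/(iy) = c/(iy) + \mu_0(\RealNumbers)/(iy)^2 + o(1/y^2)$, one obtains
\begin{equation*}
g(iy) = \frac{1}{iy} + \frac{c}{(iy)^2} + \frac{\mu_0(\RealNumbers) + c^2}{(iy)^3} + o\!\left(\frac{1}{y^3}\right).
\end{equation*}

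For the third step, applying Theorem~\ref{thm:MomentsAsymptoticExpansion} with $n=1$ to the Pick function $-g$ translates the expansion above into: the Nevanlinna-Riesz measure $\mu$ of $-g$ has finite moments of orders $0$, $1$ and $2$, equal respectively to $1$, $c$ and $\mu_0(\RealNumbers)+c^2$; in particular $\mu \in \ProbabilityMeasures$. The vanishing of the coefficients of $iy$ and of the constant term in the expansion forces the slope $a$ in the Nevanlinna-Riesz representation of $-g$ to be zero and the constant $b$ to equal $\int\eps/(1+\eps^2)\,d\mu(\eps)$, collapsing the representation into $-g(z) = \int d\mu(\eps)/(\eps-z)$, i.e.\ $g(z) = \PickFunction{\mu}$. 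The only delicate point is the remainder control in the geometric-series expansion, which is legitimate because $\mu_0$ finite yields $f(iy) = O(1/y)$, so $(c+f(iy))/(iy) \to 0$ as $y \to +\infty$ and all tail terms beyond order $1/y^3$ can be absorbed in the $o(1/y^3)$ remainder.
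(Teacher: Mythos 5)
Your proof is correct and follows essentially the same route as the paper: expand $g(iy)$ as $y\to+\infty$ using $f(iy)=\mu_0(\RealNumbers)/(iy)+o(1/y)$, then invoke Theorem~\ref{thm:MomentsAsymptoticExpansion} to read off the moments. You merely make explicit two steps the paper leaves implicit — the verification that $-g$ is a Pick function, and the argument that the absence of linear and constant terms in the expansion kills the affine part of the Nevanlinna--Riesz representation, yielding $g=\PickFunction{\mu}$ exactly.
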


\begin{proof}
The result follows from the expansion of the function $g(z)$ when $z=iy$, $y\to+\infty$ and Theorem~\ref{thm:MomentsAsymptoticExpansion}. One has
\begin{equation*}
g(iy) = \frac{1}{iy}\frac{1}{1-\frac{c}{iy}-\frac{f(iy)}{iy}}.
\end{equation*}
Since $\displaystyle f(iy)=\frac{\mu_0(\RealNumbers)}{iy}+o\left(\frac{1}{y}\right)$, we have
\begin{eqnarray}
g(iy) = \frac{1}{iy}\frac{1}{1-\frac{c}{iy}-\frac{\mu_0(\RealNumbers)}{(iy)^2}+o\left(\frac{1}{y^2}\right)} &=& \frac{1}{iy}\left(1+\frac{c}{iy}+\frac{\mu_0(\RealNumbers)+c^2}{(iy)^2}+o\left(\frac{1}{y^2}\right)\right) \nonumber \\
&=& \frac{1}{iy}+\frac{c}{(iy)^2}+\frac{\mu_0(\RealNumbers)+c^2}{(iy)^3}+o\left(\frac{1}{y^3}\right).
\nonumber \end{eqnarray}

Using again Theorem \ref{thm:MomentsAsymptoticExpansion}, we obtain the desired result.
\end{proof}
We apply Lemma \ref{lem:RegularizationMoments} with $c=0$ and $f=\Hybridization$. It follows that there exists $\xi\in\ProbabilityMeasures$ satisfying \eqref{eq:IPTintermediate}.  Then, following equations \eqref{eq:IPTintermediate2}-\eqref{eq:IPTintermediate4}, set $\tilde{\xi}(d\eps) = \dfrac{\xi(d\eps)}{1+e^{-\StatisticalTemperature\eps}}$, and $\tilde{\mu}=\tilde{\xi}*\tilde{\xi}*\tilde{\xi}$. We must verify that, setting $\mu(d\eps) = (1+e^{-\StatisticalTemperature\eps})\tilde{\mu}(d\eps)$, $\mu$ is indeed a positive finite measure, so that the self-energy $\SelfEnergy$ associated to the measure $\mu$ belongs to $\SelfEnergySpace^{\rm IPT}$. 

The multiplication by positive functions and the convolution preserves  positivity, so $\mu$ is a positive measure. Moreover, we have
\begin{equation}
\label{eq:BoundSelfEnergyMass}
	\mu(\RealNumbers) = \int_{\RealNumbers^3} \frac{\left(1+e^{-\StatisticalTemperature(\eps_1+\eps_2+\eps_3)}\right)d\xi(\eps_1)d\xi(\eps_2)d\xi(\eps_3)}{(1+e^{-\StatisticalTemperature\eps_1})(1+e^{-\StatisticalTemperature\eps_2})(1+e^{-\StatisticalTemperature\eps_3})}\leq \int_{\RealNumbers^3} d\xi(\eps_1)d\xi(\eps_2)d\xi(\eps_3) = 1.
\end{equation}
Thus $\mu$ is finite, and the map $\IPTmap$ is well defined. To prove that this map is actually continuous with respect to the weak topology of measures,  we need to establish Lemma \ref{lem:ContinuityCentralMap}. This result states the continuity of the map $\PositiveFiniteMeasures \ni \mu_0 \mapsto \mu \in \ProbabilityMeasures$ defined in Lemma~\ref{lem:RegularizationMoments}, which is central in the IPT-DMFT equations. Note that it holds 
\begin{equation}
\label{eq:CentralMap}
\forall z\in\UpperHalfPlane, \quad  \PickFunction{\mu} = \frac{1}{z-c-\PickFunction{\mu_0}}.
\end{equation}

\begin{lemma}
\label{lem:ContinuityCentralMap}
	The map $\Phi:\PositiveFiniteMeasures \ni \mu_0 \mapsto \mu\in\ProbabilityMeasures$ defined in Lemma~\ref{lem:RegularizationMoments} is weakly continuous. More precisely, the following stronger result holds true: if $(\mu_0^n)_{n \in \Integers}$ converges weakly to $\mu_0$ in $\PositiveFiniteMeasures$, then  $W_2(\Phi(\mu_0^n),\Phi(\mu))\underset{n\to\infty}{\longrightarrow}0$, where $W_2$ is the Wasserstein distance of order 2.
\end{lemma}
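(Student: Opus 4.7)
The plan is to prove the stronger $W_2$ statement by combining weak convergence $\Phi(\mu_0^n)\to\Phi(\mu_0)$ with convergence of the second moments, using the classical characterization that for probability measures, $W_2(\mu_n,\mu)\to 0$ holds if and only if $\mu_n\to\mu$ weakly \emph{and} $m_2(\mu_n)\to m_2(\mu)$. The moment side is almost automatic. By Lemma \ref{lem:RegularizationMoments}, $m_2(\Phi(\mu_0^n))=\mu_0^n(\RealNumbers)+c^2$, and weak convergence in $\PositiveFiniteMeasures$ preserves total mass (test against the constant $1\in\BoundedContinuousFunctions$), so $m_2(\Phi(\mu_0^n))\to \mu_0(\RealNumbers)+c^2=m_2(\Phi(\mu_0))$.

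The substantive step is weak convergence $\Phi(\mu_0^n)\to\Phi(\mu_0)$, which I would obtain via Stieltjes transforms. For each fixed $z\in\UpperHalfPlane$, the kernel $\eps\mapsto(z-\eps)^{-1}$ lies in $\BoundedContinuousFunctions$ (with bound $1/\Im(z)$), so weak convergence $\mu_0^n\to\mu_0$ immediately yields $\PickFunction{\mu_0^n}\to\PickFunction{\mu_0}$ pointwise on $\UpperHalfPlane$. Since $\Im(\PickFunction{\nu}(z))\leq 0$ for any $\nu\in\PositiveFiniteMeasures$ and $z\in\UpperHalfPlane$, the denominator $z-c-\PickFunction{\mu_0^n}(z)$ appearing in \eqref{eq:CentralMap} has imaginary part bounded below by $\Im(z)>0$, uniformly in $n$, so taking reciprocals yields pointwise convergence $\PickFunction{\Phi(\mu_0^n)}\to\PickFunction{\Phi(\mu_0)}$ on $\UpperHalfPlane$.

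To upgrade pointwise Stieltjes convergence into weak convergence of the measures themselves, I would invoke tightness. The bound on $m_2(\Phi(\mu_0^n))$ established in the first paragraph is uniform in $n$, so Markov's inequality makes the family $(\Phi(\mu_0^n))_n$ tight. By Prokhorov's theorem, any subsequence admits a further weakly convergent sub-subsequence with some limit $\tilde\mu\in\ProbabilityMeasures$; passing to the limit under the integral (again using bounded continuity of the Stieltjes kernel) shows that $\PickFunction{\tilde\mu}=\PickFunction{\Phi(\mu_0)}$ on $\UpperHalfPlane$. The uniqueness clause in the Nevanlinna-Riesz representation theorem (Theorem \ref{thm:NevanlinnaRepresentation}) then forces $\tilde\mu=\Phi(\mu_0)$. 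Since every subsequential limit coincides with $\Phi(\mu_0)$, the full sequence converges weakly, and combined with the moment convergence of the first paragraph this delivers $W_2(\Phi(\mu_0^n),\Phi(\mu_0))\to 0$. The only non-bookkeeping ingredient is the tightness/uniqueness loop in this paragraph, and what makes it work is precisely the explicit second-moment identity supplied by Lemma \ref{lem:RegularizationMoments}.
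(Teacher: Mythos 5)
Your proof is correct, but it takes a genuinely different route from the paper's at the key step of upgrading pointwise convergence of Stieltjes transforms to weak convergence of the measures. The paper proceeds by a density argument: after establishing that $\PickFunction{\mu^n}\to\PickFunction{\mu}$ pointwise on $\ComplexNumbers\setminus\RealNumbers$, it invokes the fact (established via the Helffer--Sj\"ostrand formula) that the algebra generated by $\eps\mapsto(z-\eps)^{-1}$, $z\in\ComplexNumbers\setminus\RealNumbers$, is dense in $\ContinuousGoingToZeroFunctions$, which yields vague convergence directly, and then uses that vague and weak convergence coincide for probability measures on a locally compact space. You instead argue by compactness: the explicit identity $m_2(\Phi(\mu_0^n))=\mu_0^n(\RealNumbers)+c^2$ gives a uniform second-moment bound (since convergent real sequences are bounded), hence tightness; Prokhorov then supplies subsequential weak limits, and the injectivity of the Stieltjes transform on finite measures --- i.e.\ the uniqueness part of the Nevanlinna--Riesz representation --- identifies every such limit with $\Phi(\mu_0)$, forcing convergence of the whole sequence. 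Both approaches are sound. Your route avoids the Helffer--Sj\"ostrand machinery and reuses the moment bound you need anyway, at the cost of the subsequence bookkeeping; the paper's is a one-shot argument once the density fact is granted. Both then conclude $W_2$-convergence from the standard equivalence with weak convergence plus second-moment convergence.
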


\begin{proof}
Let $(\mu_0^n)_{n \in \Integers} \subset \PositiveFiniteMeasures$  converging weakly  to $\mu_0$ in $\PositiveFiniteMeasures$, $\mu^n:=\Phi(\mu^n_0)$ and $\mu:=\Phi(\mu_0)$. In view of Lemma~\ref{lem:RegularizationMoments}, $\mu^n$ has finite moments of orders 1 and 2, given by $m_1(\mu^n)=c$ and $m_2(\mu^n)=\mu_0^n(\RealNumbers)+c^2$.

As $\mu_0^n$  converges weakly to $\mu_0$, we have $\int_\RealNumbers f d\mu_0^n \to \int_\RealNumbers f d\mu_0$ for all $f \in \BoundedContinuousFunctions$. Taking $f \equiv 1$, we get $\mu_0^n(\RealNumbers) = \int_\RealNumbers d\mu_0^n \underset{n\to\infty}{\longrightarrow} \int_\RealNumbers d\mu_0 = \mu_0(\RealNumbers)$. Hence, $m_2(\mu^n)=\mu_0^n(\RealNumbers)+c^2\underset{n\to\infty}{\longrightarrow} \mu_0(\RealNumbers)+c^2=m_2(\mu)$.

Now, for all $z\in\UpperHalfPlane$, the function $\RealNumbers \ni \eps\mapsto\dfrac{1}{z-\eps} \in \ComplexNumbers$ is also bounded and continuous. Thus, we can pass to the limit in formula \eqref{eq:CentralMap}:
\begin{equation}
\label{eq:LimitCentralMap}
\int_\RealNumbers \frac{d\mu^n(\eps)}{z-\eps} = \frac{1}{z-c-\int_\RealNumbers\frac{d\mu_0^n(\eps)}{z-\eps}}\underset{n\to\infty}{\longrightarrow} \frac{1}{z-c-\int_\RealNumbers\frac{d\mu_0(\eps)}{z-\eps}} = \int_\RealNumbers \frac{d\mu(\eps)}{z-\eps}.
\end{equation}

This can be extended to complex numbers $z\in\ComplexNumbers\setminus\RealNumbers$ by taking the complex conjugate of the limit~\eqref{eq:LimitCentralMap}. Let $\mathcal{A}$ be the algebra generated by the functions $\displaystyle \RealNumbers \ni \eps\mapsto\frac{1}{z-\eps} \in \ComplexNumbers$ for $z\in\ComplexNumbers\setminus\RealNumbers$. It is known that $\mathcal{A}$ is dense in $\ContinuousGoingToZeroFunctions$ (this can be shown using Helffer-Sjöstrand formula \cite{helffer_equation_2005}). Together with~\eqref{eq:LimitCentralMap}, this implies that for all $f\in\ContinuousGoingToZeroFunctions$, $\int_\RealNumbers fd\mu^n \underset{n\to\infty}{\longrightarrow}\int_\RealNumbers fd\mu$, i.e. that $(\mu^n)_{n \in \Integers}$ vaguely converges to $\mu$. Since $\RealNumbers$ is locally compact and the $\mu^n$'s are probability measures, the vague convergence is equivalent in this case to the weak convergence. The map $\Phi$ is therefore  weakly continuous. Since on the space of Borel probability measures on $\RealNumbers$, $W_2$-convergence is equivalent to weak convergence and the convergence of the second moment \cite[Section 7.1]{ambrosio_gradient_2005}, the proof is complete.
\end{proof}

Let us now prove that the impurity solver $\IPTmap : \ProbabilityMeasures\to\PositiveFiniteMeasures$ is weakly continuous.
Let $(\nu^n)_{n \in \Integers} \subset\ProbabilityMeasures$  converging weakly to $\nu\in\ProbabilityMeasures$. Define $\xi^n\in\ProbabilityMeasures$ and $\mu^n:=\IPTmap(\nu^n)$ as in Proposition \ref{prop:DefinitionIPTmap}, see equation \eqref{eq:IPTintermediate}, as well as $\xi\in\ProbabilityMeasures$ and $\mu:=\IPTmap(\nu)$. We want to show that $\mu^n$  converges weakly to $\mu$. First, because of the definition of $\xi$ through equation \eqref{eq:IPTintermediate}, and thanks to Lemma \ref{lem:ContinuityCentralMap}, we know that $W_2(\xi^n,\xi)\underset{n\to\infty}{\longrightarrow}0$. Moreover, using the same density argument as in the proof of Lemma \ref{lem:ContinuityCentralMap}, it is sufficient to show that for all $z\in\UpperHalfPlane$, the following convergence holds:
	\begin{equation*}
		\PickFunction{\mu^n}\underset{n\to\infty}{\longrightarrow}\PickFunction{\mu}.
	\end{equation*}
	We have for $z\in\UpperHalfPlane$,
	\begin{equation*}
		\PickFunction{\mu^n} = \int_{\RealNumbers^3} \frac{1}{z-(\eps_1+\eps_2+\eps_3)}\frac{1+e^{-\StatisticalTemperature(\eps_1+\eps_2+\eps_3)}}{(1+e^{-\StatisticalTemperature\eps_1})(1+e^{-\StatisticalTemperature\eps_2})(1+e^{-\StatisticalTemperature\eps_3})}d\xi^n(\eps_1)d\xi^n(\eps_2)d\xi^n(\eps_3).
	\end{equation*}
	Let $\displaystyle\psi(\eps_1,\eps_2,\eps_3) = \frac{1}{z-(\eps_1+\eps_2+\eps_3)}\frac{1+e^{-\StatisticalTemperature(\eps_1+\eps_2+\eps_3)}}{(1+e^{-\StatisticalTemperature\eps_1})(1+e^{-\StatisticalTemperature\eps_2})(1+e^{-\StatisticalTemperature\eps_3})}$, for $(\eps_1,\eps_2,\eps_3)\in\RealNumbers^3$. Then,
	\begin{eqnarray}
		\left|\PickFunction{\mu^n}-\PickFunction{\mu}\right| &=& \left|\int_{\RealNumbers^3} \psi d\xi^nd\xi^nd\xi^n - \int_{\RealNumbers^3} \psi d\xi d\xi d\xi\right| \nonumber \\
		&\leq & \left|\int_{\RealNumbers^3} \psi d\xi^nd\xi^nd\xi^n - \int_{\RealNumbers^3} \psi d\xi^n d\xi^n d\xi\right| \label{eq:SimilarTerm1} \\
		&  +& \left|\int_{\RealNumbers^3} \psi d\xi^nd\xi^nd\xi - \int_{\RealNumbers^3} \psi d\xi^n d\xi d\xi\right| \label{eq:SimilarTerm2} \\
		&  +& \left|\int_{\RealNumbers^3} \psi d\xi^nd\xi d\xi - \int_{\RealNumbers^3} \psi d\xi d\xi d\xi\right|. \label{eq:TermToBound}
	\end{eqnarray}
	We now prove that the last term \eqref{eq:TermToBound} goes to zero when $n$ goes to $\infty$. The same arguments apply to the other two terms, \eqref{eq:SimilarTerm1} and \eqref{eq:SimilarTerm2}.  The function $\psi$ is  smooth and a simple calculation shows that its partial derivative with respect to $\eps_1$ is bounded by
	\begin{equation}
 \label{eq:BoundDerivativePsi}
		|\partial_1\psi(\eps_1,\eps_2,\eps_3)| \leq \frac{1}{|\Im(z)|^2}+\frac{2\StatisticalTemperature}{|\Im(z)|}=:\kappa_{z,\StatisticalTemperature}.
	\end{equation}
	Let $\eps_2,\eps_3\in\RealNumbers$. Using the $W_2$ convergence of $\xi^n$ towards $\xi$, set $\pi_n$ the optimal coupling between these two measures \cite{santambrogio_optimal_2015}. We have that
	\begin{equation*}
		\int_\RealNumbers \psi(\eps_1,\eps_2,\eps_3)d\xi^n(\eps_1) = \int_{\RealNumbers^2} \psi(\eps_1,\eps_2,\eps_3) d\pi_n(\eps_1,\eps_1').
	\end{equation*}
	By Taylor's Theorem and since $\psi$ is continuously derivable, we have
	\begin{equation*}
		\int_\RealNumbers \psi(\eps_1,\eps_2,\eps_3)d\xi^n(\eps_1) =  \int_{\RealNumbers^2} \psi(\eps_1',\eps_2,\eps_3) d\pi_n(\eps_1,\eps_1') + \int_{\RealNumbers^2} \int_{\eps_1'}^{\eps_1}(\eps_1-t) \partial_1\psi(t,\eps_2,\eps_3) dt d\pi_n(\eps_1,\eps_1').
	\end{equation*}
	On the one hand, the first term is exactly $\int_\RealNumbers \psi(\eps_1,\eps_2,\eps_3)d\xi(\eps_1)$ by definition of $\pi_n$, and on the other hand, using \eqref{eq:BoundDerivativePsi}, we can bound the second term by
	\begin{eqnarray*}
		\left|\int_{\RealNumbers^2} \int_{\eps_1'}^{\eps_1}(\eps_1-t) \partial_1\psi(t,\eps_2,\eps_3) dt d\pi_n(\eps_1,\eps_1')\right| &\leq & \frac 12\int_{\RealNumbers^2} |\eps_1-\eps_1'|^2 \|\partial_1\psi\|_\infty d\pi_n(\eps_1,\eps_1') \\
		&\leq & \frac{\kappa_{z,\StatisticalTemperature}}2 \int_{\RealNumbers^2} |\eps_1-\eps_1'|^2 d\pi_n(\eps_1,\eps_1') \\
		& = & \frac{\kappa_{z,\StatisticalTemperature}}2 W_2(\xi^n,\xi)^2.
	\end{eqnarray*}
	Finally, the term \eqref{eq:TermToBound} can be bounded by
	\begin{align*}
		\left|\int_{\RealNumbers^3} \psi d\xi^nd\xi d\xi - \int_{\RealNumbers^3} \psi d\xi d\xi d\xi\right| &\leq  \int_{\RealNumbers^2}\left| \int_\RealNumbers \psi(\eps_1,\cdot,\cdot) d\xi^n(\eps_1) - \int_\RealNumbers \psi(\eps_1,\cdot,\cdot) d\xi(\eps_1) \right|d\xi d\xi \\
		&=  \int_{\RealNumbers^2}\left|\int_{\RealNumbers^2} \int_{\eps_1'}^{\eps_1}(\eps_1-t) \partial_1\psi(t,\eps_2,\eps_3) dt d\pi_n(\eps_1,\eps_1')\right| d\xi(\eps_2) d\xi(\eps_3) \\
		 &\leq \int_{\RealNumbers^2} \frac{\kappa_{z,\StatisticalTemperature}}2 W_2(\xi^n,\xi)^2 d\xi(\eps_2) d\xi(\eps_3) =  \frac{\kappa_{z,\StatisticalTemperature}}2 W_2(\xi^n,\xi)^2 \underset{n\to\infty}{\longrightarrow}0.
	\end{align*}
This shows that the map $\IPTmap$ is weakly continuous. It remains to prove that the map $\HybridizationSpace^{\rm IPT}\ni\Hybridization\mapsto\SelfEnergy\in\SelfEnergySpace^{\rm IPT}$ defined by \eqref{eq:IPTintermediate}-\eqref{eq:SelfEnergyIPT} coincides on the set of discrete probability measures with finite support with the map defined in Proposition \ref{prop:IPTFiniteDimension}.

Let $\Hybridization\in\HybridizationSpace^{\mathrm{IPT}}$ with a Nevanlinna-Riesz measure of the form $\nu=\sum_{k=1}^Ka_k\delta_{\eps_k}$, where $a_k>0$, $\sum a_k = \HybridizationMass$ and $\eps_1<\dots<\eps_K$. It follows that the rational function $(z-\Hybridization(z))^{-1}$ is of the form $\sum_{k=1}^{K+1} \frac{{a}_k'}{z-{\eps}_k'}$ (see the proof of Lemma \ref{lem:IncreaseNumberOfPoles}). This means by \eqref{eq:IPTintermediate} that $\xi=\sum_{k=1}^{K+1}{a}_k'\delta_{{\eps}_k'}$, and the ${\eps}_k'$'s are the zeros of the rational function $z-\Hybridization(z)$, so that the residues are given by ${a}_k'=(1-\Hybridization'({\eps}_k'))^{-1}$. The self-energy $\SelfEnergy$ given by \eqref{eq:SelfEnergyIPT} then reads for all $z\in\UpperHalfPlane$,
\begin{align*}
    \SelfEnergy(z) & = U^2\int_{\RealNumbers^3} \frac{1+e^{-\StatisticalTemperature(\eps_1+\eps_2+\eps_3)}}{(1+e^{-\StatisticalTemperature\eps_1})(1+e^{-\StatisticalTemperature\eps_2})(1+e^{-\StatisticalTemperature\eps_3})} \frac{d\xi(\eps_1)d\xi(\eps_2)d\xi(\eps_3)}{z-(\eps_1+\eps_2+\eps_3)} \\
    & = U^2\sum_{k_1,k_2,k_3 = 1}^{K+1} \frac{1+e^{-\StatisticalTemperature(\eps'_{k_1}+\eps'_{k_2}+\eps_{k_3})}}{(1+e^{-\StatisticalTemperature\eps'_{k_1}})(1+e^{-\StatisticalTemperature\eps'_{k_2}})(1+e^{-\StatisticalTemperature\eps'_{k_3}})} \frac{a_{k_1}'a_{k_2}'a_{k_3}'}{z-(\eps_{k_1}'+\eps_{k_2}'+\eps_{k_3}')} \\
    & = U^2\sum_{k_1,k_2,k_3 = 1}^{K+1} \frac{a_{k_1,k_2,k_3}'}{z-(\eps_{k_1}'+\eps_{k_2}'+\eps_{k_3}')},
\end{align*}
where $a_{k_1,k_2,k_3}'$ is given by \eqref{eq:FormulaResiduesSelfEnergy}. This complies with the result stated in Proposition \ref{prop:IPTFiniteDimension}.

Finally, since the set of discrete probability measures with finite support is dense in the set of probability measures $\ProbabilityMeasures$ for the weak topology and since $\IPTmap$ is weakly continuous, $\IPTmap$ is the only continuous extension of the \acrshort{ipt} map defined in Proposition \ref{prop:IPTFiniteDimension} for a finite-dimensional bath.

\subsection{Continuity of the IPT-DMFT map}

The following result is central to prove the existence of a fixed point to the \acrshort{dmft} equations.

\begin{theorem}[Continuity of the IPT-DMFT map]
\label{thm:ContinuityDMFTmap}
    The IPT-DMFT map $F^{\rm DMFT}$ is weakly continuous on $\ProbabilityMeasures$. More precisely, the following stronger results holds true: if  $(\nu^n)_{n \in \Integers}$ converges weakly to $\nu$, then  $$ 
W_2\left(\DMFTmap(\bm{\nu^n}),\DMFTmap(\bm{\nu})\right)\underset{n\to\infty}{\longrightarrow}0, 
    $$
    where $W_2$ is the Wasserstein 2-distance.
\end{theorem}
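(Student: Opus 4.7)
My plan is to exploit the factorization $\DMFTmap = \SCmap \circ \IPTmap$ and treat the two factors separately. The weak continuity of $\IPTmap : \ProbabilityMeasures \to \PositiveFiniteMeasures$ has essentially been proved in Proposition~\ref{prop:DefinitionIPTmap}: the vague convergence of the Stieltjes transforms established there, combined with the convergence of the total mass via \eqref{eq:BoundSelfEnergyMass} (whose integrand is a continuous function bounded by $1$), upgrades to weak convergence in $\PositiveFiniteMeasures$. It therefore suffices to prove the stronger statement that $\SCmap$ sends weakly convergent sequences in $\PositiveFiniteMeasures$ to $W_2$-convergent sequences in $\ProbabilityMeasures$.

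The key structural observation is that \eqref{eq:BathUpdateIPT} can be rewritten as a convex combination of scalar Stieltjes transforms of the kind treated in Lemma~\ref{lem:RegularizationMoments}. Indeed, diagonalizing the Hermitian matrix $\NIHamiltonian_\perp = Q D Q^\dagger$ with $D = \mathrm{diag}(d_1, \dots, d_{P-1})$ and setting $b_k := |(\NIHamiltonianSchurBlock[] Q)_k|^2 \ge 0$, one has $\sum_{k=1}^{P-1} b_k = \NIHamiltonianSchurBlock[] \NIHamiltonianSchurBlock[]^\dagger = \HybridizationMass$ and, since $U^2 \PickFunction{\mu}$ acts as a scalar multiple of the identity on $\ComplexNumbers^{P-1}$,
\begin{equation*}
	\HybridizationMass \, \PickFunction{\SCmap(\mu)} = \sum_{k=1}^{P-1} \frac{b_k}{z - d_k - U^2 \PickFunction{\mu}}.
\end{equation*}
Applying Lemma~\ref{lem:RegularizationMoments} with $c = d_k$ and $\mu_0 = U^2 \mu$ identifies the $k$-th summand with the Stieltjes transform of a probability measure $\nu^{(k)} \in \ProbabilityMeasures$, so that $\SCmap(\mu) = \sum_{k=1}^{P-1} \tfrac{b_k}{\HybridizationMass} \nu^{(k)}$ is an explicit convex combination, with weights depending only on the fixed model parameters.

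With this representation, continuity is almost immediate. If $\mu^n \to \mu$ weakly in $\PositiveFiniteMeasures$, then $U^2 \mu^n \to U^2 \mu$ weakly as well, and Lemma~\ref{lem:ContinuityCentralMap} gives $W_2(\nu^{(k),n}, \nu^{(k)}) \to 0$ for every $k \in \{1, \dots, P-1\}$. Since $W_2$-convergence on $\ProbabilityMeasures$ is equivalent to weak convergence together with convergence of second moments, and both of these are linear operations preserved by finite convex combinations with fixed weights, one concludes that $\SCmap(\mu^n) \to \SCmap(\mu)$ in $W_2$. Composing with the weak continuity of $\IPTmap$ yields the theorem. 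The only ingredient beyond Lemma~\ref{lem:ContinuityCentralMap} is the scalar-valued rewriting enabled by the diagonalization of $\NIHamiltonian_\perp$, which turns the matrix resolvent in \eqref{eq:BathUpdateIPT} into a finite sum of one-dimensional Stieltjes transforms; I expect no further obstacle.
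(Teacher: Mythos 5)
Your proof is correct, and it takes a genuinely different route from the paper's for the continuity of $\SCmap$. The paper proves Proposition~\ref{prop:ContinuitySCmap} by expanding $\Hybridization(iy)$ in powers of $1/(iy)$ to extract the first and second moments of $\SCmap(\mu)$ explicitly (equations~\eqref{eq:ExpansionSCmap}--\eqref{eq:SecondMomentHybridization}), and then invokes the density-plus-second-moment argument from the proof of Lemma~\ref{lem:ContinuityCentralMap} to conclude $W_2$ convergence. You instead diagonalize $\NIHamiltonian_\perp = QDQ^\dagger$ to rewrite the matrix resolvent in \eqref{eq:BathUpdateIPT} as a finite convex combination, with weights $b_k/\HybridizationMass$ fixed by the model, of the scalar maps $\mu\mapsto\Phi(U^2\mu)$ of Lemma~\ref{lem:RegularizationMoments} (with shift $c=d_k$), and then observe that $W_2$ convergence of probability measures (equivalently, weak convergence together with convergence of second moments) is stable under such convex combinations since both operations are linear in the measure. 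This sidesteps the asymptotic expansion entirely, reusing Lemma~\ref{lem:ContinuityCentralMap} verbatim rather than adapting its arguments. Your remark on the $\IPTmap$ side, that the vague convergence of $\mu^n$ produced by the Stieltjes-transform density argument must be supplemented by convergence of the total mass $\mu^n(\RealNumbers)$ (which follows because the mass in \eqref{eq:BoundSelfEnergyMass} is the integral of a bounded continuous kernel against $\xi^n\otimes\xi^n\otimes\xi^n$ and $\xi^n\to\xi$ in $W_2$) in order to upgrade to weak convergence in $\PositiveFiniteMeasures$, is a worthwhile clarification of a step the paper leaves implicit: unlike $\SCmap(\mu^n)$, the measures $\mu^n=\IPTmap(\nu^n)$ are not probability measures, so vague and weak convergence do not coincide a priori.
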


We have proven in the previous section the continuity of the map $\IPTmap$.
In order to prove the continuity of $\SCmap$, we need to adapt Lemma \ref{lem:ContinuityCentralMap} to equation \eqref{eq:BathUpdateIPT}. Let $\mu\in\PositiveFiniteMeasures$. Applying Theorem \ref{thm:MomentsAsymptoticExpansion} to the measure $\mu$, and using the definition \eqref{eq:BathUpdateIPT} of the hybridization function $\Hybridization$ associated to $\SCmap(\mu)$, we obtain
\begin{eqnarray}
	\Hybridization(iy) & = & \frac{1}{iy} W\left( \Identity - \frac{\NIHamiltonian_\perp}{iy} - \frac{1}{iy} U^2 \int_{\RealNumbers} \frac{d\mu(\varepsilon)}{iy-\varepsilon}  \right)^{-1} W^\dagger \nonumber \\
	& = & \frac{1}{iy} W  \left( \Identity - \frac{\NIHamiltonian_\perp}{iy} - \frac{1}{iy}\left( \frac{U^2\mu(\RealNumbers)}{iy} +o\left(\frac{1}{y}\right)\right) \right)^{-1} W^\dagger \nonumber \\
	& = & W\left[\frac{1}{iy} + \frac{\NIHamiltonian_\perp}{(iy)^2} + \frac{(\NIHamiltonian_\perp)^2+ U^2 \mu(\RealNumbers)}{(iy)^3}\right]W^\dagger + o\left(\frac{1}{y^3}\right). \nonumber
\end{eqnarray}

It follows that, when $y\to +\infty$, we have the expansion
\begin{equation}
\label{eq:ExpansionSCmap}
	\int_{\RealNumbers} \frac{d\SCmap(\mu)(\varepsilon)}{iy-\varepsilon}   =  \frac{1}{\HybridizationMass}\Hybridization(iy) = \frac{1}{iy} + \frac{s^1}{(iy)^2}+\frac{s^2(\bm{\mu})}{(iy)^3} + o\left(\frac{1}{y^3}\right),
\end{equation}
with $s_p^1$ and $s_p^2(\mu)$ given by 
\begin{equation}
\label{eq:FirstMomentHybridization}
	s^1 =  \frac{W\NIHamiltonian_\perp W^\dagger}{\HybridizationMass},
\end{equation}
\begin{equation}
\label{eq:SecondMomentHybridization}
	s^2(\mu) = \frac{W\left((\NIHamiltonian_\perp)^2+ U^2 \mu(\RealNumbers)\right)W^\dagger}{\HybridizationMass}.
\end{equation}
In view of Theorem \ref{thm:MomentsAsymptoticExpansion}, this implies that $\nu:=\SCmap(\mu)$ has finite moments of orders 1 and 2, respectively given by $m_1(\nu)=s^1$ and $m_2(\nu)=s^2(\mu)$. The arguments in the proof of Lemma \ref{lem:ContinuityCentralMap} can then be used to show that the following result holds true.

\begin{proposition}
\label{prop:ContinuitySCmap}
	$\SCmap : \PositiveFiniteMeasures \to \ProbabilityMeasures$ is weakly continuous. More precisely, if $(\mu^n)_{n \in \Integers}$ converges weakly   to $\mu$ in $\PositiveFiniteMeasures$, then $W_2(\SCmap(\mu^n),\SCmap(\mu))\underset{n\to\infty}{\longrightarrow}0$.
\end{proposition}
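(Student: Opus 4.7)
The plan is to follow the same strategy as in Lemma \ref{lem:ContinuityCentralMap}, combining pointwise convergence of the Cauchy-Stieltjes transform with the formula \eqref{eq:SecondMomentHybridization} for the second moment. Let $(\mu^n)_{n \in \Integers} \subset \PositiveFiniteMeasures$ converge weakly to $\mu$, and set $\nu^n := \SCmap(\mu^n)$, $\nu := \SCmap(\mu)$. Since $\RealNumbers$ is locally compact and the $\nu^n$'s are probability measures, weak convergence is equivalent to vague convergence. Moreover, since $W_2$-convergence on probability measures on $\RealNumbers$ is equivalent to weak convergence together with convergence of the second moments \cite[Section 7.1]{ambrosio_gradient_2005}, it suffices to prove (i) vague convergence of $\nu^n$ to $\nu$, and (ii) $m_2(\nu^n)\to m_2(\nu)$.

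For (i), fix $z \in \UpperHalfPlane$. The function $\eps \mapsto 1/(z-\eps)$ is bounded and continuous on $\RealNumbers$, hence $\PickFunction{\mu^n} \to \PickFunction{\mu}$. Setting $M_n(z):=z I - \NIHamiltonian_\perp - U^2\PickFunction{\mu^n}I$ and $M(z)$ analogously, positivity of the measures $\mu^n$ ensures $\Im(M_n(z))\geq \Im(z) I >0$, so that $M_n(z)$ is invertible with $\|M_n(z)^{-1}\|\leq 1/\Im(z)$; the entry-wise convergence $M_n(z)\to M(z)$ therefore transfers, by continuity of matrix inversion on invertible matrices, to $M_n(z)^{-1}\to M(z)^{-1}$, which yields
\begin{equation*}
\HybridizationMass \PickFunction{\nu^n} = W M_n(z)^{-1} W^\dagger \; \underset{n\to\infty}{\longrightarrow} \; W M(z)^{-1} W^\dagger = \HybridizationMass \PickFunction{\nu}.
\end{equation*}
Taking complex conjugates extends this to $z \in \LowerHalfPlane$, and the density of the algebra generated by the functions $\eps\mapsto 1/(z-\eps)$, $z \in \ComplexNumbers\setminus\RealNumbers$, in $\ContinuousGoingToZeroFunctions$ (via the Helffer-Sjöstrand formula, exactly as in the proof of Lemma \ref{lem:ContinuityCentralMap}) gives vague convergence of $\nu^n$ to $\nu$.

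For (ii), the expansion \eqref{eq:ExpansionSCmap} combined with Theorem \ref{thm:MomentsAsymptoticExpansion} identifies the second moment as
\begin{equation*}
m_2(\nu^n) = s^2(\mu^n) = \frac{W(\NIHamiltonian_\perp)^2 W^\dagger}{\HybridizationMass} + U^2 \mu^n(\RealNumbers),
\end{equation*}
using $\HybridizationMass = W W^\dagger$. Weak convergence of $\mu^n$ to $\mu$ applied to the constant function $f \equiv 1$ gives $\mu^n(\RealNumbers) \to \mu(\RealNumbers)$, whence $m_2(\nu^n) \to m_2(\nu)$. Combining (i) and (ii) yields $W_2(\nu^n,\nu)\to 0$, proving the stronger statement and in particular the weak continuity of $\SCmap$.

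The only genuine technical point is the passage to the limit under matrix inversion in (i); however this is routine once one exploits the uniform lower bound $\Im(M_n(z))\geq \Im(z)I$ granted by the positivity of $\mu^n$ and the fact that $-\PickFunction{\mu^n}$ is a Pick function. No further compactness or tightness argument is needed, since the second-moment formula already delivers the missing ingredient for $W_2$-convergence for free.
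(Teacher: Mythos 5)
Your proof is correct and follows essentially the same strategy the paper intends: the paper simply asserts that "the arguments in the proof of Lemma \ref{lem:ContinuityCentralMap} can then be used," and your argument is precisely the careful elaboration of that — pointwise convergence of the Cauchy–Stieltjes transforms plus the Helffer–Sjöstrand density argument to get vague (hence weak) convergence, and the explicit second-moment formula \eqref{eq:SecondMomentHybridization} to upgrade to $W_2$-convergence. The extra care you take with the matrix inversion (uniform lower bound $\Im(M_n(z)) \geq \Im(z) I$) is a nice clarification of the step that is merely scalar inversion in Lemma \ref{lem:ContinuityCentralMap}, but it is a detail filled in rather than a different route.
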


We are now in position to complete the proof of Theorem \ref{thm:ContinuityDMFTmap}.
Let $(\nu^n)_n\subset \ProbabilityMeasures$ converging weakly to $\nu$ in $\ProbabilityMeasures$. Denoting by $\mu^n:=\IPTmap(\nu^n)$ and $\mu:=\IPTmap(\nu)$, we have shown in the proof of Proposition \ref{prop:DefinitionIPTmap} that $\mu^n$ converges weakly to $\mu$ in $\PositiveFiniteMeasures$, see Section \ref{sec:ProofIPTmap}. Proposition \ref{prop:ContinuitySCmap} then shows that  $W_2(\DMFTmap(\nu^n),\DMFTmap(\nu)) = W_2(\SCmap(\mu^n),\SCmap(\mu))\underset{n\to\infty}{\longrightarrow}0$. In particular, $\DMFTmap(\nu^n)$ converges weakly to $\DMFTmap(\nu)$.

\subsection{Proof of Theorem \ref{thm:Existence}: existence of a fixed point}

The existence of a fixed point to the IPT-DMFT map, that is a solution to the IPT-DMFT equations, is a consequence of the following fixed point theorem \cite{shapiro_fixed-point_2016}.

\begin{theorem}[Schauder-Singbal]
\label{thm:SchauderSingbal}
Let $E$ be a locally convex Hausdorff linear topological space, $C$ a nonempty closed convex subset of $E$, and $F$ a continuous map from $C$ into itself, such that $F(C)$ is contained in a compact subset of $C$. Then $F$ has a fixed point.
\end{theorem}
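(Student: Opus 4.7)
The plan is to follow the classical approximation strategy that reduces the infinite-dimensional Schauder-type fixed point theorem to Brouwer's fixed point theorem in finite dimensions, with an extra step to accommodate the fact that $E$ is merely locally convex (and possibly non-metrizable). Let $K \subset C$ be a compact set with $F(C) \subset K$. The idea is to construct, for each ``resolution'' prescribed by a convex balanced open neighborhood $U$ of $0$ in $E$, a continuous projection $P_U$ from $K$ onto a finite-dimensional convex subset of $C$ that moves points by at most $U$, apply Brouwer to $P_U \circ F$ restricted to that finite-dimensional simplex, and then pass to a limit using compactness of $K$ together with continuity of $F$.

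First I would fix a convex balanced open neighborhood $U$ of $0$ in $E$ (these form a neighborhood base at $0$ by local convexity). By compactness of $K$, extract a finite subcover $K \subset \bigcup_{i=1}^n (x_i + U)$ with $x_i \in K \subset C$. Build a continuous partition of unity $(\varphi_i)_{1 \le i \le n}$ subordinate to this cover, e.g.\ $\varphi_i(x) = \psi_i(x)/\sum_j \psi_j(x)$ with $\psi_i$ the Minkowski-type gauge measuring how deep $x$ lies in $x_i + U$, and set
$$
P_U : K \longrightarrow S_U := \mathrm{conv}(x_1,\dots,x_n), \qquad P_U(x) := \sum_{i=1}^n \varphi_i(x)\, x_i.
$$
By convexity of $C$, $S_U \subset C$, and by convexity of $U$ together with the support property of the $\varphi_i$, one checks that $P_U(x) - x \in U$ for every $x \in K$.

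Next I would apply Brouwer's fixed point theorem to the continuous self-map $P_U \circ F : S_U \to S_U$, where $S_U$ is homeomorphic to a compact convex subset of a finite-dimensional Euclidean space (it lies in the finite-dimensional affine subspace spanned by $x_1,\dots,x_n$, whose induced topology coincides with the subspace topology from $E$ because all Hausdorff linear topologies on $\mathbb{R}^n$ agree). This yields a point $x_U \in S_U \subset C$ with $x_U = P_U(F(x_U))$, so $x_U - F(x_U) \in U$.

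Finally I would pass to the limit over the directed set of neighborhoods $U$ (ordered by reverse inclusion). The net $(F(x_U))_U$ lies in $K$, so by compactness it admits a subnet converging to some $x^\star \in K \subset C$. Along this subnet, $x_U - F(x_U) \in U$ forces $x_U$ to converge to the same limit $x^\star$ (for any neighborhood $V$ of $0$, eventually $U \subset V$). Continuity of $F$ then gives $F(x_U) \to F(x^\star)$, and uniqueness of limits in the Hausdorff space $E$ yields $F(x^\star) = x^\star$. The main obstacle is carrying out the passage to the limit when $E$ is not metrizable: one must work with nets rather than sequences and use that $C$ is closed (so $x^\star \in C$), together with the fact that in a locally convex Hausdorff space the convex balanced open neighborhoods of $0$ form a base, so the vanishing relation $x_U - F(x_U) \in U$ really does imply convergence of the differences to $0$ along the directed set.
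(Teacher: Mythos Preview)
Your argument is correct and follows the classical Schauder--Tychonoff projection scheme: cover the compact set $K$ by finitely many $U$-translates, build the finite-rank Schauder projection $P_U$ via a partition of unity based on the Minkowski functional of $U$, apply Brouwer on the simplex $S_U$, and pass to a limit along the net of neighborhoods using compactness of $K$. The only places that deserve a word of care are (i) making explicit that $\psi_i(x)=\max(0,1-p_U(x-x_i))$ so that $\varphi_i$ is genuinely continuous and supported in $x_i+U$, and (ii) the subnet argument, where cofinality of the index map guarantees that eventually $\phi(d)\subset V$ for any given neighborhood $V$, so that $x_{\phi(d)}-F(x_{\phi(d)})\to 0$; both are routine.

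There is, however, nothing to compare with: the paper does not prove Theorem~\ref{thm:SchauderSingbal}. It is quoted as a known result with a reference to \cite{shapiro_fixed-point_2016} and then \emph{applied} (with $E=\FiniteMeasures$ equipped with the Kantorovitch--Rubinstein norm, $C=\ProbabilityMeasures$, and $F=\DMFTmap$) to establish Theorem~\ref{thm:Existence}. Your write-up is essentially the standard textbook proof one would find in the cited reference.
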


Let us consider the vector space $E = \FiniteMeasures$ endowed with the Kantorovitch-Rubinstein norm~$\KantorovitchRubinsteinNorm{\cdot}$. Recall that the latter is defined as 
	\begin{equation*}
		\KantorovitchRubinsteinNorm{\mu} := \sup \left\{ \int_\RealNumbers f d\mu \ ; \ f\in \Lip_1, \|f\|_\infty \leq 1 \right\},
	\end{equation*}
	where $\Lip_1$ is the set of continuous functions on $\RealNumbers$ with Lipschitz constant less than or equal to 1.

Let us then set $C := \ProbabilityMeasures= \{\mu\in E \ | \ \mu \geq 0, \ \int_\RealNumbers d\mu= 1 \}$.
Since $E$ is a normed vector space on $\RealNumbers$, it is a locally convex Hausdorff linear topological space, and $C$ is obviously a non-empty convex subset of $E$. 

Besides, on the set of finite positive measures, weak convergence is equivalent to convergence for the Kantorovitch-Rubinstein norm \cite[Chapter 8.3]{bogacev_measure_2007}. We can thus work with the weak topology on $C$.

The fact that $C$ is weakly closed means that $\ProbabilityMeasures$ is a weakly closed subset of $\FiniteMeasures$. This result can be found in \cite[Section 3.2]{bogacev_weak_2018}. Moreover, we already proved in Theorem~\ref{thm:ContinuityDMFTmap} that $\DMFTmap : C \to C$ was weakly continuous.

\medskip

To apply Schauder-Singbal's theorem to our setting, it thus remains to show that $\DMFTmap(C)$ is relatively compact for the weak topology. This is in fact a consequence of Prokhorov's Theorem~\cite[Theorem 2.3.4]{bogacev_weak_2018}.

Indeed, let $\nu\in \DMFTmap(C)$ and $\nu_0\in\ProbabilityMeasures$, $\mu=\IPTmap(\nu_0)\in\PositiveFiniteMeasures$ such that $\nu=\DMFTmap(\nu_0)=\SCmap(\mu)$. As we have seen in the proof of Proposition \ref{prop:ContinuitySCmap}, $\nu$ has finite moments of order 1 and 2, given respectively by $m_1(\nu)=s^1$ and $m_2(\nu)=s^2(\mu)$, where $s^1$ is defined by \eqref{eq:FirstMomentHybridization} and $s^2(\mu)$ by \eqref{eq:SecondMomentHybridization}.
The inequality \eqref{eq:BoundSelfEnergyMass} states that  the mass of the measure $\mu$ is bounded by 1. Hence, $m_2(\nu)=s^2(\mu)$ is bounded independently on $\nu$: 
\begin{equation}
\label{eq:BoundSecondMomentHybridization}
	m_2(\nu) = s^2(\mu)\leq c :=   \frac{W\left((\NIHamiltonian_\perp)^2+ U^2 \right)W^\dagger}{\HybridizationMass}.
\end{equation}
This allows us to show that $\DMFTmap(C)$ is tight. For $\eta>0$, take $K=[-a,a]$, with $a\geq 1$ large enough so that $c/a^2\leq \eta$. Then for $\nu\in \DMFTmap(C)$,    \eqref{eq:BoundSecondMomentHybridization} holds and
\begin{equation*}
	\nu(\RealNumbers\setminus K) = \int_{\RealNumbers\setminus K}\frac{\eps^2}{\eps^2}d\nu(\eps)\leq \frac{1}{a^2} \int_{\RealNumbers\setminus K}\eps^2d\nu(\eps) \leq \frac{1}{a^2}\int_\RealNumbers \eps^2d\nu(\eps) = \frac{m_2(\nu)}{a^2} \leq \frac{c}{a^2} \leq \eta.
\end{equation*}
Hence $\DMFTmap(C)$ is tight. By Prokhorov's Theorem, it is thus weakly relatively compact.

This concludes the proof of our main result.

\subsection{Proof of Proposition \ref{prop:RegularizationDMFTLoop}}

	Let $\nu^0\in\ProbabilityMeasures$ and $k\in 2\Integers$, and assume that $\nu^0$ has finite $k$-th moment, i.e. $\int_\RealNumbers |\eps|^kd\nu^0(\eps)<\infty$. By Theorem \ref{thm:MomentsAsymptoticExpansion}, the following expansion holds, with $m_k(\nu^0)\geq 0$:
	\begin{equation}
	\label{eq:ExpansionMomentOrderK}
		\int_{\RealNumbers}\frac{d\nu^0(\eps)}{iy-\varepsilon} = \frac{1}{iy}+\dots+\frac{m_k(\nu^0)}{(iy)^{k+1}}+o\left(\frac{1}{y^{k+1}}\right).
	\end{equation}
	Then define $\displaystyle\Hybridization^0(z)=\HybridizationMass \PickFunction{\nu^0}$ and $\xi$ by \eqref{eq:IPTintermediate}:
	\begin{equation*}
		\PickFunction{\xi} = \frac{1}{z-\Hybridization^0(z)}.
	\end{equation*}
	This function can be asymptotically expanded to order $k+3$ using \eqref{eq:ExpansionMomentOrderK}.
	$$
	\begin{array}{rl}		\displaystyle\int_{\RealNumbers} \frac{d\xi(\varepsilon)}{iy-\varepsilon} = & \dfrac{1}{iy-\HybridizationMass\left(\frac{1}{iy}+\dots+\frac{m_k(\nu^0)}{(iy)^{k+1}}+o\left(\frac{1}{y^{k+1}}\right)\right)} \\
		= & \displaystyle \frac{1}{iy} \left( 1-\frac{\HybridizationMass}{iy} \left(\frac{1}{iy}+\dots+\frac{m_k(\nu^0)}{(iy)^{k+1}}+o\left(\frac{1}{(iy)^{k+1}}\right)\right)\right)^{-1} \\
		= & \displaystyle\frac{1}{iy} + \dots + \frac{m_{k+2}(\xi)}{(iy)^{k+3}} + o\left(\frac{1}{y^{k+3}}\right).
	\end{array}
	$$
	By Theorem \ref{thm:MomentsAsymptoticExpansion}, $\xi$ has finite moments up to order $k+2$, which is even, denoted by $m_0(\xi)=1,\dots,m_{k+2}(\xi)$. Now let $\mu: = \IPTmap(\nu^0)$, so that $\mu\in\PositiveFiniteMeasures$ is given by \eqref{eq:IPTintermediate4} in the statement of Proposition \ref{prop:DefinitionIPTmap}. In particular, there exists $C_k \in \RealNumbers_+$ such that
	$$ 
		\begin{array}{rl}
			\displaystyle\int_\RealNumbers |\eps|^{k+2}d\mu(\eps) = &\displaystyle \int_{\RealNumbers^3} |\eps_1+\eps_2+\eps_3|^{k+2} \frac{1+e^{-\StatisticalTemperature(\eps_1+\eps_2+\eps_3)}}{(1+e^{-\StatisticalTemperature\eps_1})(1+e^{-\StatisticalTemperature\eps_2})(1+e^{-\StatisticalTemperature\eps_3})} d\xi(\eps_1)d\xi(\eps_2)d\xi(\eps_3) \\
			\leq &\displaystyle \int_{\RealNumbers^3} |\eps_1+\eps_2+\eps_3|^{k+2} d\xi(\eps_1)d\xi(\eps_2)d\xi(\eps_3) \\
			\leq & \displaystyle \int_{\RealNumbers^3}C_k\sum_{\underset{i_1+i_2+i_3=k+2}{1\leq i_1,i_2,i_3\leq k+2}}  |\eps_1|^{i_1}|\eps_2|^{i_2}|\eps_3|^{i_3} d\xi(\eps_1)d\xi(\eps_2)d\xi(\eps_3) \\
			= &\displaystyle C_k \sum_{\underset{i_1+i_2+i_3=k+2}{1\leq i_1,i_2,i_3\leq k+2}} \int_\RealNumbers  |\eps_1|^{i_1} d\xi(\eps_1) \int_\RealNumbers  |\eps_2|^{i_2} d\xi(\eps_2) \int_\RealNumbers  |\eps_3|^{i_3} d\xi(\eps_3) < \infty,
		\end{array}
	$$
	since for $l\leq k+2$, $\int_\RealNumbers |\eps|^ld\xi(\eps) <\infty$. Thus $\mu$ has finite moments of order less than or equal to $k+2$.
	Finally, let $\nu:=\SCmap(\mu)=\DMFTmap(\nu^0)$. Equations \eqref{eq:HybridizationFormulaSCmap} and \eqref{eq:NevanlinnaMeasureHybridizationSCmap} read 
	\begin{equation*}
		\Hybridization(z)=\HybridizationMass\PickFunction{\nu} = W \left( z - \NIHamiltonian_\perp - \SelfEnergy(z) \right)^{-1}  W^\dagger,
	\end{equation*}
	with $\SelfEnergy(z) =U^2\PickFunction{\mu}$. By Theorem \ref{thm:MomentsAsymptoticExpansion}, we can expand $\int_{\RealNumbers} \frac{d\mu(\varepsilon)}{iy-\varepsilon}$ as $y$ goes to $+\infty$ and get
	\begin{eqnarray*}
		\Hybridization(iy) &=&W\left(iy-\NIHamiltonian_\perp-U^2\left(\frac{m_0(\mu)}{iy}+\dots+\frac{m_{k+2}(\mu)}{(iy)^{k+3}}+o\left(\frac{1}{y^{k+3}}\right)\right)\right)^{-1}W^\dagger \\
		&=& \HybridizationMass \left(\frac{1}{iy}+\dots+\frac{m_{k+4}(\nu)}{(iy)^{k+5}}\right)+o\left(\frac{1}{y^{k+5}}\right).
	\end{eqnarray*}
	This means that 
	\begin{equation*}
		\int_{\RealNumbers} \frac{d\nu(\varepsilon)}{iy-\varepsilon} = \frac{1}{iy}+\dots+\frac{m_{k+4}(\nu)}{(iy)^{k+5}}+o\left(\frac{1}{y^{k+5}}\right),
	\end{equation*}
	which proves, by Theorem \ref{thm:MomentsAsymptoticExpansion}, that $\nu$ has finite moments up to order $k+4$.

\section*{Acknowledgements} This project has received funding from the Simons Targeted Grant Award No. 896630 and from the European Research Council (ERC) under the European Union's Horizon 2020 research and innovation programme (grant agreement EMC2 No 810367). The authors thank Michel Ferrero, David Gontier and Mathias Dus for useful discussions. Part of this work was done during the IPAM program {\it Advancing quantum mechanics with mathematics and statistics}.

\appendix
\appendixpage
\section{Uniqueness theorem for an interpolation problem}

The Nevanlinna-Pick \acrfull{acp}, which we will also call \emph{interpolation problem}, has been studied in the beginning of the 20th century independently by Nevanlinna (\cite{nevanlinna_uber_1919}, 1919) and Pick (\cite{pick_uber_1915}, 1915). The results presented in Section \ref{sec:IntroductionNevanlinnaPickInterpolation} are gathered in the book \cite{walsh_interpolation_1935}. Many other references address this problem, such as \cite{garnett_bounded_2006}, \cite{abrahamse_pick_1979}, \cite{sarason_generalized_1967}. In Section \ref{sec:IntroductionNevanlinnaPickInterpolation}, we set up the problem and then give some results that are important to our analysis. Other important results on the \acrshort{acp} and characterizations of the solutions are detailed in the references. For example, we will not discuss the question of extremal solutions (\cite{nicolau_nevanlinna-pick_2015}, \cite{stray_interpolating_1991}), nor the use of Blaschke products (\cite{walsh_interpolation_1935}, \cite{garnett_bounded_2006}, \cite{nicolau_interpolating_1994}). The approaches of Nevanlinna and Pick are different, we chose here to focus on Nevanlinna's approach.

\subsection{Introduction and some general properties}
\label{sec:IntroductionNevanlinnaPickInterpolation}

The Nevanlinna-Pick \acrfull{acp} can be stated as follows. Let $(z_n)_{n\in I}$ be a sequence of distinct points in the Poincaré upper-half-plane $\UpperHalfPlane$ and let $(w_n)_{n\in I}$ be a sequence in $\overline{\UpperHalfPlane}$. We want to answer the following question.

\begin{question}
\label{question:InterpolationProblemUpperHalfPlane}
Is there an analytic function $f:\UpperHalfPlane\rightarrow\overline{\UpperHalfPlane}$ interpolating the given values at the prescribed points? In other words, we look for a Pick function $f$ such that
\begin{equation}
	\label{eq:InterpolationProblem}
	\forall n \in I,\ f(z_n)=w_n,
\end{equation}
where $I$ is a (at most) countable set.
\end{question}

Without loss of generality, we can assume that $I=\{1,2,\ldots\}$. We denote by $\InterpolationProblem{\UpperHalfPlane}(z_n,w_n)_{I}$ the set of solutions to this problem (we will omit the dependence on $I$ unless when needed). Both sets $\UpperHalfPlane$ and $\overline{\UpperHalfPlane}$ are invariant under the action of the subset $\mathcal{T}$ of affine transformations of $\overline{\UpperHalfPlane}$ given by $\mathcal{T} = \{ \tau : \overline{\UpperHalfPlane} \to \overline{\UpperHalfPlane}, z \mapsto az+b,\ b \in \RealNumbers,\ a > 0 \}$, and we have 

\begin{equation}
\label{eq:AffineTransformationInterpolationProblem}
\forall \tau_1,\tau_2 \in \mathcal{T},\quad \InterpolationProblem{\UpperHalfPlane}(\tau_1(z_n),\tau_2(w_n))= \tau_2 \circ \InterpolationProblem{\UpperHalfPlane}(z_n,w_n) \circ \tau_1^{-1}.
\end{equation}

Moreover, question \ref{question:InterpolationProblemUpperHalfPlane} can equivalently be stated in the unit disc instead of the upper-half-plane:

\begin{question}
\label{question:ProblemInterpolationInTheDisc}
Let $(z_n)$ and $(w_n)$ be sequences in the unit disc $\UnitDisc=\{z\in\ComplexNumbers,|z|<1\}$ and the closed disc $\ClosedUnitDisc$ respectively. Is there an analytic function $f:\UnitDisc\rightarrow\ClosedUnitDisc$ such that $\forall n \in I, \quad f(z_n)=w_n$ ?
\end{question}

The reason for the equivalence between both formulations is simply the upper-half-plane can be mapped to the unit disc through the Cayley transform, which is biholomorphic between these sets. The Cayley transform $\CayleyTransform$ and its reciprocal $\CayleyTransform^{-1}$ are given by:

\begin{equation*}
	\begin{array}{rclcrcl}
		\CayleyTransform :  \ComplexNumbers\backslash\{-i\} & \longrightarrow & \ComplexNumbers\backslash\{1\} & \quad & \CayleyTransform^{-1} : \ComplexNumbers\backslash\{1\} & \longrightarrow & \ComplexNumbers\backslash\{-i\} \\
		 z & \mapsto & \dfrac{z-i}{z+i} & \quad & z & \mapsto & i\dfrac{1+z}{1-z}.
	\end{array}
\end{equation*}

As the Cayley transform $\CayleyTransform$ is biholomorphic from $\UpperHalfPlane$ to $\UnitDisc$ and maps the real line $\RealNumbers$ to the unit circle deprived of the point 1, we have the equivalence of the following statements, with $F : \UpperHalfPlane \to \overline{\UpperHalfPlane}$, $(z_n)\subset \UpperHalfPlane$, $(w_n)\subset \overline{\UpperHalfPlane}$, $f=\CayleyTransform\circ F \circ \CayleyTransform^{-1} : \UnitDisc\to\overline{\UnitDisc}$, $\tilde{z_n}=\CayleyTransform(z_n)\in\UnitDisc$ and $\tilde{w_n	}=\CayleyTransform(w_n)\in\overline{\UnitDisc}$.
\begin{equation*}
	\forall n\geq 1 , \ F(z_n) = w_n \iff \forall n\geq 1 ,\ f(\tilde{z_n}) = \tilde{w_n}.
\end{equation*}

In other words, denoting by $\InterpolationProblem{\UnitDisc}(z_n,w_n)$ the set of solutions to Question \ref{question:ProblemInterpolationInTheDisc}, we have
\begin{equation}
\label{eq:RelationInterpolationProblemsUpperHalfPlaneUnitDisc}
\InterpolationProblem{\UnitDisc}(\CayleyTransform(z_n),\CayleyTransform(w_n))=\CayleyTransform \circ \InterpolationProblem{\UpperHalfPlane}(z_n,w_n)\circ \CayleyTransform^{-1}.
\end{equation}

As a matter of fact, we can answer Question \ref{question:ProblemInterpolationInTheDisc} when the sequence $(w_n)$ is constant and of modulus 1, as a direct consequence of the maximum modulus principle.

\begin{lemma}
Given $C \in \overline{\UnitDisc} \setminus \UnitDisc$, $\InterpolationProblem{\UnitDisc}(z_n,C)= \{ C \}$.
\end{lemma}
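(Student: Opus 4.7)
The plan is to invoke the maximum modulus principle, which is the natural tool here because the target is the closed unit disc while the domain is the open unit disc, and the prescribed value $C$ sits on the boundary.

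First, note that $\overline{\UnitDisc} \setminus \UnitDisc$ is precisely the unit circle, so $|C|=1$. The constant function $f \equiv C$ clearly belongs to $\InterpolationProblem{\UnitDisc}(z_n,C)$ (viewed as a map $\UnitDisc \to \overline{\UnitDisc}$), so the set of solutions is nonempty.

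For the reverse inclusion, let $f \in \InterpolationProblem{\UnitDisc}(z_n,C)$ be arbitrary. Then $f : \UnitDisc \to \overline{\UnitDisc}$ is analytic, so in particular $|f(z)| \le 1$ for every $z \in \UnitDisc$. The interpolation condition gives $|f(z_1)| = |C| = 1$, meaning that the modulus $|f|$ attains its supremum at the \emph{interior} point $z_1 \in \UnitDisc$. Since $\UnitDisc$ is a connected open set, the maximum modulus principle forces $f$ to be constant on $\UnitDisc$, and since $f(z_1)=C$, we conclude $f \equiv C$.

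The argument is essentially one line once the setup is in place; there is no real obstacle. The only subtlety worth flagging is that the statement only makes sense when the sequence $(z_n)$ is nonempty (otherwise every Pick-disc function vacuously interpolates), but this is implicit in the formulation of the interpolation problem.
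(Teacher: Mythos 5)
Your proof is correct and follows exactly the approach the paper has in mind, as the paper explicitly presents this lemma as "a direct consequence of the maximum modulus principle" without further elaboration. The observation that $|f(z_1)| = |C| = 1$ forces $f$ to be constant because the maximum modulus is attained at an interior point is precisely the intended argument.
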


Combined with equation \eqref{eq:RelationInterpolationProblemsUpperHalfPlaneUnitDisc}, we have a first statement about the set of solutions $\InterpolationProblem{\UpperHalfPlane}$ to the interpolation problem \eqref{eq:InterpolationProblem}.

\begin{corollary}
\label{cor:UniqueSolutionRealConstant}
Given $C \in \RealNumbers$, $\InterpolationProblem{\UpperHalfPlane}(z_n,C)=\{ C \}$.
\end{corollary}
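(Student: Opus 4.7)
The approach is direct: transfer the problem to the unit disc via the Cayley transform, invoke the preceding lemma, and transfer back. The key observation is that the Cayley transform sends $\RealNumbers$ to the unit circle $\partial \UnitDisc = \overline{\UnitDisc} \setminus \UnitDisc$: indeed, for any $C \in \RealNumbers$, $|\CayleyTransform(C)| = |C-i|/|C+i| = 1$ since $|C-i|^2 = C^2 + 1 = |C+i|^2$.

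First I would note that the constant function $f \equiv C$ clearly belongs to $\InterpolationProblem{\UpperHalfPlane}(z_n, C)$, since it is analytic and maps $\UpperHalfPlane$ into $\{C\} \subset \RealNumbers \subset \overline{\UpperHalfPlane}$. It remains to establish uniqueness. Setting $\widetilde{C} := \CayleyTransform(C)$, the preliminary computation gives $\widetilde{C} \in \overline{\UnitDisc} \setminus \UnitDisc$, so the preceding lemma applies and yields
\begin{equation*}
\InterpolationProblem{\UnitDisc}(\CayleyTransform(z_n),\widetilde{C}) = \{\widetilde{C}\}.
\end{equation*}

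Next I would combine this with the conjugation formula \eqref{eq:RelationInterpolationProblemsUpperHalfPlaneUnitDisc}, which asserts
\begin{equation*}
\InterpolationProblem{\UnitDisc}(\CayleyTransform(z_n),\CayleyTransform(C)) = \CayleyTransform \circ \InterpolationProblem{\UpperHalfPlane}(z_n,C) \circ \CayleyTransform^{-1}.
\end{equation*}
Thus, for any $F \in \InterpolationProblem{\UpperHalfPlane}(z_n,C)$, the function $\CayleyTransform \circ F \circ \CayleyTransform^{-1}$ must equal the constant $\widetilde{C}$ on $\UnitDisc$. Since $\CayleyTransform^{-1} : \UnitDisc \to \UpperHalfPlane$ is a biholomorphism, this forces $F(z) = \CayleyTransform^{-1}(\widetilde{C}) = C$ for every $z \in \UpperHalfPlane$, proving $\InterpolationProblem{\UpperHalfPlane}(z_n,C) = \{C\}$.

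There is no real obstacle here: the only point worth double-checking is that $\CayleyTransform^{-1}(\widetilde{C})$ is well-defined, i.e.\ $\widetilde{C} \neq 1$, which is automatic because $\widetilde{C} = \CayleyTransform(C)$ with $C \in \RealNumbers$ finite (and $\CayleyTransform$ is injective on $\ComplexNumbers \setminus \{-i\}$, with $\CayleyTransform(C) = 1$ only in the limit $C \to \infty$). The entire argument is a one-line corollary of the disc case together with the transport property \eqref{eq:RelationInterpolationProblemsUpperHalfPlaneUnitDisc}.
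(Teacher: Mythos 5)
Your proof is correct and follows the same route the paper intends (the paper simply states the corollary is ``combined with equation~\eqref{eq:RelationInterpolationProblemsUpperHalfPlaneUnitDisc},'' i.e.\ transfer to the disc by Cayley, apply the maximum-modulus lemma, transfer back). The extra verifications you add --- that $|\CayleyTransform(C)|=1$ for real $C$, and that $\CayleyTransform(C)\neq 1$ so the inverse Cayley transform is well-defined --- are exactly the details the paper leaves implicit.
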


The next definition introduces the functions $b_a$ (the notation suggests Blaschke products, see \cite{nicolau_nevanlinna-pick_2015} and \cite{garnett_bounded_2006}), which will be useful for our analysis of $\InterpolationProblem{\UnitDisc}(z_n,w_n)$.

\begin{definition}
Let $a \in \UnitDisc$. Define for $z\in\UnitDisc$, $b_{a}(z)=\frac{z-a}{1-\overline{a}z} \in \UnitDisc$.
\end{definition}

These functions are in fact biholomorphic from $\UnitDisc$ to itself and $b_a$ only vanishes in $a\in\UnitDisc$. This zero is of order 1. The function $b_a$ can actually be extended to the closed disc $\ClosedUnitDisc$, and we will also denote this extension $b_a$. Considering our interpolation problem, we notice the following. If $f$ is a solution to the interpolation problem stated in question \ref{question:ProblemInterpolationInTheDisc}, we have $f(z_1)=w_1$, with $|w_1|\leq 1$. If the modulus of $w_1$ is 1, then $f$ is constant because of the maximum modulus principle. Suppose this is not the case and define the function $f^{(1)}$ by

\begin{equation}
\label{eq:IteratedFunction}
f^{(1)}(z) = \frac{b_{w_1}(f(z))}{b_{z_1}(z)}.
\end{equation}

$f^{(1)}$ is then well-defined on $\UnitDisc$, because the only zero of $b_{z_1}$ is $z_1$ and is of order 1, and $z_1$ is also a zero of order at least 1 of $b_{w_1}\circ f$. One can check that $f^{(1)}$ takes values in $\ClosedUnitDisc$, so that $f^{(1)}$ is analytic from $\UnitDisc$ to $\ClosedUnitDisc$ if and only if $f=b_{w_1}^{-1}\circ (b_{z_1}f^{(1)})$ is analytic from $\UnitDisc$ to $\ClosedUnitDisc$.

Now suppose $\Cardinal{I}\geq 2$ and define for all $n \in I \setminus \{ 1 \}$, $w_n^{(1)} := \frac{b_{w_1}(w_n)}{b_{z_1}(z_n)}$. We notice that $f^{(1)}(z_n) = \frac{b_{w_1}(f(z_n))}{b_{z_1}(z_n)} = \frac{b_{w_1}(w_n)}{b_{z_1}(z_n)} = w_n^{(1)}$. This means that  $f^{(1)}$ is the solution to the Nevanlinna-Pick interpolation problem

\begin{equation}
\label{eq:IteratedInterpolationProblem}
g(z_n) =  w_n^{(1)}, \ \forall n\in I \setminus \{ 1 \}.
\end{equation}

We have proven the following lemma, which is the main element of the Schur interpolation algorithm \cite{gull_continuous-time_2011}.

\begin{lemma}[Schur iteration]
\label{lemma:SchurIteration}
Assume  $\Cardinal{I} \geq 2$ and $w_1 \in \UnitDisc$. We then have the following equivalence:
\begin{equation*}
f \in \InterpolationProblem{\UnitDisc}(z_n,w_n)_I \iff f^{(1)}: z \mapsto \frac{b_{w_1}(f(z))}{b_{z_1}(z)} \in \InterpolationProblem{\UnitDisc}(z_n,w_n^{(1)})_{I\setminus \{1 \}}.
\end{equation*}
In other words,
\begin{equation*}
\InterpolationProblem{\UnitDisc}(z_n,w_n)_I=b_{w_1}^{-1}\circ \left( b_{z_1} \cdot \InterpolationProblem{\UnitDisc}(z_n,w_n^{(1)})_{I\setminus \{1 \}} \right).
\end{equation*}
\end{lemma}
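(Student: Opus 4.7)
The plan is to establish the equivalence by verifying both directions through direct construction and application of the Schwarz lemma. The core observation is that for any $a \in \UnitDisc$, the Möbius transformation $b_a$ is a biholomorphic automorphism of $\UnitDisc$ with inverse $b_{-a}$, sending $a$ to $0$; this fact, together with the identity $b_{w_1}(w_1) = 0$ forced at $z = z_1$, is what makes the quotient defining $f^{(1)}$ well-behaved at $z_1$.

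For the forward direction, assume $f \in \InterpolationProblem{\UnitDisc}(z_n,w_n)_I$ and set $h := b_{w_1} \circ f : \UnitDisc \to \overline{\UnitDisc}$. Since $f(z_1) = w_1$, we have $h(z_1) = 0$, so $h$ is divisible in the ring of holomorphic functions on $\UnitDisc$ by the simple factor $b_{z_1}$, and the quotient $f^{(1)} = h / b_{z_1}$ is analytic on $\UnitDisc$. To show $f^{(1)}(\UnitDisc) \subset \overline{\UnitDisc}$, I would apply the Schwarz lemma to $h \circ b_{z_1}^{-1} = h \circ b_{-z_1}$, which maps $\UnitDisc$ into $\overline{\UnitDisc}$ and vanishes at $0$; this yields $|h \circ b_{-z_1}(\zeta)| \leq |\zeta|$, and substituting $\zeta = b_{z_1}(z)$ gives $|h(z)| \leq |b_{z_1}(z)|$, i.e.\ $|f^{(1)}(z)| \leq 1$. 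The interpolation identities $f^{(1)}(z_n) = w_n^{(1)}$ for $n \geq 2$ follow by a direct substitution using $f(z_n) = w_n$ and the fact that $b_{z_1}(z_n) \neq 0$ since the $z_n$ are distinct.

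For the backward direction, given $g \in \InterpolationProblem{\UnitDisc}(z_n, w_n^{(1)})_{I\setminus\{1\}}$, I would set $f(z) := b_{-w_1}\bigl(b_{z_1}(z)\, g(z)\bigr)$. Since $|b_{z_1}(z)| < 1$ on $\UnitDisc$ and $|g(z)| \leq 1$, the product $b_{z_1}(z) g(z)$ lies in $\UnitDisc$, so $f$ is well-defined, analytic, and takes values in $\UnitDisc \subset \overline{\UnitDisc}$. At $z = z_1$ one gets $f(z_1) = b_{-w_1}(0) = w_1$, and for $n \geq 2$ the computation $f(z_n) = b_{-w_1}\bigl(b_{z_1}(z_n)\,w_n^{(1)}\bigr) = b_{-w_1}\bigl(b_{w_1}(w_n)\bigr) = w_n$ closes the argument. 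The bijectivity of $b_{-w_1} \circ (b_{z_1} \cdot \,\cdot\,)$ on the relevant function spaces, together with the uniqueness of the quotient $h/b_{z_1}$, shows that the two constructions are mutual inverses, giving the set-level identity $\InterpolationProblem{\UnitDisc}(z_n,w_n)_I = b_{w_1}^{-1} \circ \bigl(b_{z_1} \cdot \InterpolationProblem{\UnitDisc}(z_n,w_n^{(1)})_{I\setminus\{1\}}\bigr)$. The only nontrivial step is the modulus bound on $f^{(1)}$, where I expect Schwarz's lemma applied after the automorphism change of variable to be the essential ingredient.
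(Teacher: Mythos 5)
Your proof is correct and follows the same Schur-iteration strategy the paper sketches; the only difference is that you supply the detail the paper elides behind ``one can check that $f^{(1)}$ takes values in $\overline{\UnitDisc}$'' by explicitly invoking the Schwarz lemma on $b_{w_1}\circ f\circ b_{-z_1}$, and you make the backward direction and the set-level bijectivity explicit rather than leaving them implicit in the identity $f=b_{w_1}^{-1}\circ(b_{z_1}f^{(1)})$.
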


Both previous lemmas give the intuition about the following theorem, which can be found in any reference dealing with the issue of Nevanlinna-Pick interpolation, e.g. \cite{walsh_interpolation_1935}, \cite{garnett_bounded_2006}, \cite{nicolau_interpolating_1994}. Refinements of this result and characterizations of the solutions are detailed in the references given at the beginning of this section. 

\begin{theorem}
\label{thm:UnicityNevanlinnaPoints}
Let $(z_n)_{n\geq 1}$ and $(w_n)_{n\geq 1}$ be sequences in the unit disc $\UnitDisc$ and the closed disc $\ClosedUnitDisc$ respectively. Define by induction, for $1\leq l$ and $k>l$,
\begin{equation}
\label{eq:NevanlinnaPoints}
w_k^{(l)} := \dfrac{w_k^{(l-1)}-w_l^{(l-1)}}{1-\overline{w_l^{(l-1)}}w_k^{(l-1)}}\dfrac{1-\overline{z_l}z_k}{z_k-z_l} = \dfrac{b_{w_l^{(l-1)}}(w_k^{(l-1)})}{b_{z_l}(z_k)},
\end{equation}
with $w_k^{(0)}=w_k$ for $k\geq 1$.

There are three distinct cases.
\begin{enumerate}
\item If there exists $k\geq 1$ such that $|w_k^{(k-1)}|>1$, then there is no solution to the interpolation problem~(\ref{eq:InterpolationProblem}).
\item Else, if there exists $K\geq 1$ such that for all $1\leq k < K$, $|w_k^{(k-1)}|<1$, $|w_K^{(K-1)}|=1$ and for all $l\geq K$, $w_l^{(K-1)} = w_K^{(K-1)}$, then there exists a unique solution to the interpolation problem~(\ref{eq:InterpolationProblem}).
\item Else, we have for all $k\geq 1$, $|w_k^{(k-1)}|<1$ and there is either 1 or infinitely many solutions to the interpolation problem.
\end{enumerate}	 
\end{theorem}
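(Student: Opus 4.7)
The plan is to prove the trichotomy by iterating the Schur reduction of Lemma \ref{lemma:SchurIteration}: each application replaces an interpolation problem on $(z_n,w_n)_{n\geq l}$ with an equivalent one on $(z_n,w_n^{(l)})_{n\geq l+1}$, provided $|w_l^{(l-1)}|<1$. The Nevanlinna data $w_k^{(l)}$ defined by \eqref{eq:NevanlinnaPoints} are precisely the data one reads off after $l$ such reductions, and the three cases of the theorem correspond to three distinct ways in which this iterative process can terminate. The key underlying tool is the maximum modulus principle applied to $f:\UnitDisc\to\ClosedUnitDisc$: if $f$ attains a value of modulus $1$ at an interior point, then $f$ is constant.

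I will first treat the finite base case. For a single interpolation constraint $f(z_1)=w_1$, three subcases arise. If $|w_1|>1$, no analytic $\UnitDisc\to\ClosedUnitDisc$ can take the value $w_1$. If $|w_1|=1$, the maximum modulus principle forces $f\equiv w_1$, the unique solution. If $|w_1|<1$, any function of the form $f(z)=b_{w_1}^{-1}(b_{z_1}(z)\,h(z))$ with $h:\UnitDisc\to\ClosedUnitDisc$ analytic is a solution, yielding infinitely many.

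Now for the general statement. In Case 1, let $k$ be minimal with $|w_k^{(k-1)}|>1$, so that $|w_j^{(j-1)}|<1$ for $j<k$. Applying Lemma \ref{lemma:SchurIteration} iteratively $k-1$ times yields a bijection between $\InterpolationProblem{\UnitDisc}(z_n,w_n)_{n\geq 1}$ and $\InterpolationProblem{\UnitDisc}(z_n,w_n^{(k-1)})_{n\geq k}$. The latter is empty, since $w_k^{(k-1)}$ lies outside $\ClosedUnitDisc$. In Case 2, after $K-1$ Schur reductions the reduced problem is $f^{(K-1)}(z_n)=w_n^{(K-1)}$ for $n\geq K$, with $|w_K^{(K-1)}|=1$. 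The maximum modulus principle forces $f^{(K-1)}\equiv w_K^{(K-1)}$, so the reduced problem has a (necessarily unique) solution precisely when $w_n^{(K-1)}=w_K^{(K-1)}$ for all $n>K$, which is the compatibility hypothesis assumed. Reversing the Schur substitutions, which are bijections between solution sets by \eqref{eq:AffineTransformationInterpolationProblem} and the analogous identity for Lemma \ref{lemma:SchurIteration}, transports this unique solution back to the original problem.

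In Case 3, where $|w_k^{(k-1)}|<1$ for every $k$, I separate two subcases. If $I$ is finite with $|I|=N$, then after $N$ Schur reductions there are no constraints left, so any analytic $f^{(N)}:\UnitDisc\to\ClosedUnitDisc$ gives a solution of the original problem, hence infinitely many. If $I$ is infinite, I will construct a solution by a normal-families argument: for each $N$, the truncated problem at $(z_n,w_n)_{1\leq n\leq N}$ admits a solution $f_N$ (e.g.\ obtained by inverting the $N$-step Schur reduction with $f_N^{(N)}\equiv 0$). Since $\{f_N\}$ is a family of holomorphic $\UnitDisc\to\ClosedUnitDisc$, Montel's theorem yields a subsequence converging locally uniformly on $\UnitDisc$ to an analytic $f_\infty:\UnitDisc\to\ClosedUnitDisc$, and continuity of pointwise evaluation shows $f_\infty(z_n)=w_n$ for all $n$. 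Uniqueness versus multiplicity is then read off: if some solution $f$ has $|f(z_0)|<1$ at a non-interpolation point $z_0\in\UnitDisc$, one can perturb it by any sufficiently small factor using a Blaschke-product deformation, producing a continuum of solutions; otherwise the solution is determinate. Either way, there is at least one solution.

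The main obstacle will be the infinite subcase of Case 3: the existence part requires the compactness argument above (one must verify that the subsequential limit $f_\infty$ really interpolates every $z_n$, uniformly in $N$), and the dichotomy "one or infinitely many" requires a separate perturbation argument via Blaschke-type factors to show that whenever the Pick problem is not determinate it admits a whole family of solutions. The first two cases follow almost mechanically once the Schur reduction has been set up, so the bulk of the work and the subtlety of the statement really lie in this final case.
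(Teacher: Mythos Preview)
The paper does not prove this theorem: immediately before the statement it notes that the result ``can be found in any reference dealing with the issue of Nevanlinna-Pick interpolation'' and cites \cite{walsh_interpolation_1935}, \cite{garnett_bounded_2006}, \cite{nicolau_interpolating_1994}. There is therefore no in-paper proof to compare your proposal against; the theorem is quoted as a classical fact, and only the two preceding lemmas (the maximum-modulus consequence and the Schur iteration) are offered as intuition.

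Your sketch follows exactly the classical Schur-algorithm route that those references take, so in spirit it is the ``right'' proof. Two points are worth tightening. In Case~1 you assert that minimality of $k$ with $|w_k^{(k-1)}|>1$ forces $|w_j^{(j-1)}|<1$ for all $j<k$; a priori one could have $|w_j^{(j-1)}|=1$ for some $j<k$, and you should argue separately (via the maximum modulus principle, as in Case~2) that this either forces nonexistence earlier or is incompatible with $|w_k^{(k-1)}|>1$ being the first failure. In Case~3 with infinite $I$, your Montel argument for existence is fine, but the ``one or infinitely many'' dichotomy is not established by the vague Blaschke-perturbation remark: the standard proof uses the Nevanlinna parametrization, which shows that for each fixed $z\in\UnitDisc$ the set $\{f(z):f\in\InterpolationProblem{\UnitDisc}(z_n,w_n)\}$ is either a single point or a closed disc of positive radius, and the problem is determinate precisely when all these discs collapse. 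Your outline identifies the right obstacle but does not yet supply the mechanism.
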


\subsection{A uniqueness result for \acrshort{acp}s with a rational solution}

In our mathematical framework, we need the following uniqueness theorem (Theorem \ref{thm:UniquenessNevanlinnaPickInterpolation}).
In this section, we are tackling the uniqueness of the interpolation problem \eqref{eq:InterpolationProblem}, in the case where we have already found one solution of some specific form. We assume that we have found a solution $f$ such that its Nevanlinna-Riesz measure in the integral representation \eqref{eq:NevanlinnaRepresentation} is a finite sum of weighted Dirac measures, that is $f$ is a rational function. This means that we have $f(z)= \alpha z +C+ \sum_{k=1}^K\frac{a_k}{z-\eps_k}$, with $\alpha\geq 0$, $C\in\RealNumbers$, $K\in\Integers$, the $a_k$'s are negative numbers and the $\eps_k$'s are distinct real numbers. The following theorem states that it is then the only solution to the interpolation problem \eqref{eq:InterpolationProblem}.

\begin{theorem}
\label{thm:UniquenessNevanlinnaPickInterpolation} Let $f: \UpperHalfPlane \to \overline{\UpperHalfPlane}$ be a Pick function such that its Nevanlinna-Riesz measure $\mu$ is a finite sum of Dirac measures: there exist $K \in \Integers$, $a_1,\dots,a_K < 0$ and distinct real numbers $\eps_1,\dots,\eps_K$ such that 
\begin{equation*}
\mu=\sum_{k=1}^K a_k \delta_{\eps_k}.
\end{equation*} 
Then, if $f$ is a solution to an interpolation problem $\InterpolationProblem{\ComplexNumbers}(z_n,w_n)_I$ with $\Cardinal{I} \geq K+2$, this problem has no other solution.
\end{theorem}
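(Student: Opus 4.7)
My plan is to transport the problem via the Cayley transform $\CayleyTransform$ to the open unit disc and then exploit the Schur iteration of Lemma \ref{lemma:SchurIteration}. Setting $F:=\CayleyTransform\circ f\circ \CayleyTransform^{-1}$, the function $F$ is analytic from $\UnitDisc$ to $\overline{\UnitDisc}$ and solves the interpolation problem $\InterpolationProblem{\UnitDisc}(\CayleyTransform(z_n),\CayleyTransform(w_n))_I$ by \eqref{eq:RelationInterpolationProblemsUpperHalfPlaneUnitDisc}, with at least $K+2$ interpolation conditions. The central claim I aim to establish is that $F$ is a finite Blaschke product of degree $d\le K+1$, after which Schur iteration will force termination in case 2 of Theorem \ref{thm:UnicityNevanlinnaPoints}.

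To prove the Blaschke structure, I would first check that $F$ maps $\UnitDisc$ \emph{strictly} into $\UnitDisc$: from the identity
\[
\Im f(z)=\Im(z)\Bigl[\alpha-\sum_{k=1}^K\frac{a_k}{|z-\eps_k|^2}\Bigr],
\]
together with $a_k<0$ and $\alpha\ge 0$, one reads $\Im f(z)>0$ on $\UpperHalfPlane$, hence $|F(\zeta)|<1$ on $\UnitDisc$. Moreover, $f$ extends to a rational function on $\ComplexNumbers$ with poles only at the real points $\eps_k$ (and possibly $\infty$ if $\alpha>0$) and takes real values on $\RealNumbers$ off these singularities; so the Cayley conjugate $F$ is rational on $\ComplexNumbers$, with unimodular boundary values on $\partial\UnitDisc$ outside a finite set. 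Combined with $F(\UnitDisc)\subset\UnitDisc$, this forces $F$ to be a finite Blaschke product. Its degree $d$ equals the number of preimages of $0$ in $\UnitDisc$, equivalently of solutions to $f(w)=i$ in $\UpperHalfPlane$; clearing denominators, the latter is a polynomial equation of degree at most $K+1$, giving $d\le K+1$.

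I would then run the Schur algorithm of Lemma \ref{lemma:SchurIteration} on $F$ using the transformed data $(\CayleyTransform(z_n),\CayleyTransform(w_n))_{n\ge 1}$. As long as $|w_l^{(l-1)}|<1$, the iterate $F^{(l)}=b_{w_l^{(l-1)}}(F^{(l-1)})/b_{z_l}$ is again a finite Blaschke product whose degree drops by exactly one: Möbius self-maps of $\UnitDisc$ preserve the Blaschke class and its degree, while the simple zero at $z_l$ of the numerator (forced by $F^{(l-1)}(z_l)=w_l^{(l-1)}$) cancels the simple zero of $b_{z_l}$. Since the starting degree is $d\le K+1$, after at most $K+1$ reductions we arrive at a Blaschke product of degree $0$, that is, a constant of modulus $1$. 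Consequently there exists $l\le d+1\le K+2\le\Cardinal{I}$ with $|w_l^{(l-1)}|=1$, which by case 2 of Theorem \ref{thm:UnicityNevanlinnaPoints} yields uniqueness of the solution in the disc setting; transferring back via \eqref{eq:RelationInterpolationProblemsUpperHalfPlaneUnitDisc} gives uniqueness of $f$.

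The main technical obstacle is the rigorous degree bookkeeping along the Schur iteration, i.e.\ checking that the zero of $b_{w_l^{(l-1)}}(F^{(l-1)})$ at $z_l$ is simple and cancels cleanly the simple pole of $1/b_{z_l}$ there, so that the McMillan degree drops by exactly one and never stays constant. Some care is also needed for edge cases: if at some earlier index $|w_l^{(l-1)}|$ already equals $1$, the Schur iteration cannot be launched, but we are immediately in case 2 of Theorem \ref{thm:UnicityNevanlinnaPoints} and uniqueness is automatic, while the degenerate situation $\alpha=0,\ K=0$ (so that $f$ is a real constant) is handled directly by Corollary \ref{cor:UniqueSolutionRealConstant} without invoking the Blaschke argument at all.
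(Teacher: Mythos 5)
Your proof is correct, and it reaches the same conclusion as the paper's via a genuinely different packaging of the same Schur-iteration idea. The paper argues by strong induction on $K$: it transports one Schur step back to the half-plane, explicitly computes the partial-fraction decomposition of the reduced function $g=\CayleyTransform^{-1}\circ[\CayleyTransform\circ f]^{(1)}$, verifies that the Nevanlinna-Riesz measure of $g$ is a sum of one fewer Dirac masses, and invokes the inductive hypothesis; the base case $K=0$ is handled via Corollary~\ref{cor:UniqueSolutionRealConstant} when $\alpha=0$, and by showing all the iterated Nevanlinna points $\widetilde{w}_n^{(1)}$ equal $1$ when $\alpha>0$. You instead stay entirely on the disc side, recognize $F=\CayleyTransform\circ f\circ\CayleyTransform^{-1}$ as a finite Blaschke product (rational inner function) of degree $d\le K+1$, and observe that each Schur step lowers the degree by exactly one, so that after $d\le K+1$ steps the iterate is a unimodular constant and case 2 of Theorem~\ref{thm:UnicityNevanlinnaPoints} applies with $K'=d+1\le K+2\le \Cardinal{I}$. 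What you gain is conceptual transparency: the invariant that decreases (Blaschke degree) is abstract and requires no explicit partial-fraction recomputation, whereas the paper's inductive step is computationally heavy but makes the structure of the reduced Pick function completely explicit in the half-plane. Your one worry about the degree bookkeeping is unfounded in a way you should state: if $B=b_{w_l^{(l-1)}}(F^{(l-1)})$ has a zero of multiplicity $m\ge 1$ at $z_l$, dividing by the single factor $b_{z_l}$ still removes exactly one Blaschke factor, so the degree drops by exactly one regardless of $m$; no simplicity of the zero is required. Two small points you left implicit and should spell out: (i) for $l\le d$, the iterate $F^{(l-1)}$ has positive degree and is therefore a proper self-map of $\UnitDisc$, so $|w_l^{(l-1)}|=|F^{(l-1)}(z_l)|<1$ automatically, and an intermediate $|w_l^{(l-1)}|=1$ can only occur when the iterate is already a unimodular constant; (ii) when $F^{(d)}$ is a unimodular constant $c$, all the subsequent Nevanlinna points satisfy $w_k^{(d)}=c$ for $k>d$, which is precisely the stabilization needed for case 2 of Theorem~\ref{thm:UnicityNevanlinnaPoints}, not merely the single equality $|w_{d+1}^{(d)}|=1$.
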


\begin{proof}
We prove this result by strong induction on $K$, for $K \in \Integers$. If $K=0$, the expression of $f$ reads $f(z)=\alpha z+C$, with $\alpha\geq 0$ and $C\in\RealNumbers$. Now suppose $f \in \InterpolationProblem{\UpperHalfPlane}(z_n,w_n)$. Then for all $n \in I$, we have $\alpha z_n +C= w_n$, so that 
\begin{equation*}
\InterpolationProblem{\UpperHalfPlane}(z_n,w_n)=\InterpolationProblem{\UpperHalfPlane}(z_n,\alpha z_n+C).
\end{equation*}

If $\alpha=0$, then $\InterpolationProblem{\UpperHalfPlane}(z_n,C)=\{ C \}$ by corollary \ref{cor:UniqueSolutionRealConstant}. Else, we extend continuously $f$ to $\overline{\UpperHalfPlane}$ into the affine transformation $\widetilde{f} \in \mathcal{T}$ and we have
\begin{equation*}
\InterpolationProblem{\UpperHalfPlane}(z_n,w_n)= \widetilde{f} \circ \CayleyTransform^{-1} \circ \InterpolationProblem{\UnitDisc}(\CayleyTransform(z_n),\CayleyTransform(z_n)) \circ \CayleyTransform.
\end{equation*}

Denoting by $\widetilde{z}_n=\widetilde{w}_n=\CayleyTransform(z_n)$, we find that for all $n \in I \setminus \{1\}, \widetilde{w}^{(1)}_n=1$. Hence by Theorem \ref{thm:UnicityNevanlinnaPoints}, $\InterpolationProblem{\UnitDisc}(\CayleyTransform(z_n),\CayleyTransform(z_n))= \{ z\mapsto z\}$ and $\InterpolationProblem{\UpperHalfPlane}(z_n,w_n)=\{ f \}$.

Now assume the result holds for $l \leq K$ and take $f$ such that its Nevanlinna-Riesz measure is a sum of $K+1$ Dirac measures. Now consider an \acrshort{acp} $\InterpolationProblem{\ComplexNumbers}(z_n,w_n)_I$  with $\Cardinal{I}\geq K+3$ to which $f$ is a solution, namely $f(z_n)=w_n$ and there exist $\alpha\geq 0$, $C\in\RealNumbers$, $a_1,\dots,a_{K+1} < 0$ and real numbers $\eps_1<\dots<\eps_{K+1}$ , such that for all $z \in \UpperHalfPlane$,
\begin{equation*}
f(z)= \alpha z + C + \sum_{k=1}^{K+1}\frac{a_k}{z-\eps_k}.
\end{equation*}

We start by making use of the property  \eqref{eq:AffineTransformationInterpolationProblem} to simplify the problem. It is possible to chose two affine transformations $\tau_{z_1}$ and $\check{\tau}_{z_1}$ in $\mathcal{T}$ such that, setting $\widetilde{f}=\check{\tau}_{z_1} \circ f\circ \tau_{z_1}^{-1}$, we have
\begin{equation*}
    \widetilde{f}(z)= \widetilde{C} + \widetilde{\alpha} z + \sum_{k=1}^{K+1}\frac{\widetilde{a}_k}{z-\widetilde{\eps}_k},
\end{equation*}
where the parameters $\widetilde{C}$, $\widetilde{\alpha}$,  $\widetilde{a}_k$, and $\widetilde{\eps}_k$ satisfy the same properties as their counterparts without the tildes, and such that 
\begin{equation}
\label{eq:AssumptionsOnF}
\tau_{z_1}(z_1)=i, \quad \Re(\widetilde{f}(\widetilde{z}_1))=0 \quad \text{ and }\quad \sum_{k=1}^{K+1}\frac{-\widetilde{a}_k}{1+\widetilde{\eps}_k^2}=1.
\end{equation} 
Using \eqref{eq:AffineTransformationInterpolationProblem}, we have
\begin{equation*}
\InterpolationProblem{\UpperHalfPlane}(z_n,w_n)= \check{\tau}_{z_1}^{-1}\circ \InterpolationProblem{\UpperHalfPlane}(\underbrace{\tau_{z_1}(z_n)}_{\widetilde{z}_n},\underbrace{(\check{\tau}_{z_1} \circ f\circ \tau_{z_1}^{-1})}_{\widetilde{f}}(\tau_{z_1}(z_n)))\circ \tau_{z_1}.
\end{equation*}

For the remaining of the proof, we will then focus on $\InterpolationProblem{\UpperHalfPlane}(\widetilde{z}_n,\widetilde{f}(\widetilde{z}_n))$. We will omit the tildes for the sake of simplicity, and assume that $z_1=i$ and that \eqref{eq:AssumptionsOnF} holds. With that being said, notice that $\Im(f(z_1))=\Im(f(i))=\alpha + \sum_{k=1}^{K+1} \frac{-a_k}{1+\eps_k^2} > 0$, hence $\CayleyTransform(f(z_1)) \in \UnitDisc$. Therefore, using Lemma~\ref{lemma:SchurIteration} equation~\eqref{eq:RelationInterpolationProblemsUpperHalfPlaneUnitDisc}, we have
\begin{equation*}
\InterpolationProblem{\UnitDisc}(\CayleyTransform(z_n),\CayleyTransform(f(z_n)))_I=b_{\CayleyTransform(f(z_1))}^{-1}\circ \left( b_{z_1} \cdot \InterpolationProblem{\UnitDisc}(\CayleyTransform(z_n),\CayleyTransform(f(z_n))^{(1)})_{I\setminus \{1 \}} \right).
\end{equation*}
and
\begin{equation*}
\InterpolationProblem{\UnitDisc}(\CayleyTransform(z_n),\CayleyTransform(f(z_n))^{(1)})_{I\setminus \{1 \}}= \CayleyTransform \circ \InterpolationProblem{\UpperHalfPlane}\left(z_n,\CayleyTransform^{-1}([\CayleyTransform(f(z_n))]^{(1)})\right)_{I\setminus \{1 \}} \circ \CayleyTransform^{-1}.
\end{equation*}

We now compute $g(z) = \CayleyTransform^{-1}\circ [\CayleyTransform\circ f]^{(1)}(z)$, for $z \in \UpperHalfPlane \setminus \{i\}$. Since $\CayleyTransform(z_1) = \CayleyTransform(i)=0$ and $z\neq i$, we have $b_{\CayleyTransform(z_1)}(\CayleyTransform(z))=\CayleyTransform(z)\neq 0$ and $g(z)$ is well defined. The computation reads
\begin{equation*}
    g(z) = i\frac{(f(z_1)+i)\frac{\overline{f(z_1)}-f(z)}{z+i}+\overline{(f(z_1)+i)}\frac{f(z)-f(z_1)}{z-i}}{(f(z_1)+i)\frac{\overline{f(z_1)}-f(z)}{z+i}-\overline{(f(z_1)+i)}\frac{f(z)-f(z_1)}{z-i}}.
\end{equation*}
We used the fact that since $\overline{\CayleyTransform(f(z_1))}\CayleyTransform(f(z)) \in \UnitDisc$, it is not equal to 1 and that~$\frac{1}{2i}(f(z)+i)\vert(f(z_1)+i)\vert^2 \neq 0$.
Now, realize that, since $z_1=i$, we have
\begin{equation*}
\frac{f(z)-f(z_1)}{z-i}= \alpha + \sum_{k=1}^{K+1}\frac{1}{\eps_k-i}\frac{a_k}{z-\eps_k},
\end{equation*}
and similarly
\begin{equation*}
\frac{\overline{f(z_1)}-f(z)}{z+i} = -\alpha - \sum_{k=1}^{K+1}\frac{1}{\eps_k+i}\frac{a_k}{z-\eps_k},
\end{equation*}
so that
\begin{equation}\label{eq:denominatorNumeratorOfg}
g(z)=-\frac{\alpha(\Im(f(z_1))+1) + \sum_{k=1}^{K+1}\Im\left(\frac{f(z_1)+i}{\eps_k+i}\right)\frac{a_k}{z-\eps_k}}{\alpha\Re(f(z_1)) + \sum_{k=1}^{K+1}\Re\left(\frac{f(z_1)+i}{\eps_k+i}\right)\frac{a_k}{z-\eps_k}},
\end{equation}
which holds for $z=i$ as well. As we set $\Re(f(z_1))=0$, we have
\begin{equation*}
\Re\left(\frac{f(z_1)+i}{\eps_k+i}\right)=\frac{\Im(f(z_1))+1}{1+\eps_k^2} > 0 \quad \text{ and } \quad \Im\left(\frac{f(z_1)+i}{\eps_k+i}\right)=\eps_k\frac{\Im(f(z_1))+1}{1+\eps_k^2}.
\end{equation*}

Multiplying the numerator and denominator in \eqref{eq:denominatorNumeratorOfg} by $\frac{1}{\Im(f(z_1))+1}\prod_{k=1}^{K+1}(z-\eps_k)$, we end up with $\displaystyle g(z)=\frac{P(z)}{Q(z)}$, with
\begin{equation*}
P(z):= \alpha\prod_{k=1}^{K+1}(z-\eps_k) + \sum_{k=1}^{K+1}\eps_k \frac{a_k}{1+\eps_k^2} \prod_{l=1,l\neq k}^{K+1}(z-\eps_l),
\end{equation*}
and
\begin{equation*}
Q(z):= \sum_{k=1}^{K+1}\frac{-a_k}{1+\eps_k^2} \prod_{l=1,l \neq k}^{K+1} (z-\eps_l).
\end{equation*}

We have $Q(\eps_k)=\frac{-a_k}{1+\eps_k^2}\prod_{l=1,l \neq k}^{K+1}(\eps_k-\eps_l) \neq 0$ by hypothesis on the $\eps_k$'s, so that 
\begin{equation*}
Q(z)=0 \iff \sum_{k=1}^{K+1}\frac{\frac{-a_k}{1+\eps_k^2}}{z-\eps_k}=0,
\end{equation*}
hence $Q$ admits exactly $K$ distinct roots $\eps'_k \in (\eps_k,\eps_{k+1}) \subset \RealNumbers$ by the intermediate value theorem (as in the proof of Lemma \ref{lem:IncreaseNumberOfPoles}), and it is unitary due to \eqref{eq:AssumptionsOnF}. The partial fraction decomposition of $g$ finally reads
\begin{equation*}
g(z)=\alpha' z + C'+ \sum_{k=1}^K\frac{a'_k}{z-\eps'_k},
\end{equation*}
where
$$
\begin{array}{rcl}
\alpha'&=&\alpha, \\
C'&=&\displaystyle \underset{ y \to \infty}{\lim} g(iy)-i\alpha'y=(1-\alpha)\sum_{k=1}^{L+1}\eps_k\frac{a_k}{1+\eps_k^2} \in \RealNumbers, \\
a'_k&=& \underset{y \to 0^+}{\lim} iyg(\eps'_k+iy) \\
& =& \left(\prod_{l=1,l\neq k}^K \underbrace{\left(\frac{\eps'_k-\eps_l}{\eps'_k-\eps'_l}\right)}_{>0} \right)\underbrace{(\eps'_k-\eps_k)(\eps'_k-\eps_{L+1})}_{<0}\underbrace{(\alpha + 1)}_{> 0} < 0.
\end{array}
$$

This computation shows that the Nevanlinna-Riesz measure  of $g$ is a sum of $K$ Dirac measures as described in the statement of the theorem. It is a solution to the \acrlong{acp} $$\InterpolationProblem{\UpperHalfPlane}\left(z_n,\CayleyTransform^{-1}([\CayleyTransform(f(z_n))]^{(1)}])\right)_{I\setminus \{1 \}},$$ where $\Cardinal{I\setminus \{1\}} \geq  K+2$ by assumption. By the induction hypothesis, $g$ is the only solution to this \acrshort{acp}, and therefore $f$ is the only solution to the \acrshort{acp} $\InterpolationProblem{\UpperHalfPlane}(w_n,z_n)_I$.
\end{proof}

\section{Paramagnetic \acrshort{ipt}-\acrshort{dmft} equations}
\label{appendix:SpinIndependenceIPTDMFT}
In this appendix, we detail the spin independence of the paramagnetic \acrshort{ipt}-\acrshort{dmft} equations.

As detailed in Remark \ref{rmk:AIMCommutationRelations}, the Hamiltonian $\AIHamiltonian$ commutes with the total spin operator $\AISpinOperator$. More precisely, $\OneParticleSpace[AI]$ is $\AISpinOperator$-invariant, and in the decomposition 
\begin{align}
\OneParticleSpace[AI]=\OneParticleSpace[\uparrow]\oplus\OneParticleSpace[\downarrow], \quad \OneParticleSpace[\sigma]=\Span \left( \VacuumState \otimes \cdots \otimes \VacuumState \otimes \underset{m\text{-th}}{\vert \sigma \rangle} \otimes \VacuumState \otimes \cdots \otimes \VacuumState , m \in \HubbardVertices\sqcup \IntSubSet{1}{\BathDimension}\right)
\end{align}
the total spin operators reads $\AISpinOperator=\Identity_{\uparrow}\oplus(-\Identity_{\downarrow})$:
\begin{equation*}
\AISpinOperator=\left( \begin{matrix}
\Identity & 0 \\
0&-\Identity
\end{matrix} \right).
\end{equation*}
Writing in this decomposition $\NIHamiltonian=\NIHamiltonian_{\uparrow}\oplus\NIHamiltonian_{\downarrow}$, we then have
\begin{equation*}
\GreensFunction^0(z)=(z-\NIHamiltonian_\uparrow)^{-1}\oplus(z-\NIHamiltonian_\downarrow)^{-1}.
\end{equation*}
Since no magnetic field is included in the model, $\NIHamiltonian_\uparrow$ and $\NIHamiltonian_\downarrow$ act the same way on their respective domains: denoting by $\FlipOperator \in \LinearOperator(\OneParticleSpace[AI])$ the \emph{spin flip} isomorphism defined by linearity with $\forall m \in \HubbardVertices \sqcup \IntSubSet{1}{\BathDimension},\forall \sigma \in \{\uparrow,\downarrow\}$
\begin{equation*}
\FlipOperator\left(\VacuumState \otimes \cdots \otimes \VacuumState \otimes \underset{m\text{-th}}{\vert \sigma \rangle} \otimes \VacuumState \otimes \cdots \otimes \VacuumState \right)= \VacuumState \otimes \cdots \otimes \VacuumState \otimes \underset{m\text{-th}}{\vert \bar{\sigma} \rangle} \otimes \VacuumState \otimes \cdots \otimes \VacuumState,
\end{equation*}
we have 
\begin{equation*}
\NIHamiltonian_{\uparrow}=\FlipOperator_{\uparrow,\downarrow} \NIHamiltonian_{\downarrow} \FlipOperator_{\downarrow,\uparrow}.
\end{equation*}
Now in the impurity-spin decomposition 
\begin{align}
\OneParticleSpace&=\OneParticleSpace[\uparrow,\mathrm{imp}]\oplus\OneParticleSpace[\uparrow,\mathrm{bath}]\oplus\OneParticleSpace[\downarrow,\mathrm{imp}]\oplus\OneParticleSpace[\downarrow,\mathrm{bath}] \\
\OneParticleSpace[\sigma,imp]&= \Span \left( \VacuumState \otimes \cdots \otimes \VacuumState \otimes \underset{i\text{-th}}{\vert \sigma \rangle} \otimes \VacuumState \otimes \cdots \otimes \VacuumState , i \in \HubbardVertices \right) \\
\OneParticleSpace[\sigma,bath]&= \Span \left( \VacuumState \otimes \cdots \otimes \VacuumState \otimes \underset{k\text{-th}}{\vert \sigma \rangle} \otimes \VacuumState \otimes \cdots \otimes \VacuumState , k \in \IntSubSet{1}{\BathDimension} \right)
\end{align}
we have $\AISpinOperator=\Identity\oplus \Identity \oplus (-\Identity)\oplus (-\Identity)$ and $\ImpNumberOperator=\Identity\oplus 0 \oplus \Identity \oplus 0$ :
\begin{equation*}
\AISpinOperator=\left( \begin{matrix}
\Identity &0&0&0 \\
0&\Identity &0&0 \\
0&0&-\Identity &0 \\
0&0&0&-\Identity
\end{matrix} \right)
,\quad \ImpNumberOperator=\left( \begin{matrix}
\Identity &0&0&0\\
0&0&0&0 \\
0&0&\Identity&0\\
0&0&0&0
\end{matrix}\right)
\end{equation*}
and we have for the impurity orthogonal restriction of the non-interacting Green's function $\GreensFunction_\mathrm{imp}^0$:
\begin{equation*}
\GreensFunction_\mathrm{imp}^0(z)=\left(z-\NIHamiltonian_\mathrm{\uparrow,\mathrm{imp}}-\Hybridization_\uparrow(z)\right) \oplus \left(z-\NIHamiltonian_\mathrm{\downarrow,\mathrm{imp}}-\Hybridization_\downarrow(z)\right)
\end{equation*}
Indeed, as it is the case for the non-interacting Hamiltonian and the non-interacting Green's function, $\Hybridization_{\uparrow}$ acts similarly as $\Hybridization_{\downarrow}$ on their respective domain, and are related by conjugation with the spin flip operator
\begin{equation*}
\Hybridization_{\uparrow}=\FlipOperator_{\uparrow,\downarrow}\Hybridization_{\downarrow}\FlipOperator_{\downarrow,\uparrow}
\end{equation*}

With that being said, in the \acrshort{ipt} approximation outlined in this document, we have for $ n \in \RelativeIntegers$,
\begin{equation*}
\MatsubaraSelfEnergyFourierSeries_{\mathrm{imp},n}=\MatsubaraSelfEnergyFourierSeries_{\uparrow,\mathrm{imp},n} \oplus \MatsubaraSelfEnergyFourierSeries_{\downarrow,\mathrm{imp},n},
\end{equation*}
with for all $\sigma \in \{\uparrow,\downarrow \},$ 
\begin{equation*}
\MatsubaraSelfEnergyFourierSeries_{\sigma,\mathrm{imp},n}=\OnSiteRepulsion[]^2\int_{0}^\StatisticalTemperature e^{i\MatsubaraFrequency_n \tau} \left(\frac{1}{\StatisticalTemperature}\sum_{n' \in \RelativeIntegers}\left(i\MatsubaraFrequency_{n'} - \Hybridization_{\sigma}(i\MatsubaraFrequency_{n'})^{-1} \right)\right)^3 d \tau .
\end{equation*}
And indeed, we have the conjugation relation 
\begin{equation*}
\MatsubaraSelfEnergyFourierSeries_{\uparrow,\mathrm{imp},n} =\FlipOperator_{\uparrow,\downarrow}\MatsubaraSelfEnergyFourierSeries_{\downarrow,\mathrm{imp},n}\FlipOperator_{\downarrow,\uparrow}.
\end{equation*}

These last equations show that the impurity-spin orthogonal restrictions of the Matsubara self-energy Fourier coefficients are copies one of another. Since we assume that the partition $\DMFTPartition$ is in singletons, it follows that $\Cardinal{\HubbardVertices}=1$ and $\dim(\OneParticleSpace[\sigma,\mathrm{imp}])=1$, so that using the bases $\OneParticleBasis[\sigma,\mathrm{imp}]$ given by
\begin{equation*}
\forall \sigma \in \{\uparrow,\downarrow\}, \quad \OneParticleBasis[\sigma,\mathrm{imp}]=\{ \vert \sigma \rangle \otimes \VacuumState \otimes \cdots \otimes \VacuumState \},
\end{equation*}
we have for all $n \in \RelativeIntegers$,
\begin{equation*}
 \quad \Matrix_{\OneParticleBasis[\uparrow,\mathrm{imp}]}(\MatsubaraSelfEnergyFourierSeries_{\uparrow,\mathrm{imp},n})=\Matrix_{\OneParticleBasis[\downarrow,\mathrm{imp}]}(\MatsubaraSelfEnergyFourierSeries_{\downarrow,\mathrm{imp},n}) := \SelfEnergy_n.
\end{equation*}
Finally, we have for $ z \in \UpperHalfPlane$,
\begin{equation*}
 \Matrix_{\OneParticleBasis[\uparrow,\mathrm{imp}]}(\Hybridization_{\uparrow}(z))=\Matrix_{\OneParticleBasis[\downarrow,\mathrm{imp}]}(\Hybridization_\downarrow(z))=\sum_{k=1}^\BathDimension \frac{\BathCoupling[k] \BathCoupling[k]^\dagger}{z-\BathEnergy[k]},
\end{equation*}
so that we indifferently refer to $\Hybridization(z)$ by abuse of notation.

\clearpage

\bibliography{citations}
\end{document}